%%%%%%%%%%%%%%%%%%%%%%%%%%%%%%%%%%%%%%%%%%%%%%%%%%%%%%%%%%%%%%%%%%%
% Do not change anything in the preamble of the macro
%%%%%%%%%%%%%%%%%%%%%%%%%%%%%%%%%%%%%%%%%%%%%%%%%%%%%%%%%%%%%%%%%%%%
\documentclass[onecolumn,nofootinbib,thightenlines,notitlepage,tightenlines,longbibliography,superscriptaddress,11pt]{revtex4-1} % the option longbibliography implies that the names of the paper are shown 
\newcommand{\pagenumbaa}{1}
\usepackage{graphicx}
\usepackage{amssymb}
\usepackage{amstext}
%\usepackage{algpseudocode}
%\usepackage{amscd}
%\newcommand{\figurewidth}{\columnwidth}
%%%%%%%%%%%%%%%%%%%%%%%%%%%%%%%%%%%%%%%%%%%%%%%%%%%%%%%%%%%%%%%%%%%%
%\usepackage{tablefootnote}

\usepackage{enumerate}

%for figures
\usepackage{tikz}
\usetikzlibrary{chains}
\usetikzlibrary{fit}
\usepackage{pgflibraryarrows}		%optional
\usepackage{pgflibrarysnakes}		%optional
\usepackage{xcolor}
\usepackage{epsfig}
\usetikzlibrary{shapes.symbols,patterns} % for source symbols
\usepackage{pgfplots}

% math
\usepackage{amsthm}
\usepackage{amsmath}
\usepackage{amsfonts}
\usepackage{amssymb,amstext}
\usepackage{bbm} % for bbm numbersproof
\usepackage{dsfont}

\usepackage[colorlinks=true,urlcolor=blue, hyperindex, breaklinks=true] {hyperref}
\usepackage{todonotes}

\usepackage{footnote}

\theoremstyle{plain}
%% my stuff
\newtheorem{mythm}{Theorem}[section]
\newtheorem{myprop}[mythm]{Proposition}

\newtheorem{mylem}[mythm]{Lemma}
\newtheorem{myclaim}[mythm]{Claim}

\theoremstyle{definition}
\newtheorem{mydef}[mythm]{Definition}
\newtheorem{myex}[mythm]{Example}
\newtheorem{myremark}[mythm]{Remark}
\newtheorem{myass}[mythm]{Assumption}

\newcommand{\norm}[1]{\left\lVert#1\right\rVert}

\newcommand{\brackett}[1]{ \left \lbrace \text{#1}\right \rbrace }
\newcommand{\bracket}[1]{[#1]}
 %cond minEntropy

\def\Hb{\ensuremath{H_{\rm b}}}
\newcommand{\W}{\mathsf{W}} %for channel law

\def\setC{\mathsf{c}}

\newcommand{\introsection}[1]{\emph{#1}.---}
% entropies and mutual informations
\newcommand{\Hh}[1]{H\!\left({#1}\right)} %Entropy
\newcommand{\Hdiff}[1]{h\!\left({#1}\right)} %differential Entropy
 %cond Entropy
 %exchange Entropy
 %coherent Information
\newcommand{\I}[2]{I\!\left({#1},{#2}\right)} %mutual Information
\newcommand{\D}[2]{D\!\left({#1}\right| \!\!\left|{#2}\right)} %relative entropy
 %mutual Information

 %mutual Information

%\newcommand{\Prob}[1]{\,{\sf Pr} \!\left[#1\right]} %
%\newcommand{\E}[1]{\,{\sf E}\!\left[#1\right]} %Expectation
\newcommand{\Prob}[1]{\,{\mathds P} \!\left[#1\right]} %
\newcommand{\E}[1]{\,{\mathds E}\!\left[#1\right]} %Expectation

  %Bernoulli dist.

 %Trace
 %fidelity

\DeclareMathOperator{\st}{s.t.}

% todonotes

% comments Tobias:
%*********************************************************************************************************************************
%*********************************************************************************************************************************

\newcommand{\drv}{\ensuremath{\mathrm{d}}}

\newcommand{\inprod}[2]{\ensuremath{\left\langle{#1}\vphantom{\big|},\vphantom{\big|}{#2}\right\rangle}}
\newcommand{\transp}{\ensuremath{^{\scriptscriptstyle{\top}}}}
\newcommand{\WW}{\mathcal{W}}
\newcommand{\A}{\mathbb{A}}
\newcommand{\B}{\mathbb{B}}

\newcommand{\Borelsigalg}[1]{\ensuremath{\mathcal{B}\!\left(#1\right)}}

\newcommand{\R}{\ensuremath{\mathbb{R}}}
\newcommand{\Rp}{\ensuremath{\R_{\geq 0}}}
\newcommand{\Rsp}{\ensuremath{\R_{> 0}}}

\newcommand{\Lp}[1]{\mathrm{L}^{#1}}

 %Lebesgue measure

%\renewcommand{\theenumi}{\roman{enumi}}

\allowdisplaybreaks

%*********************************************************************************************************************************
%*********************************************************************************************************************************

%Markov

\usepackage{makecell}

\long\def\symbolfootnote[#1]#2{\begingroup\def\thefootnote{\fnsymbol{footnote}}\footnote[#1]{#2}\endgroup}

\usepackage{dsfont,amssymb,amsmath,subfigure, graphicx} %,etaremune,savesym}
\usepackage{amsfonts,dsfont,mathtools, mathrsfs,amsthm} 

%====================================================================

%\theoremstyle{definition}

%\theoremstyle{remark}

%====================================================================
\newcommand{\X}{\mathcal{X}}
\newcommand{\Y}{\mathcal{Y}}
\newcommand{\N}{\mathbb{N}_0}

\newcommand{\Let}{\coloneqq}
\newcommand{\e}{\mathrm{e}}
\newcommand{\meas}{\mathcal{P}}
\newcommand{\diff}{\mathrm{d}}

%====================================================================

\begin{document}

\symbolfootnote[0]{
The material in this paper was presented in part at the IEEE International Symposium on Information Theory, June 2014. $\vspace{1mm}$}
%%%%%%%%%%%%%%%%%%%%%%%%%%%%%%%%%%%%%%%%%%%%%%%%%%%%%%%%%%%%%%%%%%%%
% Place your title here
%%%%%%%%%%%%%%%%%%%%%%%%%%%%%%%%%%%%%%%%%%%%%%%%%%%%%%%%%%%%%%%%%%%%
\title{Efficient Approximation of Channel Capacities}

%%%%%%%%%%%%%%%%%%%%%%%%%%%%%%%%%%%%%%%%%%%%%%%%%%%%%%%%%%%%%%%%%%%%
% If using RevTex
%%%%%%%%%%%%%%%%%%%%%%%%%%%%%%%%%%%%%%%%%%%%%%%%%%%%%%%%%%%%%%%%%%%%
 \author{Tobias Sutter}
 \email[]{$\brackett{sutter,\,mohajerin,\,lygeros}$@control.ee.ethz.ch}
 \affiliation{Automatic Control Laboratory, ETH Zurich, Switzerland}
 
 \author{David Sutter}
 \email[]{suttedav@phys.ethz.ch}
 \affiliation{Institute for Theoretical Physics, ETH Zurich, Switzerland}

 \author{Peyman Mohajerin Esfahani}
 \email[]{$\brackett{sutter,\,mohajerin,\,lygeros}$@control.ee.ethz.ch}
 \affiliation{Automatic Control Laboratory, ETH Zurich, Switzerland}

 \author{John Lygeros}
 \email[]{$\brackett{sutter,\,mohajerin,\,lygeros}$@control.ee.ethz.ch}
 \affiliation{Automatic Control Laboratory, ETH Zurich, Switzerland}

%%%%%%%%%%%%%%%%%%%%%%%%%%%%%%%%%%%%%%%%%%%%%%%%%%%%%%%%%%%%%%%%%%%%
% Abstract
%%%%%%%%%%%%%%%%%%%%%%%%%%%%%%%%%%%%%%%%%%%%%%%%%%%%%%%%%%%%%%%%%%%%

\begin{abstract}
We propose an iterative method for approximately computing the capacity of discrete memoryless channels, possibly under additional constraints on the input distribution. Based on duality of convex programming, we derive explicit upper and lower bounds for the capacity. 
The presented method requires $O(M^2 N \sqrt{\log N}/\varepsilon)$ to provide an estimate of the capacity to within $\varepsilon$, where $N$ and $M$ denote the input and output alphabet size; a single iteration has a complexity $O(M N)$. We also show how to approximately compute the capacity of memoryless channels having a bounded continuous input alphabet and a countable output alphabet under some mild assumptions on the decay rate of the channel's tail.
It is shown that discrete-time Poisson channels fall into this problem class. As an example, we compute sharp upper and lower bounds for the capacity of a discrete-time Poisson channel with a peak-power input constraint. 
\end{abstract}

 \maketitle

%Only for RevTEx
 \setcounter{page}{\pagenumbaa}  
 \thispagestyle{plain}

\section{Introduction}
A discrete memoryless channel (DMC) comprises a finite input alphabet $\mathcal{X} = \{1,2,\hdots,N\}$, a finite output alphabet $\mathcal{Y} = \{1,2,\hdots,M\}$, and a conditional probability mass function expressing the probability of observing the output symbol $y$ given the input symbol $x$, denoted by $W(y|x)$. In his seminal 1948 paper \cite{shannon48}, Shannon proved that the channel capacity for a DMC is
\begin{equation}
C(W)=\max \limits_{p \in \Delta_N} \I{p}{W},  \label{eq:shannon48}
\end{equation} 
where $\Delta_{N}\!\!:=\!\{  x\in\R^{N} \!: \ \! x\geq 0,  \sum_{i=1}^{N} x_{i}=1\}$\! denotes the $N$-simplex and $\I{p}{W}\!\!:=\!\sum_{x \in \mathcal{X}}  p(x)$ $ \D{W(\cdot|x)}{(pW)(\cdot)}$ the mutual information. $W(y|x)=\Prob{Y=y|X=x}$ describes the channel law and $(pW)(\cdot)$ is the probability distribution of the channel output induced by $p$ and $W$, i.e., $(pW)(y):=\sum_{x \in \mathcal{X}} p(x) W(y|x)$. $\D{\cdot}{\cdot}$ denotes the relative entropy that is defined as $\D{W(\cdot|x)}{(pW)(\cdot)}:=\sum_{y \in \mathcal{Y}} W(y|x) \log\left(\tfrac{W(y|x)}{(pW)(y)}\right)$.
Shannon also showed that in case of an additional average cost constraint on the input distribution of the form $\E{s(X)}\leq S$, where $s:\mathcal{X}\to\Rp$ denotes a cost function and $S\geq 0$, the capacity is given by
\begin{equation} \label{eq:DMC_capacity_const}
C_S(W)= \left\{
\begin{array}{lll}
			&\max\limits_{p} 		& \I{p}{W} \\
			&\st					& \E{s(X)}\leq S\\
			& 						& p\in \Delta_{N}.
	\end{array} \right.
\end{equation}
For a few DMCs it is known that the capacity can be computed analytically, however in general there is no closed-form solution. It is therefore of interest to have an algorithm that solves \eqref{eq:DMC_capacity_const} in a reasonable amount of time. Since for a fixed channel the mutual information is a concave function in $p$, the optimization problem \eqref{eq:DMC_capacity_const} is a finite dimensional convex optimization problem. Solving \eqref{eq:DMC_capacity_const} with convex programming solvers, however, turned out to be computationally inefficient even for small alphabet sizes \cite{blahut72}. 

Shannon's formula for the capacity of a DMC generalizes to the case of memoryless channels with continuous input and output alphabets, i.e. $\mathcal{X}=\mathcal{Y}=\R$.  However, when considering such channels, it is essential to introduce additional constraints on the channel input to obtain physically meaningful results, more details can be found in \cite[Chapter~7]{gallager68}.
In addition to average cost type constraints, peak-power constraints are also often considered. A peak-power constraint demands that $X\in \A$ for some compact set $\A\subset\mathcal{X}$ with probability one. For such a setup, i.e., having average and peak-power constraints, the capacity is given by
\begin{equation} \label{eq:cont_DMC_capacity_const}
C_{\A,S}(W)= \left\{
\begin{array}{lll}
			&\sup\limits_{p} 		& \I{p}{W} \\
			&\st					& \E{s(X)}\leq S\\
			& 						& p\in \mathcal{P}(\A),
	\end{array} \right.
\end{equation}
where $\mathcal{P}(\A)$ denotes the set of all probability distributions on the Borel $\sigma$-algebra $\Borelsigalg{\A}$ and the mutual information is defined as $\I{p}{W}:=\int_{\A} \D{W(\cdot|x)}{(pW)(\cdot)} p(\drv x)$. The channel is described by a transition density defined by $\Prob{Y\in\drv y|X=x}=W(y|x)\drv y$ and $(pW)(\cdot)$ is the probability distribution of the channel output induced by $p$ and $W$ which is given by $(pW)(y):=\int_{\A}W(y|x)p(\drv x)$ and the relative entropy that is defined as $\D{W(\cdot|x)}{(pW)(\cdot)}:=\int_{\mathcal{Y}} W(y|x) \log\left(\tfrac{W(y|x)}{(pW)(y)}\right)\drv y$.
The optimization problem \eqref{eq:cont_DMC_capacity_const} is an infinite dimensional convex optimization problem and as such in general computationally intractable (NP-hard).

%%%%%%%%%
\vspace{3mm}
\introsection{Previous Work and Contributions}
Historically one of the first attempts to numerically solve \eqref{eq:DMC_capacity_const} is the so-called \emph{Blahut-Arimoto algorithm} \cite{blahut72,arimoto72}, that exploits the special structure of the mutual information and approximates iteratively the capacity of any DMC. Each iteration step has a computational complexity $O(MN)$. It was shown that this algorithm, in case of no additional input constraints has an \textit{a priori} error bound of the form $|C(W)-C_{\textnormal{approx}}^{(n)}(W)|\leq O(\tfrac{\log(N)}{n})$, where $n$ denotes the number of iterations \cite[Corollary~1]{arimoto72}. Hence, the overall computational complexity of finding an additive $\varepsilon$-solution is given by $O(\tfrac{MN\log(N)}{\varepsilon})$. As such the computational cost required for an acceptable accuracy for channels with large input alphabets can be considerable. 
This undesirable property together with the complexity per iteration prevents the algorithm from being useful for a large class of channels, e.g., a Rayleigh channel with a discrete input alphabet \cite{shamai01}.
There have been several improvements of the Blahut-Arimoto algorithm \cite{sayir00,matz04,yaming10}, which achieve a better convergence for certain channels. However, since they all rely on the original Blahut-Arimoto algorithm they inherit its overall computational complexity as well as its complexity per iteration step. Therefore, even with improved Blahut-Arimoto algorithms, approximating the capacity for channels having large input alphabets remains computationally expensive. 
Based on sequential Monte-Carlo integration methods (a.k.a.\ particle filters), the Blahut-Arimoto algorithm has been extended to memoryless channels with continuous input and output alphabets \cite{dauwels05,ref:Chen-13,ref:Chen-14-1, ref:Chen-14-2}. As shown in several examples, this approach seems to be powerful in practice, however a rate of convergence has not been proven.

Another recent approach towards approximating \eqref{eq:DMC_capacity_const} is presented in \cite{chiang04} by Mung and Boyd, where 
they introduce an efficient method to derive upper bounds on the channel capacity problem, based on geometric programming.
Huang and Meyn \cite{meyn05} developed a different approach based on cutting plane methods, where the mutual information is iteratively approximated by linear functionals and in each iteration step, a finite dimensional linear program is solved. It has been shown that this method converges to the optimal value, however no rate of convergence is provided. 

In this article, we present a new approach to solve \eqref{eq:DMC_capacity_const} that is based on its dual formulation. It turns out that the dual problem of \eqref{eq:DMC_capacity_const} has a particular structure that allows us to apply Nesterov's smoothing method \cite{nesterov05}. In the absence of input cost constraints, this leads to an a priori error bound of the order $|C(W)-C_{\textnormal{approx}}^{(n)}(W)|\leq O(\tfrac{M \sqrt{\log(N)}}{n})$, where $n$ denotes the number of iterations and each iteration step has a  computational complexity of $O\!\left(NM \right)$.
Thus, the overall computational complexity of finding an $\varepsilon$-solution is given by $O(\tfrac{M^{2} N \sqrt{\log(N)}}{\varepsilon})$.
In particular for large input alphabets our method has a computational advantage over the Blahut-Arimoto algorithm. In addition the novel method provides primal and dual optimizers leading to an \emph{a posteriori} error which is often much smaller than the a priori error. 

Due to the favorable structure of the capacity problem and its dual formulation, the presented method can be extended to approximate the capacity of memoryless channels having a bounded continuous input alphabet and a countable output alphabet, under some assumptions on the tail of $W(\cdot|x)$, i.e., problem \eqref{eq:cont_DMC_capacity_const} is addressed for a countable output alphabet. As a concrete example, this is demonstrated on the discrete-time Poisson channel with a peak-power constraint. To the best of our knowledge, for this scenario up to now only lower bounds exist \cite{lapidoth09}.

 \vspace{3mm}
\introsection{Structure} Section~\ref{sec:classicalCapacity} introduces our method for approximating the channel capacity for DMCs. We provide a priori and a posteriori bounds for the approximation error and present two numerical examples that illustrate its computational performance compared to the Blahut-Arimoto algorithm. In Section~\ref{sec:cont:Channels}, we generalize the approximation scheme to channels having bounded continuous input alphabets and countable output alphabets. We then show how the presented results can be used to compute the capacity of discrete-time Poisson channels under a peak-power constraint and possibly average-power constraints on the input. We conclude in Section~\ref{sec:conclusion} with a summary and potential subjects of further research. In the interest of readability, some of the technical proofs and details are given in the appendices.

\vspace{3mm}
\introsection{Notation}
The logarithm with basis 2 is denoted by $\log(\cdot)$ and the natural logarithm by $\ln(\cdot)$. In Section~\ref{sec:classicalCapacity} we consider DMCs with a finite input alphabet $\mathcal{X}=\{ 1,2,\hdots,N \}$ and a finite output alphabet $\mathcal{Y}=\{ 1,2,\hdots,M \}$. The channel law is summarized in a matrix $\W\in\R^{N\times M}$, where $\W_{ij}:=\Prob{Y=j|X=i}=W(j|i)$. We define the standard $n-$simplex as $\Delta_{d}:=\left\{  x\in\R^{d} : x\geq 0, \sum_{i=1}^{d} x_{i}=1\right\}$. The input and output probability mass functions are denoted by the vectors $p\in \Delta_{N}$ and $q\in\Delta_{M}$. The input cost constraint can be written as $\E{s(X)} = p\transp s\leq S$, where $s\in \Rp^{N}$ denotes the cost vector and $S\in \Rp$ is the given total cost. The binary entropy function is denoted by $\Hb(\alpha):=-\alpha \log(\alpha)-(1-\alpha)\log(1-\alpha)$, for $\alpha\in [0,1]$. For a probability mass function $p \in \Delta_{N}$ we denote the entropy by $H(p):=\sum_{i=1}^N -p_i \log(p_i)$. It is convenient to introduce an additional variable for the conditional entropy of $Y$ given $\{X=i\}$ as $r\in\R^{N}$, where $r_{i}=-\sum_{j=1}^{M}\W_{ij}\log(\W_{ij})$. For a probability density $p$ supported at a measurable set $B\subset \R$ we denote the differential entropy by $h(p)=-\int_{B} p(x) \log(p(x)) \drv x$.  For two vectors $x,y \in \R^n$, we denote the canonical inner product by $\left \langle x,y \right \rangle := x \transp y$. We denote the maximum (resp.~minimum) between $a$ and $b$ by $a \vee b$ (resp.~$a\wedge b$). For $\A\subset\R$ and $1\leq p \leq \infty$, let $\Lp{p}(\A)$ denote the space of $\Lp{p}$-functions on the measure space $(\A, \Borelsigalg{\A}\!, \drv x)$, where $\Borelsigalg{\A}$ denotes the Borel $\sigma$-algebra and $\drv x$ the Lebesgue measure.
The capacity of a channel $W$ is denoted by $C(W)$. 
For the channel law matrix  $\W\in\R^{N\times M}$ we consider the norm
$\|\W\|:= \max\limits_{\lambda\in\R^{M}, \ p\in\R^{N}}\left\{ \inprod{\W\lambda}{p} \ : \ \|\lambda\|_{2}=1, \ \|p\|_{1}=1 \right\},$
and note that an upper bound is given by 
\begin{equation} \label{eq:operator:norm}
\|\W\| 			=		\max\limits_{\|p\|_{1}=1} \max\limits_{\|\lambda\|_{2}=1} \lambda\transp \W\transp p \leq 	\max\limits_{\|p\|_{1}=1} \| \W\transp p \|_{2}\leq 	\max\limits_{\|p\|_{1}=1} \| \W\transp p \|_{1} =	\max\limits_{\|p\|_{1}=1} \| p\|_{1}=	1.
\end{equation}

%%%%%%%%%%%%%%%%%%%%%%%%%%%%%%%%%%%%%%%%%%%%%%%%%%%%%%%%%%%%%%%%%%%%
\section{Discrete Memoryless Channel} \label{sec:classicalCapacity}
%%%%%%%%%%%%%%%%%%%%%%%%%%%%%%%%%%%%%%%%%%%%%%%%%%%%%%%%%%%%%%%%%%%%
To keep notation simple we consider a single average-input cost constraint as the extension to multiple average-input cost constraints is straightforward. In a first step, we introduce the output distribution $q\in\Delta_{M}$ as an additional decision variable, as done in \cite{benTal88,chiang04,chiang05} and note that the mutual information $I(X;Y)$ is equal to $H(Y)-H(Y|X)$.
\begin{mylem} \label{lem:equivalent:primal:problem}
Let $\mathcal{F}:=\arg\max\limits_{p\in\Delta_{N}} \I{p}{W}$ and $S_{\max}:=\min\limits_{p\in\mathcal{F}}s\transp p$. If $S\geq S_{\max}$ the optimization problem \eqref{eq:DMC_capacity_const} has the same optimal value as
\begin{equation} \label{opt:primal:equivalent:no:power:constraints}
 	\mathsf{P}: \quad \left\{ \begin{array}{lll}
			&\max\limits_{p,q} 		&- r\transp p + H(q) \\
			&\st					& \W\transp p = q\\
			& 					& p\in \Delta_{N}, \ q\in\Delta_{M}.
	\end{array} \right.
\end{equation}
If $S<S_{\max}$ the optimization problem \eqref{eq:DMC_capacity_const} has the same optimal value as
\begin{equation} \label{opt:primal:equivalent}
 	\mathsf{P}: \quad \left\{ \begin{array}{lll}
			&\max\limits_{p,q} 		&- r\transp p + H(q) \\
			&\st					& \W\transp p = q\\
			&					& s\transp p = S \\
			& 					& p\in \Delta_{N}, \ q\in\Delta_{M}.
	\end{array} \right.
\end{equation}
\end{mylem}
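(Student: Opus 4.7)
The plan is to break the claim into two independent pieces: (i) rewriting the mutual-information objective using the output distribution as an auxiliary variable, and (ii) showing that the inequality cost constraint $s\transp p \leq S$ can either be dropped or tightened to equality depending on how $S$ compares to $S_{\max}$.

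For (i), I would recall the identity $\I{p}{W} = H(Y) - H(Y|X)$. Since the marginal output mass function is $q = \W\transp p$ and the conditional entropy satisfies $H(Y|X) = \sum_{i} p_i\bigl(-\sum_j \W_{ij}\log \W_{ij}\bigr) = r\transp p$, the mutual information equals $-r\transp p + H(q)$ whenever the constraint $q=\W\transp p$ is imposed. Note also that $q\in\Delta_M$ is automatically implied by $p\in\Delta_N$ and $W$ being stochastic, so inserting the explicit constraint $q\in\Delta_M$ does not shrink the feasible set. Consequently, in both cases the objective and the $(p,q)$-constraints of $\mathsf{P}$ are equivalent to those of \eqref{eq:DMC_capacity_const}; the only nontrivial issue is the cost constraint.

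For (ii), consider first the case $S\geq S_{\max}$. By the definition of $S_{\max}$, there exists $p^\star\in\mathcal{F}$ with $s\transp p^\star = S_{\max}\leq S$, so $p^\star$ is feasible for \eqref{eq:DMC_capacity_const} and attains the unconstrained capacity $C(W)$. Hence dropping the cost constraint entirely does not change the optimal value, yielding \eqref{opt:primal:equivalent:no:power:constraints}. In the case $S< S_{\max}$, the inclusion $\{p:s\transp p = S\}\subset\{p:s\transp p \leq S\}$ gives one inequality between the optimal values of \eqref{eq:DMC_capacity_const} and \eqref{opt:primal:equivalent} for free. For the reverse direction, take any feasible $p$ for \eqref{eq:DMC_capacity_const} and any $p''\in\mathcal{F}$; since $s\transp p\leq S < S_{\max}\leq s\transp p''$, the intermediate value theorem on $\lambda\mapsto s\transp\bigl((1-\lambda)p+\lambda p''\bigr)$ produces a $\lambda_0\in(0,1]$ with $p_{\lambda_0} := (1-\lambda_0)p + \lambda_0 p''$ satisfying $s\transp p_{\lambda_0} = S$. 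By concavity of $\I{\cdot}{W}$ and the fact that $\I{p''}{W} = C(W) \geq \I{p}{W}$, one gets $\I{p_{\lambda_0}}{W}\geq (1-\lambda_0)\I{p}{W} + \lambda_0\I{p''}{W}\geq \I{p}{W}$, so the equality-constrained problem matches the value of \eqref{eq:DMC_capacity_const}.

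The only subtle step is the $S<S_{\max}$ case, where one must justify tightening $\leq$ to $=$; the convex combination argument above, exploiting both concavity of the mutual information and the definition of $S_{\max}$ as the minimum cost over unconstrained maximizers, is the key tool. Everything else is a direct algebraic rewriting.
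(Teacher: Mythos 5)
Your proof is correct, and for the key case $S<S_{\max}$ it takes a genuinely different route from the paper. The paper works with the value function $\mathsf{C}(S):=C_S(W)$: it shows $\mathsf{C}$ is concave and non-decreasing on $[0,S_{\max}]$, proves $\mathsf{C}(S_{\max}-\varepsilon)<\mathsf{C}(S_{\max})$ for every $\varepsilon>0$ by contradiction with the definition of $S_{\max}$, deduces that $\mathsf{C}$ is strictly increasing on that interval, and concludes that the cost constraint must be active at any optimum, so that the inequality can be replaced by an equality. You instead argue pointwise: given any feasible $p$ for \eqref{eq:DMC_capacity_const} and any $p''\in\mathcal{F}$ (which necessarily satisfies $s\transp p''\geq S_{\max}>S$), you slide along the segment from $p$ to $p''$ until the cost is exactly $S$, and concavity of $\I{\cdot}{W}$ together with $\I{p''}{W}=C(W)\geq\I{p}{W}$ shows the objective does not decrease; combined with the trivial inclusion of feasible sets this gives equality of the optimal values. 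Both arguments rest on concavity of the mutual information in $p$ and on the definition of $S_{\max}$; yours is shorter and avoids the value-function machinery, while the paper's route yields the structural byproduct that $S\mapsto C_S(W)$ is concave and strictly increasing on $[0,S_{\max}]$. Part (i) of your argument (introducing $q$ and rewriting $\I{p}{W}=-r\transp p+H(q)$ under the constraint $q=\W\transp p$, with $q\in\Delta_M$ automatic) and your treatment of the case $S\geq S_{\max}$ coincide with the paper's.

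Two cosmetic points: the interpolation parameter should be taken in $[0,1)$ rather than $(0,1]$ (if $s\transp p=S$ you need $\lambda_0=0$, and $\lambda_0=1$ is excluded since $s\transp p''>S$); and the existence of $p^\star\in\mathcal{F}$ attaining $S_{\max}$, which you use in the case $S\geq S_{\max}$, deserves a word --- it holds because $\mathcal{F}$ is a nonempty compact subset of $\Delta_N$ ($\I{\cdot}{W}$ being continuous on the simplex), so the linear functional $p\mapsto s\transp p$ attains its minimum on it. Neither point affects the validity of the argument.
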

\begin{proof}
The proof can be found in Appendix~\ref{ap:LemmaMung}.
\end{proof}
Note that we later add an assumption on our channel (Assumption~\ref{ass:channel}) that guarantees uniqueness of the optimizer maximizing the mutual information, i.e., $\mathcal{F}$ is a singleton. In this case the optimizer to \eqref{opt:primal:equivalent} (resp. \eqref{opt:primal:equivalent:no:power:constraints}) is also feasible for the original problem \eqref{eq:DMC_capacity_const}. Computing $S_{\max}$ is straightforward once $\mathcal{F}$ is known. The singleton $\mathcal{F}$ can be seen as the maximizer of a channel capacity problem with no additional input cost constraint and can as such be computed with the scheme we present in this article. 

For the rest of the section we restrict attention to \eqref{opt:primal:equivalent}, since the less constrained problem \eqref{opt:primal:equivalent:no:power:constraints} can be solved in a similar, more direct way.
We tackle this optimization problem through its Lagrangian dual problem. The dual function turns out to be a non-smooth function. As such, it is known that the efficiency estimate of a black-box first-order method is of the order $O\left( \tfrac{1}{\varepsilon^{2}}\right)$ if no specific problem structure is used, where $\varepsilon$ is the desired abolute accuracy of the approximate solution in function value \cite{ref:nesterov-book-04}. 
We show, however, that $\mathsf{P}$ has a certain structure that allows us to use Nesterov's approach for approximating non-smooth problems with smooth ones \cite{nesterov05} leading to an efficiency estimate of the order $O\left( \tfrac{1}{\varepsilon}\right)$.  This, together with the low complexity of each iteration step in the approximation scheme leads to a numerical method for the channel capacity problem that has a very attractive computational complexity.

%%%%%%%%%%%%%%%%%%%%%%%%%%%%%%%%%%%%%%%%%%%%%%%%%%%%%%%%%%%%%%%%%%%%
\subsection{Preliminaries}
%%%%%%%%%%%%%%%%%%%%%%%%%%%%%%%%%%%%%%%%%%%%%%%%%%%%%%%%%%%%%%%%%%%%
Some preliminaries are needed in order to present our capacity approximation method. We begin by recalling Nesterov's seminal work \cite{nesterov05} in the context of structural convex optimization, which is our main tool in the proposed capacity approximation scheme. 
%%%%%%%%%%%%%%%%%%%%%%%%%%%%%%%%%%%%%%%%%%%%%%%%%%%%%%%%%%%%%%%%%%
 \subsection*{Nesterov's smoothing approach \cite{nesterov05}} 
Consider finite-dimensional real vector spaces $E_i$ endowed with a norm $\|\cdot\|_i$ and denote its dual space by $E^\star_i$ for $i=1,2$. Each dual pair of vector spaces comes with a bilinear form $\inprod{\cdot}{\cdot}_i:E^\star_i\times E_i\to \R$. For a linear operator $A:E_1\to E_2^\star$ the operator norm is defined as $\|A\|_{1,2}=\max_{x,u}\{ \inprod{Ax}{u}_2 \ : \ \|x\|_1=1, \|u\|_2=1 \}$. We are interested in the following optimization problem
\begin{equation}\label{eq:Nesterov:1}
\min_x\{ f(x) \ : \ x\in Q_1\},
\end{equation}
where $Q_1\subset E_1$ is a compact convex set and $f$ is a continuous convex function on $Q_1$. We assume that the objective function has the following structure
\begin{equation} \label{eq:Nesterov:2:structure}
f(x) = \hat{f}(x) + \max_u \{ \inprod{Ax}{u}_2 -\hat{\phi}(u) \ : \ u\in Q_2\},
\end{equation}
where $Q_2\subset E_2$ is a compact convex set, $\hat{f}$ is a continuously differentiable convex function whose gradient is Lipschitz continuous with constant $L$ on $Q_1$ and $\hat{\phi}$ is a continuous convex function on $Q_2$. It is assumed that $\hat{\phi}$ and $Q_2$ are simple enough such that the maximization in \eqref{eq:Nesterov:2:structure} is available in closed form.
The dual program to \eqref{eq:Nesterov:1} can be given as
\begin{equation}\label{eq:Nesterov:dual}
\max_u \left\{ -\hat{\phi}(u) + \min_x \{ \inprod{Ax}{u}_2 + \hat{f}(x) \ : \ x\in Q_1 \}  \ : \ u\in Q_2 \right\}.
\end{equation}
The main difficulty in solving \eqref{eq:Nesterov:1} efficiently is its non-smooth objective function. Without using any specific problem structure the complexity for subgradient-type methods is $O\left( \tfrac{1}{\varepsilon^{2}}\right)$, where $\varepsilon$ is the desired abolute accuracy of the approximate solution in function value. Nesterov's work suggests that when approximating problems with the particular structure \eqref{eq:Nesterov:2:structure} by smooth ones, a solution to the 
non-smooth problem can be constructed with complexity in order of $O\left( \tfrac{1}{\varepsilon}\right)$. In addition, Nesterov shows that when solving the smooth problem, a solution to the dual problem \eqref{eq:Nesterov:dual} can be obtained, and as such an a posteriori statement about the duality gap is available that often is significantly tighter than the $O\left( \tfrac{1}{\varepsilon}\right)$ complexity bound.
Consider the the smooth approximation to problem \eqref{eq:Nesterov:1} given by
\begin{equation} \label{eq:Nesterov:smooth:problem}
\min_x\{ f_\nu (x) \ : \ x\in Q_1\},
\end{equation}
where $\nu>0$ and the objective function is given by
\begin{equation}\label{eq:Nesterov:2:smooth:objective}
f_\nu(x) = \hat{f}(x) + \max_u \{ \inprod{Ax}{u}_2 -\hat{\phi}(u)-\nu d(u) \ : \ u\in Q_2\},
\end{equation}
where $d:Q_2\to\R$ is continuous and strongly convex with convexity parameter $\sigma$. It can be shown that $f_\nu$ has a Lipschitz continuous gradient with Lipschitz constant $L+\tfrac{\| A \|_{1,2}^2}{\nu \sigma}$ \cite[Theorem~1]{nesterov05}. In this light, the optimization problem \eqref{eq:Nesterov:smooth:problem} belongs to a class of problems that can be solved in  
 $O\left( \tfrac{1}{\sqrt{\varepsilon}}\right)$ using a fast gradient method. The result \cite[Theorem~3]{nesterov05} explicitly details how, having solved the smooth problem \eqref{eq:Nesterov:smooth:problem}, primal and dual solutions to the non-smooth problems \eqref{eq:Nesterov:1} and \eqref{eq:Nesterov:dual} can be obtained and how good they are.

 %%%%%%%%%%%%%%%%%%%%%%%%%%%%%%%%%%%%%%%%%%%%%%%%%%%%%%%%%%%%%%%%%%
 \subsection*{Entropy maximization} 
As a second preliminary result for some $c\in\R^{N}$ we consider the following optimization problem, that, if feasible, has an analytical solution
\begin{equation} \label{opt:cover}
 	\left\{ \begin{array}{lll}
			&\underset{p}{\max} 		&H(p) - c\transp p \\
			&\text{s.t. } 			& s\transp p = S\\
			& 					& p\in \Delta_{N}.
	\end{array} \right.
\end{equation}
\begin{mylem} \label{lem:cover}
Let $p^\star=[p_1^\star, \ldots, p_N^\star]$ with $p_i^\star =2^{\mu_1 - c_i + \mu_2 s_i}$, where $\mu_1$ and $\mu_2$ are chosen such that $p^\star$ satisfies the constraints in \eqref{opt:cover}. Then $p^\star$ uniquely solves \eqref{opt:cover}.
\end{mylem}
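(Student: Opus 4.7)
\bigskip

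\noindent\textbf{Proof plan.} The plan is to treat \eqref{opt:cover} as a standard Lagrangian calculation, leveraging the fact that $H$ is strictly concave on $\Delta_N$ so that once a KKT stationary point is exhibited, uniqueness is automatic. First I would observe that the feasible set is a (closed, convex) subset of the simplex and that the objective $p\mapsto H(p)-c^\top p$ is strictly concave and continuous on $\Delta_N$. Consequently, whenever the feasible set is nonempty the maximizer exists and is unique, so it suffices to verify that the candidate $p^\star$ with $p_i^\star = 2^{\mu_1-c_i+\mu_2 s_i}$ is a KKT point.

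Next I would introduce multipliers $\alpha\in\R$ for $\sum_i p_i=1$ and $\beta\in\R$ for $s^\top p=S$, together with nonnegativity multipliers that will turn out to be inactive. Differentiating the Lagrangian
\begin{equation*}
\mathcal{L}(p,\alpha,\beta) = -\sum_{i=1}^N p_i\log p_i - c^\top p - \alpha\Bigl(\sum_i p_i - 1\Bigr) - \beta\bigl(s^\top p - S\bigr)
\end{equation*}
with respect to $p_i$ yields the stationarity condition $-\log p_i - \tfrac{1}{\ln 2} - c_i - \alpha - \beta s_i = 0$, i.e.
\begin{equation*}
p_i = 2^{-\alpha - 1/\ln 2 - c_i - \beta s_i}.
\end{equation*}
Absorbing $-\alpha-1/\ln 2$ into a single constant $\mu_1$ and setting $\mu_2=-\beta$ gives exactly the stated closed form $p_i^\star = 2^{\mu_1-c_i+\mu_2 s_i}$. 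Since $2^{(\cdot)}>0$, the nonnegativity constraints are strictly inactive, so their multipliers vanish and the KKT system reduces to choosing $(\mu_1,\mu_2)$ so that $\sum_i p_i^\star=1$ and $s^\top p^\star=S$, which is precisely how $\mu_1,\mu_2$ are defined in the statement.

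Finally I would invoke the sufficiency of KKT conditions for concave maximization: because $H(p)-c^\top p$ is (strictly) concave and all constraints are affine, Slater's condition (interior feasibility, which holds whenever \eqref{opt:cover} is nontrivially feasible) ensures that any KKT point is a global maximizer, and strict concavity upgrades this to uniqueness. This finishes the proof.

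I do not foresee a serious obstacle; the only minor care needed is the bookkeeping of the $\log_2$ versus $\ln$ convention when differentiating $H$, which is responsible for the additive $1/\ln 2$ term that gets absorbed into $\mu_1$. The other subtle point is to make sure the argument handles the implicit hypothesis ``if feasible'' correctly: if no $p\in\Delta_N$ satisfies $s^\top p=S$ then there is nothing to prove, and otherwise strict concavity together with compactness of $\Delta_N$ guarantees existence, after which the Lagrangian identification above gives the explicit form and its uniqueness.
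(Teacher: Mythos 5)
Your argument is correct, but it takes a genuinely different route from the paper. The paper's proof (Appendix~B, following Cover--Thomas Theorem~12.1.1) is a direct comparison: for any feasible $q$ it writes $H(q)-c^\top q=-\D{q}{p^\star}-\sum_i q_i\log p_i^\star-c^\top q$, substitutes $\log p_i^\star=\mu_1-c_i+\mu_2 s_i$, and uses that both $q$ and $p^\star$ satisfy the same two linear constraints to conclude $H(q)-c^\top q\le H(p^\star)-c^\top p^\star$, with equality iff $\D{q}{p^\star}=0$, i.e.\ $q=p^\star$; uniqueness comes for free from the strict positivity of relative entropy, and no differentiability or constraint-qualification issues arise (in particular the argument is indifferent to feasible points on the boundary of $\Delta_N$, where $H$ is not differentiable). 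Your KKT route reaches the same conclusion: the stationarity bookkeeping, including the $1/\ln 2$ term absorbed into $\mu_1$, is right, and since the objective is concave and the dualized constraints are affine, the supergradient inequality at the strictly positive point $p^\star$ gives $f(q)\le f(p^\star)$ for every feasible $q$, with uniqueness from strict concavity. One small correction: Slater's condition is not the right tool here, and it does not hold ``whenever the problem is nontrivially feasible'' (the feasible set can consist solely of boundary points of $\Delta_N$); Slater would matter for the \emph{existence} of multipliers, whereas you are exhibiting them, and sufficiency of KKT for a concave objective with affine constraints needs no constraint qualification --- alternatively, under the lemma's hypothesis $p^\star$ itself is a strictly positive feasible point, so the issue is moot. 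What your approach buys is a systematic derivation of the exponential form rather than a verification of it; what the paper's approach buys is a shorter, fully elementary proof in which uniqueness is immediate.
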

\begin{proof}
See Appendix~\ref{ap:cover}.
\end{proof}

%%%%%%%%%%%%%%%%%%%%%%%%%%%%%%%%%%%%%%%%%%%%%%%%%%%%%%%%%%%%%%%%%%%%
\subsection{Capacity Approximation Scheme}
%%%%%%%%%%%%%%%%%%%%%%%%%%%%%%%%%%%%%%%%%%%%%%%%%%%%%%%%%%%%%%%%%%%%
In the following we focus on the input constrained channel capacity problem \eqref{opt:primal:equivalent} and the scenario of no input constraints \eqref{opt:primal:equivalent:no:power:constraints} is discussed as a special case within this section.
Consider the convex optimizaton problem \eqref{opt:primal:equivalent}, whose optimal value, according to Lemma~\ref{lem:equivalent:primal:problem} is the capacity $C_{S}(W)$.
The Lagrange dual program to \eqref{opt:primal:equivalent} is
\begin{align}\label{Lagrange:Dual:Program}
\mathsf{D}:\quad \left\{ \begin{array}{ll}
			\underset{\lambda}{\min} 		&G(\lambda) + F(\lambda) \\
			\text{s.t. } 				& \lambda\in\R^{M},
	\end{array}\right. 
\end{align}
where $F, G: \R^{M}\to\R$ are given by
\begin{align} \label{equation:F:and:G} 
G(\lambda)= \left\{ \begin{array}{ll}
			\underset{p}{\max} 		&-r\transp p + \lambda\transp \W\transp p \\
			\text{s.t. } 				&s\transp p=  S \\
								& p\in\Delta_{N}
	\end{array} \right.
	\quad \textnormal{and} \qquad
	F(\lambda)= \left\{ \begin{array}{ll}
			\underset{q}{\max} 		&H(q)-\lambda\transp q \\
			\text{s.t. } 				& q\in\Delta_{M}.
	\end{array}\right. 
\end{align}
Note that since the coupling constraint $ \W\transp p = q$ in the primal program \eqref{opt:primal:equivalent} is affine, the set of optimal solutions to the dual program \eqref{Lagrange:Dual:Program} is nonempty \cite[Proposition~5.3.1]{ref:Bertsekas-09} and as such the optimum is attained. 
It can be seen that the dual program \eqref{Lagrange:Dual:Program} structurally resembles the problem \eqref{eq:Nesterov:1} with \eqref{eq:Nesterov:2:structure}, without a bounded feasible set, however. 
To ensure that the set of dual optimizers is compact, we need to impose the following assumption on the channel matrix $\W$, that we will maintain for the remainder of Section~\ref{sec:classicalCapacity}.

\begin{myass} \label{ass:channel}
$\gamma:=\min\limits_{i,j}\W_{ij}>0$
\end{myass}
Assumption~\ref{ass:channel} excludes situations where the channel matrix has zero entries. Even though this may seem restrictive at first glance, it holds for a large class of channels. Moreover, in a finite dimensional setting, for a fixed input distribution, the mutual information is well known to be continuous in the channel matrix entries. Therefore, singular cases where the channel matrix contains zero entries can be avoided by slight perturbations of those entries. (This is discussed in more detail in Remark~\ref{rmk:perturbation}.)
Under Assumption~\ref{ass:channel} for a fixed channel, the mutual information can be seen to be a strictly concave function in the input distribution. Therefore, the capacity achieving input distribution is unique.
With Assumption~\ref{ass:channel} one can derive an explicit bound on the norm of the dual optimizers, which is crucial in the subsequent derivation of the main result in this section, namely Theorem~\ref{thm:error:bound:capacity}.   
\begin{mylem} \label{lem:compact:set}
Under Assumption~\ref{ass:channel}, the dual program \eqref{Lagrange:Dual:Program} is equivalent to 
\begin{equation}\label{eq:dual:finite:compact}
\begin{aligned}
\left\{ \begin{array}{ll}
			\underset{\lambda}{\min} 		&G(\lambda) + F(\lambda) \\
			\textnormal{s.t. } 				& \lambda\in Q,
	\end{array}\right. 
\end{aligned}
\end{equation}
where $Q:= \left\{ \lambda\in\R^{M} \ : \ \norm{\lambda}_{2}\leq M \left( \log(\gamma^{-1}) \vee \tfrac{1}{\ln 2} \right) \right\}$. 
\end{mylem}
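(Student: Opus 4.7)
The plan is to produce, via a symmetry/normalization argument, a dual optimizer that already lies in $Q$; equivalence of the two programs then follows immediately. Existence of some unconstrained dual optimizer $\bar{\lambda}\in\mathbb{R}^M$ is granted by the Bertsekas proposition cited before the lemma, so the task reduces to bounding its norm after a suitable translation.

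The first ingredient is a translation invariance of the dual objective along the diagonal. Row-stochasticity $\W\mathbf{1}=\mathbf{1}$ gives $p\transp \W\mathbf{1}=1$ for every $p\in\Delta_N$, whence $G(\lambda+c\mathbf{1})=G(\lambda)+c$ for every $c\in\mathbb{R}$. On the $F$ side, a standard Lagrange computation (a cost-free specialization of Lemma~\ref{lem:cover}) gives the closed form
\begin{equation*}
q^\star(\lambda)_j \;=\; \frac{2^{-\lambda_j}}{Z}, \qquad Z\;:=\;\sum_{k=1}^{M} 2^{-\lambda_k},
\end{equation*}
together with $F(\lambda)=\log Z$, hence $F(\lambda+c\mathbf{1})=F(\lambda)-c$. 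The sum $G+F$ is therefore invariant under $\lambda\mapsto\lambda+c\mathbf{1}$; choosing $c=\log\bar Z$ with $\bar Z:=\sum_k 2^{-\bar\lambda_k}$ yields a new dual optimizer $\lambda^\star:=\bar\lambda-(\log\bar Z)\mathbf{1}$ satisfying the normalization $\sum_k 2^{-\lambda^\star_k}=1$, so that $q^\star(\lambda^\star)_j = 2^{-\lambda^\star_j}\in(0,1]$.

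Next, strong duality for the affine-constrained convex program \eqref{opt:primal:equivalent} supplies a primal optimizer $p^\star$ for which $(p^\star,q^\star(\lambda^\star))$ maximizes the Lagrangian at $\lambda^\star$ and satisfies primal feasibility $\W\transp p^\star=q^\star(\lambda^\star)$. Assumption~\ref{ass:channel} then delivers, for each $j$,
\begin{equation*}
q^\star(\lambda^\star)_j \;=\; \sum_{i=1}^{N} \W_{ij}\, p^\star_i \;\geq\; \gamma\sum_{i=1}^{N} p^\star_i \;=\; \gamma.
\end{equation*}
Combining $\gamma\leq q^\star(\lambda^\star)_j = 2^{-\lambda^\star_j}\leq 1$ and taking the logarithm yields the componentwise bound $0\leq \lambda^\star_j\leq \log(\gamma^{-1})$, so
\begin{equation*}
\|\lambda^\star\|_2 \;\leq\; \sqrt{M}\,\log(\gamma^{-1}) \;\leq\; M\bigl(\log(\gamma^{-1})\vee \tfrac{1}{\ln 2}\bigr),
\end{equation*}
which places $\lambda^\star\in Q$ and shows that restricting the feasible set of \eqref{Lagrange:Dual:Program} to $Q$ leaves the optimal value unchanged.

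The main obstacle is identifying the row-stochasticity symmetry: without it, the dual objective is constant along the diagonal direction, so the dual optimizer set is genuinely unbounded in $\mathbb{R}^M$ and no $\ell_2$ bound on an arbitrary dual optimizer can hold. Once that symmetry is isolated, the normalization $Z=1$ converts the uniform lower bound $\W_{ij}\geq\gamma$ from Assumption~\ref{ass:channel} into an $\ell_\infty$ (and hence $\ell_2$) bound on $\lambda^\star$ with essentially no extra work; the slack factor $\sqrt{M}$ vs.\ $M$ and the $\vee\tfrac{1}{\ln 2}$ in the statement are harmless and merely simplify bookkeeping in the subsequent smoothing analysis.
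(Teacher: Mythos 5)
Your argument is correct, but it proceeds along a genuinely different route than the paper. The paper proves the lemma by a perturbation argument: it introduces a penalized primal $\mathsf{P}_\beta$ with constraint $\|\W\transp p - q\|_\infty\leq\varepsilon$ and penalty $-\beta\varepsilon$, whose dual carries the ball constraint $\norm{\lambda}_1\leq\beta$, and then uses a second-order Taylor expansion of $H(\W\transp p+\varepsilon v)$ — with Assumption~\ref{ass:channel} bounding the Hessian term by $M/(\gamma\ln 2)$ — together with concavity of the perturbation function $J(\varepsilon)$ to show that for $\beta> M(\log(\gamma^{-1})\vee\tfrac{1}{\ln 2})$ the optimal perturbation is $\varepsilon^\star(\beta)=0$, so the ball constraint is inactive. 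You instead exploit the diagonal invariance $G(\lambda+c\mathbf{1})+F(\lambda+c\mathbf{1})=G(\lambda)+F(\lambda)$ coming from row-stochasticity of $\W$ and the closed form $F(\lambda)=\log\sum_j 2^{-\lambda_j}$, normalize an (attained, by the Bertsekas result quoted before the lemma) dual optimizer so that $\sum_k 2^{-\lambda^\star_k}=1$, and then use strong duality plus uniqueness of the inner maximizer of $F$ to identify $2^{-\lambda^\star_j}=q^\star_j=(\W\transp p^\star)_j\in[\gamma,1]$, giving the componentwise bound $0\leq\lambda^\star_j\leq\log(\gamma^{-1})$. This is more elementary (no perturbation function, no Taylor estimate) and in fact yields the sharper radius $\sqrt{M}\log(\gamma^{-1})$, which would even improve the constant $D_1$ downstream; what it does \emph{not} give, and what the paper's route makes transparent, is the explicit penalized-dual machinery that the paper reuses almost verbatim for the continuous-input case (Lemma~\ref{lem:compact:set:Poisson}), where a closed-form normalization of $F$ alone would not suffice without redoing your saddle-point identification in the infinite-dimensional setting. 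Two small repairs: the shift should be $\lambda^\star=\bar{\lambda}+(\log\bar{Z})\mathbf{1}$ (with your sign, $\sum_k 2^{-\lambda^\star_k}=\bar{Z}^2$), and the step ``$(p^\star,q^\star)$ maximizes the Lagrangian at $\lambda^\star$'' deserves one explicit line: primal attainment holds by compactness of $\Delta_N\times\Delta_M$, and the sandwich $G(\lambda^\star)+F(\lambda^\star)\geq -r\transp p^\star+H(q^\star)+\lambda^{\star\transp}(\W\transp p^\star-q^\star)=-r\transp p^\star+H(q^\star)$, combined with strong duality, forces $q^\star$ to attain $F(\lambda^\star)$, whose maximizer is unique.
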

\begin{proof}
See Appendix~\ref{ap:bounding:lambda}.
\end{proof}
\begin{mylem} \label{lem:strong:duality:finite}
Strong duality holds between \eqref{opt:primal:equivalent} and \eqref{Lagrange:Dual:Program}.
\end{mylem}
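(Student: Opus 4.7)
The plan is to recognize problem \eqref{opt:primal:equivalent} as a convex program whose only non-abstract constraints are affine, and then invoke the refined version of Slater's condition that is tailored to such problems. Concretely, the objective $(p,q)\mapsto -r\transp p + H(q)$ is the sum of a linear function and the (concave, continuous) Shannon entropy, hence concave on $\Delta_{N}\times\Delta_{M}$, while the coupling constraints $\W\transp p = q$ and $s\transp p = S$ are affine and the underlying abstract set $\Delta_{N}\times\Delta_{M}$ is polyhedral.

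I would then invoke the standard strong duality theorem for concave maximization subject to purely polyhedral constraints --- along the lines of \cite[Proposition~5.3.1]{ref:Bertsekas-09} that was already used earlier in this section to infer non-emptiness of the dual optimal set. In that setting the duality gap vanishes as soon as the primal is feasible and has finite optimal value, no further constraint qualification being required. Feasibility follows from the standing hypothesis $S<S_{\max}$ together with $S$ lying in the range of $p\mapsto s\transp p$ on $\Delta_{N}$: any such $p$, paired with $q:=\W\transp p\in\Delta_{M}$ (the latter belonging to $\Delta_{M}$ by row-stochasticity of $\W$), is a feasible pair. Finiteness of the optimum is clear from $H(q)\leq \log M$ and $r\geq 0$, so the primal value lies in $[-\max_{i}r_{i},\log M]$.

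I do not anticipate any real obstacle. The one point worth mentioning is the non-differentiability of $H$ on the boundary of $\Delta_{M}$, but this is inconsequential for the cited duality theorem, which only requires the objective to be concave and upper semi-continuous on its feasible set --- both of which the Shannon entropy satisfies on $\Delta_{M}$ under the usual convention $0\log 0=0$. Combining these observations yields equality of the primal and dual optimal values, i.e.\ strong duality between \eqref{opt:primal:equivalent} and \eqref{Lagrange:Dual:Program}, as claimed.
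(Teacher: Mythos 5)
Your proposal is correct and follows essentially the same route as the paper, which simply invokes the standard strong duality result for convex programs with affine constraints \cite[Proposition~5.3.1]{ref:Bertsekas-09}; you merely spell out the hypotheses (concavity of $-r\transp p + H(q)$, polyhedral feasible set, feasibility and finiteness of the optimal value) that the paper leaves implicit.
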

\begin{proof}
The proof follows by a standard strong duality result of convex optimization, see \cite[Proposition~5.3.1, p.~169]{ref:Bertsekas-09}.
\end{proof}
Note that the optimization problem defining $F(\lambda)$ is of the form given in \eqref{opt:cover}. Hence, according to Lemma~\ref{lem:cover}, $F(\lambda)$ has a unique optimizer $q^{\star}$ with components
$q_{j}^{\star} = 2^{\mu-\lambda_{j}}$,
where $\mu\in\R$ needs to be chosen such that $q^{\star}\in\Delta_{M}$, i.e.,
\begin{equation*}
\mu = - \log\left(\sum_{j=1}^M 2^{-\lambda_j} \right).
\end{equation*}
Therefore, 
\begin{equation} \label{analytical:solution:F}
\begin{aligned}
F(\lambda) 	&=\sum_{j=1}^{M} \left(- q_{j}^{\star}\log(q_{j}^{\star}) - \lambda_{j}q_{j}^{\star} \right) =	-\sum_{j=1}^{M} \mu \, 2^{\mu-\lambda_{j}} = -\mu \, 2^{\mu}\sum_{j=1}^{M}2^{-\lambda_{j}} =    \log\left(\sum_{j=1}^M 2^{-\lambda_j} \right).
\end{aligned}
\end{equation}
$F(\lambda)$ is a smooth function with gradient
\begin{align} \label{eq:gradient:F}
(\nabla F(\lambda))_{i} = \frac{-2^{-\lambda_{i}}}{\sum_{j=1}^M 2^{-\lambda_j} }.
\end{align}
According to \cite[Theorem~1]{nesterov05} and the fact that the negative entropy is strongly convex with convexity parameter 1 \cite[Lemma~3]{nesterov05}, $\nabla F(\lambda)$ is Lipschitz continuous with Lipschitz constant $1$.
The main difficulty in solving \eqref{eq:dual:finite:compact} efficiently is that $G(\cdot)$ is non-smooth. Following Nesterov's smoothing technique \cite{nesterov05}, we alleviate this difficulty by approximating $G(\cdot)$ by a function with a Lipschitz continuous gradient. This smoothing step is efficient in our case because of the particular structure of $\eqref{eq:dual:finite:compact}$.
Following \cite{nesterov05} and \eqref{eq:Nesterov:2:smooth:objective}, consider
\begin{align} \label{eq:discrete:G:nu}
G_{\nu}(\lambda) 	=    \left\{ \begin{array}{ll}
	\max\limits_{p}				&  \lambda\transp \W\transp p - r\transp p + \nu H( p) -\nu\log(N)\\
			\text{s.t. } 			&  s\transp p=  S \\
							&  p\in\Delta_{N},
	\end{array}\right. 
\end{align}
with smoothing parameter $\nu\in\R_{>0}$ and denote by $p_{\nu}(\lambda)$ the optimizer to \eqref{eq:discrete:G:nu}, which is unique because the objective function is strictly concave. Clearly for any $\lambda \in Q$, $G_{\nu}(\lambda)$ is a uniform approximation of the non-smooth function $G(\lambda)$, since $G_{\nu}(\lambda)\leq G(\lambda)\leq G_{\nu}(\lambda) + \nu \log(N)$.
Using Lemma~\ref{lem:cover}, the optimizer $p_{\nu}(\lambda)$ to \eqref{eq:discrete:G:nu} is analytically given by
\begin{align} \label{eq:finite:optimizer:pmu}
p_{\nu}(\lambda,\mu)_{i} = 2^{\mu_{1} + \tfrac{1}{\nu} (\W \lambda \, -\,  r)_{i} + \mu_{2}s_{i}},
\end{align}
where $\mu_{1},\mu_{2}\in\R$ have to be chosen so that $s\transp p_{\nu}(\lambda,\mu)=S$ and $p_{\nu}(\lambda,\mu)\in\Delta_{N}$; for this choice of $\mu_1$, $\mu_2$ we denote the solution by $p_{\nu}(\lambda)$. 
\begin{myremark}\label{rmk:stabilization:optimizer}
In case of no input constraints, the unique optimizer to \eqref{eq:discrete:G:nu} is given by
\begin{equation*}
p_{\nu}(\lambda)_{i} = \frac{2^{ \tfrac{1}{\nu} (\W\lambda -  r)_{i}} }{\sum_{i=1}^{N}2^{ \tfrac{1}{\nu} (\W\lambda -  r)_{i}}}\quad \text{for }i=1,\hdots,N,
\end{equation*}
whose straightforward evaluation is numerically difficult for small $\nu$. One can circumvent this problem, however, by following the numerically stable technique that we present in Remark~\ref{rmk:finite:no:input:const}.
By Dubin's theorem it can be shown that the capacity of a memoryless channel with a discrete output alphabet of size $M$ and input alphabet size $N\geq M$, is achieved by a discrete input distribution with $M$ mass points \cite{gallager68,witsenhausen}. Computing the exact positions and weights of this optimal input distribution may be difficult, though it is worth noting that our analytical solution in \eqref{eq:finite:optimizer:pmu} converges to this optimal input distribution as $\nu$ tends to $0$.
\end{myremark}
\begin{myremark}[Additional input constraints]\label{rmk:finite:constraint:optimizer}
In case of additional input constraints, we need an efficient method to find the coefficients $\mu_{1}$ and $\mu_{2}$ in \eqref{eq:finite:optimizer:pmu}. In particular if there are multiple input constraints (leading to multiple $\mu_{i}$) the efficiency of the method computing them becomes important. Instead of solving a system of nonlinear equations, one can show (\cite[Theorem~4.8]{ref:Borwein-91}, \cite[p.~257 ff.]{ref:Lasserre-11}) that the coefficients $\mu_{i}$ are the unique maximizers to the following convex optimization problem
\begin{equation} \label{eq:opt:problem:find:mu:finite}
\max\limits_{\mu\in\R^{2}}\left\{ y\transp \mu - \sum_{i=1}^{N}p_{\nu}(\lambda,\mu)_{i} \right\}, %=:\max\limits_{\mu\in\R^{2}} \left\{ g(\mu) \right\},
\end{equation}
where $y:=(1,S)$. Notice that $\eqref{eq:opt:problem:find:mu:finite}$ is an unconstrained maximization of a strictly concave function, whose gradient and Hessian can be directly computed as
\begin{equation*}
\left( \begin{array}{c} y_1 - \ln 2 \sum_{i=1}^N p_{\nu}(\lambda,\mu)_i \\ y_2 - \ln 2 \sum_{i=1}^N s_i p_{\nu}(\lambda,\mu)_i
\end{array} \right) \quad \text{and}\quad
\left( \begin{array}{cc} - (\ln 2)^2 \sum_{i=1}^N p_{\nu}(\lambda,\mu)_i & -(\ln 2)^2 \sum_{i=1}^N s_i p_{\nu}(\lambda,\mu)_i \\ -(\ln 2)^2 \sum_{i=1}^N s_i p_{\nu}(\lambda,\mu)_i & -(\ln 2)^2 \sum_{i=1}^N s^2_i p_{\nu}(\lambda,\mu)_i
\end{array} \right),
\end{equation*}
which allows the use of efficient second-order methods such as Newton's method. This method directly extends to multiple input constraints. 
Let us point out that Theorem~\ref{thm:error:bound:capacity}, quantifying the approximation error of the presented algorithm, is based on the assumption that the maximum entropy solution \eqref{eq:finite:optimizer:pmu} is available, meaning that one can solve \eqref{eq:opt:problem:find:mu:finite} for optimality. In the case of a finite input alphabet this assumption is not restrictive as we have argued that \eqref{eq:opt:problem:find:mu:finite} is easy to solve. For a continuous input alphabet, that we shall discuss in the subsequent section, however, finding the maximum entropy solution is numerically difficult as it involves integration problems. Therefore, in Remark~\ref{rmk:finite:constraint:optimizer:cts}, we comment on how the presented channel capacity algorithm behaves, when having access only to an approximate solution to the mentioned maximum entropy problem. 
\end{myremark}

Finally, we can show that the uniform approximation $G_{\nu}(\lambda)$ is smooth and has a Lipschitz continuous gradient, with known Lipschitz constant.
\begin{myprop} \label{prop:Lipschitz:continuity:DMC}
$G_{\nu}(\lambda)$ is well defined and continuously differentiable at any $\lambda\in Q$. Moreover, it is convex and its gradient $\nabla G_{\nu}(\lambda)=\W\transp p_{\nu}(\lambda)$ is Lipschitz continuous with Lipschitz constant $\tfrac{1}{\nu}$.
\end{myprop}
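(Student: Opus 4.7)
The plan is to recognize that $G_\nu(\lambda)$ is exactly the smoothed approximation constructed in Nesterov's framework \cite{nesterov05}, and then invoke the relevant parts of his Theorem~1 with the norms and constants that fit our problem. Specifically, I would identify the primal/dual structure by setting $E_1=\R^{M}$ (equipped with $\|\cdot\|_2$) so that $\lambda$ plays the role of $x$, and $E_2=\R^{N}$ (equipped with $\|\cdot\|_1$) so that $p$ plays the role of $u$; the linear operator is $A=\W$, acting as $\langle Ax,u\rangle_2=\lambda^\top\W^\top p$, while $\hat f\equiv0$, $\hat\phi(p)=r^\top p$, $Q_2=\{p\in\Delta_N:s^\top p=S\}$, and the prox-function $d(p)=\log N-H(p)$ (nonnegative on $\Delta_N$, vanishing at the uniform distribution). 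With this matching, the definition \eqref{eq:discrete:G:nu} coincides with Nesterov's smoothed objective \eqref{eq:Nesterov:2:smooth:objective}.

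Next I would verify the ingredients that Nesterov's result consumes. Well-definedness of $G_\nu$ follows because the feasible set is compact and the objective in \eqref{eq:discrete:G:nu} is continuous in $p$; uniqueness of $p_\nu(\lambda)$ follows from strict concavity, which holds because $\nu H(p)$ is strictly concave on $\Delta_N$ and the remaining terms are linear in $p$. Convexity of $G_\nu$ in $\lambda$ is immediate: for each feasible $p$ the bracket $\lambda\mapsto\lambda^\top\W^\top p-r^\top p+\nu H(p)-\nu\log N$ is affine, so $G_\nu$ is a pointwise supremum of affine functions and hence convex. Continuous differentiability and the formula $\nabla G_\nu(\lambda)=\W^\top p_\nu(\lambda)$ are a consequence of Danskin's theorem, applicable here precisely because the inner maximizer $p_\nu(\lambda)$ is unique.

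For the Lipschitz constant I would invoke \cite[Theorem~1]{nesterov05}, which asserts that $\nabla G_\nu$ is Lipschitz with constant $L+\|A\|_{1,2}^{2}/(\nu\sigma)$, where $L$ is the Lipschitz constant of $\nabla\hat f$ and $\sigma$ is the strong-convexity parameter of $d$ with respect to the chosen norm on $E_2$. Since $\hat f\equiv0$, we have $L=0$. The negative entropy is strongly convex on $\Delta_N$ with parameter $\sigma=1$ relative to $\|\cdot\|_1$ by \cite[Lemma~3]{nesterov05} (essentially Pinsker), so $d=\log N-H$ inherits the same parameter. Finally, the operator norm $\|A\|_{1,2}$ coincides with the norm $\|\W\|$ introduced in the paper, and the display \eqref{eq:operator:norm} bounds it by $1$. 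Plugging these in gives the Lipschitz constant $\|A\|_{1,2}^{2}/(\nu\sigma)\leq 1/\nu$, as claimed.

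The only substantive bookkeeping—which I expect to be the main, though modest, obstacle—is being careful about which norms live on which spaces so that the operator norm appearing in Nesterov's estimate really is the quantity $\|\W\|$ bounded in \eqref{eq:operator:norm}: $\lambda$ must be measured in $\|\cdot\|_2$ and $p$ in $\|\cdot\|_1$, so that the strong-convexity parameter of $-H$ is $1$ and the relevant operator norm is the one already controlled in the notation section. Once this pairing is fixed, the proposition follows with essentially no additional computation.
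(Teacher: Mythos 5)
Your proposal is correct and follows essentially the same route as the paper, whose proof is precisely an appeal to Theorem~1 and Lemma~3 of \cite{nesterov05} together with the operator-norm bound \eqref{eq:operator:norm}; your identification of the spaces, norms, prox-function $d(p)=\log N-H(p)$, and the resulting constant $\|\W\|^{2}/(\nu\sigma)\leq 1/\nu$ matches the intended argument. The extra detail you supply (strict concavity for uniqueness, supremum-of-affine for convexity, Danskin for the gradient formula) is exactly what is implicit in the cited results.
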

\begin{proof}
The proof follows directly from the proof of Theorem 1 and Lemma 3 in \cite{nesterov05} together with \eqref{eq:operator:norm}.
\end{proof}
We consider the smooth, convex optimization problem
\begin{align}\label{Lagrange:Dual:Program:smooth}
 \mathsf{D}_{\nu}:\quad \left\{ \begin{array}{ll}
	\min\limits_{\lambda} 		& F(\lambda) + G_{\nu}(\lambda) \\
			\text{s.t. } 					& \lambda\in Q,
	\end{array}\right.
\end{align}
whose objective function has a Lipschitz continuous gradient with Lipschitz constant $1+\tfrac{1}{\nu}$. As such $\mathsf{D}_{\nu}$ can be  be approximated with Nesterov's optimal scheme for smooth optimization \cite{nesterov05}, which is summarized in Algorithm~\hyperlink{algo:1}{1}, where $\pi_{Q}(x)$ denotes the projection operator of the set $Q$, defined in Lemma~\ref{lem:compact:set}, with $R:=M \left( \log(\gamma^{-1}) \vee \tfrac{1}{\ln 2}\right)$
\begin{equation*}
\pi_{Q}(x):=\left\{ \begin{array}{ll} R \tfrac{x}{\norm{x}_{2}}, & \norm{x}_{2}>R \\ x, & \text{otherwise.} \end{array} \right. 
\end{equation*}

 \begin{table}[!htb]
\centering 
\begin{tabular}{c}
  \Xhline{3\arrayrulewidth}  \hspace{1mm} \vspace{-3mm}\\ 
\hspace{35mm}{\bf{\hypertarget{algo:1}{Algorithm 1: } }} Optimal scheme for smooth optimization \hspace{35mm} \\ \vspace{-3mm} \\ \hline \vspace{-0.5mm}
\end{tabular} \\
\vspace{-5mm}
 \begin{flushleft}
  {\hspace{1.7mm}Choose some $\lambda_0 \in Q$}
 \end{flushleft}
 \vspace{-6mm}
 \begin{flushleft}
  {\hspace{1.7mm}\bf{For $k\geq 0$ do$^{*}$}}
 \end{flushleft}
 \vspace{-7mm}
 
  \begin{tabular}{l l}
{\bf Step 1: } & Compute $\nabla F(x_{k})+\nabla G_{\nu}(x_{k})$ \\
{\bf Step 2: } & $y_k = \pi_{Q}\left(-\frac{1}{L_{\nu}}\left( \nabla F(x_k)+\nabla G_{\nu}(x_k) \right) + x_k\right)$\\
{\bf Step 3: } &   $z_k=\pi_{Q}\left(-\frac{1}{L_{\nu}} \sum_{i=0}^{k} \frac{i+1}{2} \left(  \nabla F(x_i)+\nabla G_{\nu}(x_i) \right)\right)$\\
{\bf Step 4: } & $x_{k+1}=\frac{2}{k+3}z_{k} + \frac{k+1}{k+3}y_{k}$\\ 
  \end{tabular}
   \begin{flushleft}
  {\hspace{1.7mm}[*The stopping criterion is explained in Remark~\ref{remark:stopping}]}
  \vspace{-10mm}
 \end{flushleft}  
\begin{tabular}{c}
\hspace{36.7mm} \phantom{ {\bf{Algorithm:}} Optimal Scheme for Smooth Optimization}\hspace{36.7mm} \\ \vspace{-1.0mm} \\\Xhline{3\arrayrulewidth}
\end{tabular}
\end{table}
The following theorem provides explicit error bounds for the solution provided by Algorithm~\hyperlink{algo:1}{1} after $n$ iterations. Define the constants $D_{1}:=\tfrac{1}{2}(M \log(\gamma^{-1})\vee\tfrac{1}{\ln 2})^2 $ and $D_{2}:=\log(N)$.
\begin{mythm}[\cite{nesterov05}] \label{thm:error:bound:capacity}
Under Assumption~\ref{ass:channel}, for $n\in\mathbb{N}$ consider a smoothing parameter
\begin{align*}
\nu = \nu(n) = \frac{2}{n+1}\sqrt{\frac{D_{1}}{D_{2}}}.
\end{align*}
Then after $n$ iterations of Algorithm~\hyperlink{algo:1}{1} we can generate the approximate solutions to the problems \eqref{Lagrange:Dual:Program} and \eqref{eq:DMC_capacity_const}, namely,
\begin{align}
\hat{\lambda} = y_{n} \in Q\qquad \textnormal{and} \qquad \hat{p}=\sum_{k=0}^{n}\frac{2(k+1)}{(n+1)(n+2)} p_{\nu}(x_{k})\in \Delta_{N}, \label{eq:optInPut}
\end{align}
which satisfy 
\begin{align}
0\leq F(\hat{\lambda}) + G(\hat{\lambda}) - \I{\hat{p}}{W} \leq \frac{4}{n+1} \sqrt{D_{1} D_{2}} + \frac{4 D_{1}}{(n+1)^{2}}. \label{eq:EBB}
\end{align}
Thus, the complexity of finding an $\varepsilon$-solution to the problems \eqref{Lagrange:Dual:Program} and \eqref{eq:DMC_capacity_const} does not exceed
\begin{align}\label{eq:finite:complexity:thm}
4  \sqrt{D_{1} D_{2}} \ \frac{1}{\varepsilon} + 2 \sqrt{\frac{D_{1}}{\varepsilon}}.
\end{align}
\end{mythm}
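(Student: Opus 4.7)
The plan is to recognize that \eqref{Lagrange:Dual:Program:smooth} fits precisely into the abstract framework of Nesterov \cite{nesterov05} and then transcribe his master error estimate to the quantities introduced above. Concretely, I take $E_1 = \R^M$ with the Euclidean norm and $Q_1 = Q$, $E_2 = \R^N$ with the $\ell_1$-norm and $Q_2 = \{p \in \Delta_N : s\transp p = S\}$, the linear operator $A = \W\transp$, the smooth part $\hat f = F$, and the concave inner term $\hat\phi(p) = r\transp p$. By \eqref{eq:operator:norm} we have $\|A\|_{1,2} = \|\W\| \le 1$, and Proposition~\ref{prop:Lipschitz:continuity:DMC} together with the Lipschitz statement after \eqref{eq:gradient:F} gives that $F$ is $1$-smooth and $G_\nu$ is $\|\W\|^2/\nu \le 1/\nu$-smooth, so $F+G_\nu$ has Lipschitz gradient with constant $L_\nu = 1 + 1/\nu$.

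Next I identify the two prox-functions and their parameters. On $Q_1 = Q$, take $d_1(\lambda) = \tfrac{1}{2}\|\lambda\|_2^2$, which is strongly convex with parameter $\sigma_1 = 1$ and, by Lemma~\ref{lem:compact:set}, satisfies $\max_{\lambda\in Q} d_1(\lambda) = \tfrac{1}{2}R^2 = D_1$. On $Q_2$, take $d_2(p) = \log N + \sum_i p_i \log p_i$, which is strongly convex with parameter $\sigma_2 = 1$ with respect to $\|\cdot\|_1$ (Pinsker, as invoked in \cite[Lemma~3]{nesterov05}) and satisfies $\max_{p\in Q_2} d_2(p) \le \log N = D_2$. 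These are exactly the constants used to define $G_\nu$ in \eqref{eq:discrete:G:nu}.

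With these identifications in place, I would invoke \cite[Theorem~3]{nesterov05} directly. That theorem asserts that after $n$ steps of the optimal-gradient recursion of Algorithm~\hyperlink{algo:1}{1} applied to the $L_\nu$-smooth function $F+G_\nu$ on $Q$, the iterates $y_n$ and the weighted average $\hat p = \sum_{k=0}^n \tfrac{2(k+1)}{(n+1)(n+2)} p_\nu(x_k)$ satisfy
\begin{equation*}
F(y_n) + G_\nu(y_n) - \sum_{k=0}^n \tfrac{2(k+1)}{(n+1)(n+2)}\bigl[-r\transp p_\nu(x_k) + \langle \W\transp p_\nu(x_k), \cdot\rangle\text{-dual}\bigr] \le \tfrac{4 L D_1}{(n+1)^2 \sigma_1}.
\end{equation*}
Combining this with the uniform approximation estimate $G_\nu \le G \le G_\nu + \nu D_2$ from the line below \eqref{eq:discrete:G:nu}, with $\hat p$ feasible for \eqref{opt:primal:equivalent} by construction (convex combination of feasible points), and with strong duality (Lemma~\ref{lem:strong:duality:finite}) to equate the primal objective $I(\hat p; W)$ with $-r\transp\hat p + H(\W\transp\hat p)$, I obtain the weak duality chain $0 \le F(\hat\lambda) + G(\hat\lambda) - I(\hat p; W)$ bounded above by $\tfrac{4D_1}{(n+1)^2} + \nu D_2$. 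Plugging in the stated $\nu(n) = \tfrac{2}{n+1}\sqrt{D_1/D_2}$ balances these two terms and yields \eqref{eq:EBB}.

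Finally, to translate \eqref{eq:EBB} into the complexity \eqref{eq:finite:complexity:thm}, I would set the right-hand side of \eqref{eq:EBB} equal to $\varepsilon$ and solve the resulting quadratic inequality in $n+1$. Using $\tfrac{4\sqrt{D_1 D_2}}{n+1} \le \varepsilon/2$ forces $n+1 \ge 8\sqrt{D_1 D_2}/\varepsilon$ from the first term, and the second term contributes the $\sqrt{D_1/\varepsilon}$ correction; a routine bookkeeping of these two conditions then delivers exactly \eqref{eq:finite:complexity:thm}. The main obstacle I anticipate is verifying that the pieces of Nesterov's Theorem~3 transfer verbatim to our primal variable $\hat p$—in particular that the weighted-average output of the algorithm is indeed $\varepsilon$-feasible for \eqref{opt:primal:equivalent} and that the dual gap is controlled by the smooth-problem suboptimality; both reduce, however, to standard manipulations with the primal reconstruction formula in \cite[equation~(2.6)]{nesterov05}.
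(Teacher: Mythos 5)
Your proposal is correct and takes essentially the same route as the paper, which likewise proves Theorem~\ref{thm:error:bound:capacity} by instantiating \cite[Theorem~3]{nesterov05} on the smoothed dual, using the compact dual set from Lemma~\ref{lem:compact:set}, strong duality (Lemma~\ref{lem:strong:duality:finite}) and the smoothness bound of Proposition~\ref{prop:Lipschitz:continuity:DMC}. One bookkeeping slip: the intermediate gap bound must read $\nu D_2+\tfrac{4(1+1/\nu)D_1}{(n+1)^2}$ rather than $\nu D_2+\tfrac{4D_1}{(n+1)^2}$ (the $1/\nu$ part of $L_\nu$ cannot be dropped), and it is precisely the extra $\tfrac{4D_1}{\nu(n+1)^2}$ term that, with the prescribed $\nu(n)$, supplies the other half of the $\tfrac{4\sqrt{D_1D_2}}{n+1}$ term in \eqref{eq:EBB}, after which solving the quadratic in $n+1$ gives \eqref{eq:finite:complexity:thm} exactly.
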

\begin{proof}
The proof follows along the lines of \cite[Theorem~3]{nesterov05} and in particular requires Lemma~\ref{lem:compact:set}, Lemma~\ref{lem:strong:duality:finite} and Proposition~\ref{prop:Lipschitz:continuity:DMC}.
\end{proof}
Note that Theorem~\ref{thm:error:bound:capacity} provides an explicit error bound \eqref{eq:EBB}, also called \emph{a priori error}. In addition this theorem gives an approximation to the optimal input distribution \eqref{eq:optInPut}, i.e., the optimizer of the primal problem. Thus, by comparing the values of the primal and the dual optimization problem, one can also compute an \emph{a posteriori error} which is the difference of the dual and the primal problem, namely $F(\hat{\lambda}) + G(\hat{\lambda}) - \I{\hat{p}}{W}$.

\begin{myremark}[Stopping criterion of Algorithm~\hyperlink{algo:1}{1}] \label{remark:stopping}
There are two immediate approaches to define a stopping criterion for Algorithm~\hyperlink{algo:1}{1}.
\begin{enumerate}[(i)]
\item \emph{A priori stopping criterion}: Choose an a priori error $\varepsilon>0$. Setting the right hand side of \eqref{eq:EBB} equal to $\varepsilon$ defines a number of iterations $n_{\varepsilon}$ required to ensure an $\varepsilon$-close solution.
\item \emph{A posteriori stopping criterion}: Choose an a posteriori error $\varepsilon>0$. Choose the smoothing parameter $\nu(n_{\varepsilon})$ for $n_{\varepsilon}$ as defined above in the a priori stopping criterion. Fix a (small) number of iterations $\ell$ that are run using Algorithm~\hyperlink{algo:1}{1}. Compute the a posteriori error $\mathrm{e}_{\ell}:= F(\hat{\lambda}) + G(\hat{\lambda}) - \I{\hat{p}}{\rho}$ according to Theorem~\ref{thm:error:bound:capacity}. If $\mathrm{e}_{\ell}\leq \varepsilon$ terminate the algorithm otherwise continue with another $\ell$ iterations. Continue until the a posteriori error is below $\varepsilon$.
\end{enumerate}
\end{myremark}

\begin{myremark}[Computational stability]\label{rmk:finite:no:input:const}
In the special case of no input cost constraints, one can derive an analytical expression for $G_{\nu}(\lambda)$ and its gradient as
\begin{align}
G_{\nu}(\lambda) &= \nu \log\left( \sum_{i=1}^{N} 2^{\frac{1}{\nu}( \W \lambda\, - \, r)_{i}} \right) -\nu \log(N) \nonumber \\
\nabla G_{\nu}(\lambda) &=  \frac{1}{S(\lambda)}\sum_{i=1}^{N} 2^{\frac{1}{\nu}( \W \lambda\, - \, r)_{i}}\W_{i,\cdot}, \label{eq:gradient:G}
\end{align}
where $S(\lambda):=\sum_{i=1}^{N} 2^{\frac{1}{\nu}( \W \lambda\, - \, r)_{i}}$. In order to achieve an $\varepsilon$-precise solution the smoothing factor $\nu$ has to be chosen in the order of $\varepsilon$, according to Theorem~\ref{thm:error:bound:capacity}. A straightforward computation of $\nabla G_{\nu}(\lambda)$ via \eqref{eq:gradient:G} for a small enough $\nu$ is numerically difficult. In the light of  \cite[p.~148]{nesterov05}, we present a numerically stable technique for computing $\nabla G_{\nu}(\lambda)$. By considering the functions $\R^{M}\ni \lambda \mapsto f(\lambda)=\W\lambda -r \in \R^{N}$ and $\R^{N}\ni x \mapsto R_{\nu}(x)=\nu \log \left( \sum_{i=1}^{N}2^{\tfrac{x_{i}}{\nu}} \right)\in\R$ it is clear that $\nabla_{\lambda} R_{\nu}(f(\lambda))=\nabla G_{\nu}(\lambda)$. The basic idea is to define $\bar{f}(\lambda):=\max_{1\leq i \leq N} f_{i}(\lambda)$ and then consider a function $g:\R^{M}\to \R^{N}$ given by $g_{i}(\lambda)=f_{i}(\lambda)-\bar{f}(\lambda)$, such that all components of $g(\lambda)$ are non-positive. One can show that
\begin{equation*} \label{eq:numerical:stability}
\nabla_{\lambda} R_{\nu}(f(\lambda))=\nabla_{\lambda} R_{\nu}(g(\lambda))+ \nabla \bar{f}(\lambda),
\end{equation*}
where the term on the right-hand side can be computed with a small numerical error.
\end{myremark}

\begin{myremark}[Computational complexity]\label{rmk:finite:complexity}
In case of no input cost constraint, one can see by \eqref{eq:gradient:G} that the computational complexity of a single iteration step of Algorithm~\hyperlink{algo:1}{1} is $O(MN)$. Furthermore, according to \eqref{eq:finite:complexity:thm}, the complexity in terms of number of iterations to achieve an $\varepsilon$-precise solution is $O\left( \tfrac{M\sqrt{\log N}}{\varepsilon} \right)$. This finally gives a computational complexity for finding an additive $\varepsilon$-solution of $O(\tfrac{M^2 N \sqrt{\log N}}{\varepsilon})$. 
Let us point out that that the constants in the computational complexity, explicitly given in \eqref{eq:finite:complexity:thm} and in particular the dependency on the parameter $\gamma$, can have a significant impact on the runtime of the proposed approximation method in practice. In the following remark, however, we presents a way to circumvent ill-conditioned channels with very small (or even vanishing) $\gamma$ parameter.
\end{myremark}

\begin{myremark}[Removing Assumption~\ref{ass:channel}]\label{rmk:perturbation}
The continuity of the channel capacity can be used to remove Assumption~\ref{ass:channel}. Let $\W_1\in\R^{N\times M}$ be an channel transition matrix that does not satisfy Assumption~\ref{ass:channel}, i.e., that contains zero entries. Define a new channel matrix $\W_2\in\R^{N\times M}$ by adding a perturbation $\varepsilon>0$ to all zero entries of $\W_1$ and then normalizing the rows. According to \cite{leung09}
\begin{equation} \label{eq:continuity:smith}
|C(\W_1) - C(\W_2)| \leq 3 \norm{\W_1-\W_2}_{\triangleright} \log(M\vee N) + 2 \eta( \norm{\W_1-\W_2}_{\triangleright} ),
\end{equation}
where $\eta(t)=-t\log t$ and the norm $\norm{\cdot}_{\triangleright}$ on $\R^{N\times M}$ is defined as $\norm{A}_{\triangleright}:=\max_{b\in\Delta_N} \norm{b b\transp A}_{\text{tr}}$.
 Since $\W_2$ by construction satisfies Assumption~\ref{ass:channel}, we can run Algorithm~\hyperlink{algo:1}{1} for channel $\W_2$ and as such get the following upper and lower bounds for the capacity of the singular channel $\W_1$
\begin{align*}
C_{\text{LB}}(\W_1) &:= C_{\text{LB}}(\W_2) - 3 \norm{\W_1-\W_2}_{\triangleright} \log(M\vee N) - 2 \eta( \norm{\W_1-\W_2}_{\triangleright} ) \\
C_{\text{UB}}(\W_1) &:= C_{\text{UB}}(\W_2) + 3 \norm{\W_1-\W_2}_{\triangleright} \log(M\vee N) + 2 \eta( \norm{\W_1-\W_2}_{\triangleright} ).
\end{align*}
See in Example~\ref{ex:BEC} how this perturbation method behaves numerically.
\end{myremark}

\subsection{Simulation Results}
This section presents two examples to illustrate the theoretical results developed in the preceding sections and their performance. All the simulations in this section are performed on a 2.3 GHz Intel Core i7 processor with 8 GB RAM.
\begin{myex} \label{ex:one}
Consider a DMC $W$ having a channel matrix $\W \in \R^{N \times M}$ with $N=10000$ and $M=100$, such that $\W_{ij}=\tfrac{V_{ij}}{\sum_{j=1}^M V_{ij}}$, where $V_{ij}$ is chosen i.i.d. uniformly distributed in $[0,1]$ for all $1\leq i \leq N$ and $1 \leq j \leq M$. The parameter $\gamma$ happens to be $1.0742\cdot 10^{-8}$.
Figure~\ref{fig:ex1} and Table~\ref{tab:ex1} compare the performance of the Blahut-Arimoto algorithm with that of Algorithm~\hyperlink{algo:1}{1}, which has the a priori error bound predicted by Theorem~\ref{thm:error:bound:capacity}, namely
\begin{equation*}
C_{\textnormal{UB}}(W)-C_{\textnormal{LB}}(W) \leq  \frac{2M\sqrt{2\log(N)}}{n+1}\left(\log(\gamma^{-1})\vee\tfrac{1}{\ln 2})\right)+\frac{2 M^{2}}{(n+1)^{2}}\left(\log(\gamma^{-1})\vee\tfrac{1}{\ln 2})\right)^{2},
\end{equation*}
where $n$ denotes the number of iterations and $\gamma$ is equal to the smallest entry in the channel matrix $W$. Recall that the Blahut-Arimoto algorithm has an a priori error bound of the form $C(W)-C_{\textnormal{LB}}(W) \leq  \tfrac{\log(N)}{n}$ \cite[Corollary~1]{arimoto72}. Moreover, the new method provides us with an a posteriori error, which the Blahut-Ariomoto algorithm does not. 
%%%%%%%%%%%%%%%%%
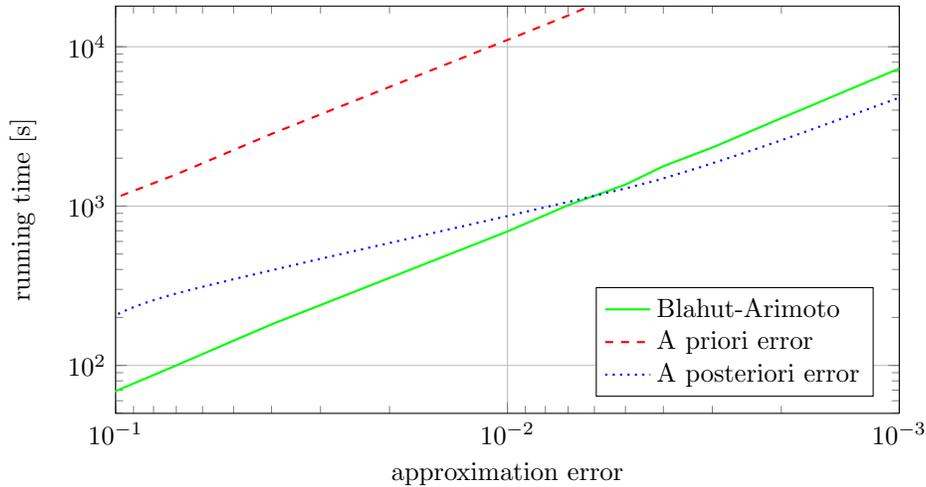
\begin{figure}[!htb]
\centering
  \begin{tikzpicture}
	\begin{axis}[
		height=7cm,
		width=12cm,
		grid=major,
		xlabel=approximation error,
		ylabel=running time \bracket{s},
		xmode=log,
		ymode=log,
		xmin=0.001,
		xmax=0.1,
		ymax=18000,
		ymin=50,
		x dir=reverse,
		legend style={at={(0.790,0.310)},anchor=north,legend cell align=left} 
	]

%%%%%%%%

%%%%%%%%
%upper bound C_UB
	\addplot[green,thick] coordinates {
	     (1,7.4)	
		(0.7,10)
		(0.4,18)	
		(0.1,69)	
		(0.07,100)	
		(0.04,181)
		(0.01,693)			
		(0.007,1013)
		(0.005,1364)
		(0.004,1779)	
		(0.003, 2331)	
		(0.002, 3564)
		(0.001,7306)																				
	};
	\addlegendentry{Blahut-Arimoto}

		\addplot[red,thick,dashed] coordinates {
	     (1,114)	
		(0.7,162)
		(0.4,282)	
		(0.1,1127)	
		(0.07,1581)	
		(0.04,2837)
		(0.01,11036)			
		(0.007,15718)	
		(0.004,28456)	
%		(0.001,0)																		
	};
		\addlegendentry{A priori error}	

		\addplot[blue,thick,smooth,dotted] coordinates {
	     (0.1325,114)	
		(0.1154,162)
		(0.0702,282)	
		(0.0063,1127)	
		(0.0037,1581)	
		(0.0018,2837)
		(0.0003976,11036)		
	};
	\addlegendentry{A posteriori error}

	\end{axis}  
\end{tikzpicture}
\label{fig:ex1}

\caption[]{For Example~\ref{ex:one}, this plot depicts the runtime of Algorithm~\hyperlink{algo:1}{1} with respect to the a priori and a posteriori stopping criterion, as explained in Remark~\ref{remark:stopping}. As a reference, the runtime of the Blahut-Arimoto algorithm is shown.}

\end{figure}
 \begin{table}[!htb]
\centering 
\caption{Some specific simulation points of Example~\ref{ex:one}. }
\label{tab:ex1}

\hspace{13mm} Blahut-Arimoto Algorithm \hspace{25mm} Algorithm~\hyperlink{algo:1}{1}
\vspace{3mm} \phantom{..}
  \begin{tabular}{c@{\hskip 3mm} | c@{\hskip 2mm} c@{\hskip 2mm} c@{\hskip 2mm} c  | c@{\hskip 2mm} c@{\hskip 3mm} c@{\hskip 3mm} c  }
 A priori error \hspace{1mm}  & \hspace{1mm}    1  &$0.1$ & $0.01$ & $0.001$ \hspace{1mm}   &\hspace{1mm}  1  &$0.1$ & $0.01$ & $0.001$   \\ 
 $C_{\textnormal{UB}}(W)$ & \hspace{1mm} --- & --- & ---  & --- \hspace{1mm}   &\hspace{1mm}  0.4419 & 0.4131 & 0.4092 & 0.4088   \\
 $C_{\textnormal{LB}}(W)$ & \hspace{1mm} 0.2930 & 0.4008 & 0.4088  & 0.4088 \hspace{1mm}   & \hspace{1mm}   0.3094 & 0.4069 & 0.4088 & 0.4088 \\
  A posteriori error & \hspace{1mm} --- & --- & ---  & --- \hspace{1mm}  & \hspace{1mm}   0.1325 & 0.0063 & 4.0$\cdot 10^{-4}$ & 3.7$\cdot 10^{-5}$ \\
 Time [s] & \hspace{1mm} 7.4  &69 &693  & 7306 \hspace{1mm} & \hspace{1mm}  114 & 1127 & 11\,036 & 110\,987  \\
  Iterations & \hspace{1mm} 14  &133 &1329  & 13\,288 \hspace{1mm} & \hspace{1mm}  27\,797 & 273\,447 & 2\,729\,860 & 27\,294\,000
  \end{tabular}
\end{table}
\end{myex}

%%%%%%%
%\begin{myex} \label{ex:killBlahut}
%Consider a DMC $W$ having a channel matrix $\W \in \R^{N \times M}$ with $N=100000$ and $M=10$, such that $\W_{ij}=\tfrac{V_{ij}}{\sum_{j=1}^M V_{ij}}$, where $V_{ij}$ is chosen i.i.d. uniformly distributed in $[0,1]$ for all $1\leq i \leq N$ and $1 \leq j \leq M$. The parameter $\gamma$ happens to be $3.7140\cdot 10^{-7}$.
%Table~\ref{tab:killBlahut} shows the performance of Algorithm~\hyperlink{algo:1}{1} with a priori and a posteriori errors given by Theorem~\ref{thm:error:bound:capacity}.
%Note that for the Blahut-Arimoto algorithm we were not able to run a single iteration, as the computation run out of memory.
%\begin{table}[!htb]
%\centering 
%\caption{Some specific simulation points of Example~\ref{ex:killBlahut}. }
%\label{tab:killBlahut}
%
%  \begin{tabular}{c@{\hskip 4mm} | c@{\hskip 4mm} c@{\hskip 4mm} c@{\hskip 4mm} c}
% A priori error \hspace{1mm}  & \hspace{1mm}    1  &$0.1$ & $0.01$ & $0.001$ \\ 
% $C_{\textnormal{UB}}(W)$ & \hspace{1mm} 1.0938 &  1.0543 & 1.0516  & 1.0514  \\
% $C_{\textnormal{LB}}(W)$ & \hspace{1mm} 1.0100 & 1.0512 &  1.0514 &  1.0514 \\
% A posteriori error & \hspace{1mm} 0.0837 & 0.0031 & 2.5$\cdot 10^{-4}$ & 2.4 $\cdot 10^{-5}$  \\ 
% Time [s] & 46 &456 & 4553 & 45\,617 \\
% Iterations &2498  &24\,659 &246\,264 &2\,462\,310 \\
%  \end{tabular}
%\end{table}
%\end{myex}

%%%%%%
\begin{myex} \label{ex:BEC}
Consider a binary erasure channel with erasure probability $\alpha$ whose channel transition matrix is given by 
$\W =  \left(\begin{smallmatrix}
1-\alpha & \alpha & 0 \\ 0 & \alpha & 1-\alpha
\end{smallmatrix} \right)$ and as such does not satisfy Assumption~\ref{ass:channel}. We use the perturbation method introduced in Remark~\ref{rmk:perturbation} to approximate its capacity that is analytically known to be $1-\alpha$ \cite[p.~189]{cover}. Table~\ref{tab:BEC} shows the performance of this perturbation method and Algorithm~\hyperlink{algo:1}{1}.

 \begin{table}[!htb]
\centering 
\caption{Some specific simulation points of Example~\ref{ex:BEC} for $\alpha = 0.4$ }
\label{tab:BEC}

\vspace{3mm} \phantom{..}
  \begin{tabular}{c@{\hskip 4mm} | c@{\hskip 4mm} c@{\hskip 4mm} c@{\hskip 4mm} c}
  \hspace{1mm} Perturbation $\varepsilon$  & \hspace{1mm}    $10^{-4}$  &$10^{-5}$ & $10^{-6}$ & $10^{-7}$  \\ 
A priori error & \hspace{1mm} 0.01 & 0.01 & 0.01  & 0.01 \hspace{1mm}      \\
 $C_{\textnormal{UB}}(W)$ & \hspace{1mm} 0.6024 & 0.6003 & 0.6000  & 0.6000 \hspace{1mm}      \\
 $C_{\textnormal{LB}}(W)$ & \hspace{1mm} 0.5949 & 0.5994 & 0.5999  & 0.6000 \hspace{1mm}    \\
  A posteriori error & \hspace{1mm} 0.0075 & $9.2\cdot 10^{-4}$ & $1.1\cdot 10^{-4}$  & $1.2\cdot 10^{-5}$ \hspace{1mm}   \\
 Time [s] & \hspace{1mm} 0.70  & 0.54 & 0.66 & 0.78 \hspace{1mm}  \\
  Iterations & \hspace{1mm} 9056  & 7402 & 8523  & 9896 \hspace{1mm} 
  \end{tabular}
\end{table}
\end{myex}

%=========================================================================
\section{Channels with Continuous Input and Countable Output Alphabets}   \label{sec:cont:Channels}
%=========================================================================
	In this section we generalize the approximation scheme introduced in Section \ref{sec:classicalCapacity} to memoryless channels with continuous input and countable output alphabets. The class of discrete-time Poisson channels is an example of such channels with particular interest in applications, for example to model direct detection optical communication systems \cite{moser_phd,shamai90,ref:Chen-13}. Consider $\X \subseteq \R$ as the input alphabet set and $\Y = \N$ as the output alphabet set. The channel is described by the conditional probability $W(i|x) := \Prob{Y = i ~|~ X=x}$. 
	Given a channel $W$ and an integer $M$,  we introduce an $M$-\emph{truncated} version of the channel by 
		\begin{align} \label{W_M} 
			W_M(i|x)&:=\left \lbrace \begin{array}{ll}W(i|x) + \frac{1}{M}\sum\limits_{j \ge M}W(j|x) , & i\in \{0,1,\ldots,M-1\} \\ 0, & i\geq M.
				\end{array} \right.
		\end{align}
	$W_M$ can be seen as a channel with input alphabet $\X$ and output alphabet $\{0,1,\ldots,M-1\}$. Figure~\ref{fig:inf:channel:plot} shows a pictorial representation of a channel and its $M$-truncated counterpart. 
The finiteness of the output alphabet of $W_{M}$ allows us to deploy an approximation scheme similar to the one developed in Section \ref{sec:classicalCapacity} to numerically approximate $C(W_{M})$.  

	 	\begin{figure}[htb!] 
 			  \centering 
 			  \includegraphics[scale = 0.82]{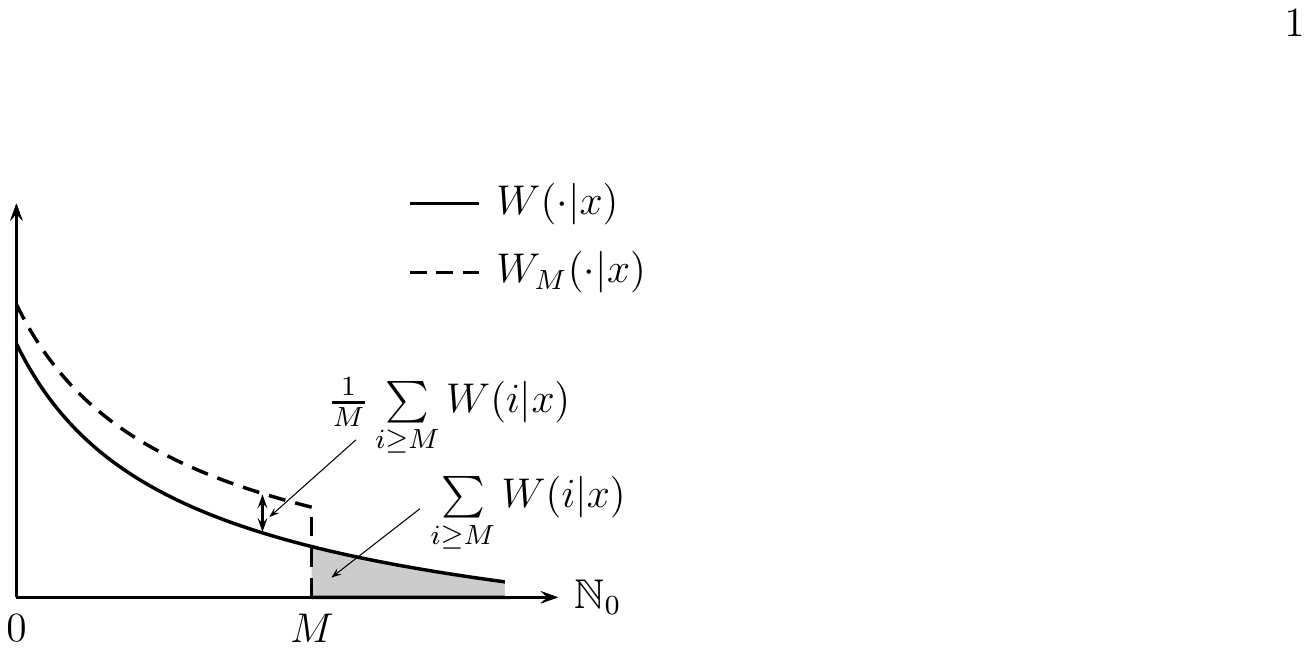} 
 			  \caption{Pictorial representation of the $M$-truncated channel counterpart.} 
 			  \label{fig:inf:channel:plot} 
 	 	\end{figure} 	
The following definition is a key feature of the channel required for the theoretical results developed in this section which, roughly speaking, imposes a certain decay rate for the output distribution uniformly in the input alphabet.
	\begin{mydef} [Polynomial tail]
	\label{def:tail}
		The channel $W$ features a \emph{$k$-ordered polynomial tail} if for $M\in\mathbb{N}_{0}$ and $k\in\Rp$
		\begin{align}\label{R}
			R_k(M) \Let \sum\limits_{i\ge M} \big(\sup_{x \in \X}W(i|x)\big)^k  < \infty.
		\end{align}
	\end{mydef}
	The following assumptions hold throughout this section. 
	\begin{myass} \label{a:channel}  \ 
	\begin{enumerate}[(i)]
		\item The channel $W$ has a $k$-ordered polynomial tail for some $k \in (0,1)$ in the sense of Definition \ref{def:tail}. \label{ass:channel:i}
		\item The mapping $x\mapsto W(i|x)$ is Lipschitz continuous for any $i\in\mathbb{N}_{0}$ with Lipschitz constant $L$.	\label{ass:channel:ii}
	\end{enumerate}
	\end{myass}

Assumption~\ref{a:channel} allows us to relate the capacity of the original channel to that of its truncated counterpart. 
	
%	\begin{mythm} \label{thm:C}
%	Suppose channel $W$ satisfies Assumption~\ref{a:channel}\eqref{ass:channel:i} with the order $k \in (0,1)$. Then, for any $M \in \N$ we have
%		\begin{align*} 
%			\big| C(W) - C(W_{M}) \big| \leq \frac{2\log(\e)}{\e(1-k)} \Big[ M^{1-k}\big(R_1(M)\big)^k + R_k(M) \Big],
%		\end{align*}
%	where $R_k(M)$ is as defined in \eqref{R}. 
%	\end{mythm}
	\begin{mythm} \label{thm:C}
	Suppose channel $W$ satisfies Assumption~\ref{a:channel}\eqref{ass:channel:i} with the order $k \in (0,1)$. Then, for any $M \in \N$ and for any probability distribution $p\in\mathcal{P}(\mathcal{X})$ we have
		\begin{align*} 
			\big|  I(p &, W)  - I(p,W_{M}) \big|  \leq \frac{2\log(\e)}{\e(1-k)} \Big[ M^{1-k}\big(R_1(M)\big)^k + R_k(M) \Big],
		\end{align*}
	where $R_k(M)$ is as defined in \eqref{R}. 
	\end{mythm}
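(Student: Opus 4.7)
The plan is to introduce an auxiliary random variable $Z\sim W_M(\cdot|X)$ representing the truncated output, use the decomposition $I(p,W)=H(Y)-H(Y|X)$, and apply the triangle inequality
\begin{align*}
|I(p,W)-I(p,W_M)| \;\leq\; |H(Y)-H(Z)| \;+\; |H(Y|X)-H(Z|X)|,
\end{align*}
then bound each of the two absolute differences by $\frac{\log(e)}{e(1-k)}\bigl[M^{1-k}R_1(M)^k+R_k(M)\bigr]$. Summing produces the factor $2$ in the statement. Throughout, write $\eta(t):=-t\log t$ on $[0,1]$, $\alpha_x:=\sum_{i\geq M}W(i|x)$ and $\beta:=\int \alpha_x\, p(\mathrm{d}x)=\sum_{i\geq M}(pW)(i)$.

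The analytic workhorse is the elementary inequality $\eta(t)\leq \frac{\log(e)}{e(1-k)}\, t^{k}$ valid for all $t\in[0,1]$ and $k\in(0,1)$, which I would obtain by maximizing $t\mapsto t^{1-k}\log(1/t)$ on $(0,1]$: the stationary point $t^\star=e^{-1/(1-k)}$ gives the maximum value $\log(e)/(e(1-k))$.

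For the marginal difference, the identity $(pW_M)(i)=(pW)(i)+\beta/M$ gives the expansion
\begin{align*}
H(Y)-H(Z) \;=\; \underbrace{\sum_{i\geq M}\eta\bigl((pW)(i)\bigr)}_{=:\,B} \;-\; \underbrace{\sum_{i<M}\Bigl[\eta\bigl((pW)(i)+\beta/M\bigr)-\eta\bigl((pW)(i)\bigr)\Bigr]}_{=:\,A}.
\end{align*}
From the key inequality together with $(pW)(i)\leq \sup_x W(i|x)$ one reads off $B\leq \frac{\log(e)}{e(1-k)}R_k(M)$. For the upper bound on $A$, subadditivity of $\eta$ on $[0,1]$ (a consequence of $\log(a+b)\geq \log(a)\vee\log(b)$) yields $A\leq M\,\eta(\beta/M)\leq \frac{\log(e)}{e(1-k)}M^{1-k}\beta^{k}\leq \frac{\log(e)}{e(1-k)}M^{1-k}R_1(M)^{k}$, using $\beta\leq R_1(M)$. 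The matching lower bound on $A$ is the main obstacle: I would represent the $Z$-distribution on $\{0,\ldots,M-1\}$ as the convex combination $(1-\beta)\tilde q+\beta\,\mathrm{Unif}_{\{0,\ldots,M-1\}}$ with $\tilde q_i=(pW)(i)/(1-\beta)$, apply concavity of Shannon entropy to obtain $H(Z)\geq (1-\beta)H(\tilde q)+\beta\log M$, and combine with the identity $\sum_{i<M}\eta((pW)(i))=(1-\beta)H(\tilde q)+\eta(1-\beta)$ to conclude $A\geq \beta\log M-\eta(1-\beta)\geq -\eta(1-\beta)$. A short case split, treating $\beta\leq 1/2$ via the symmetry $\eta(1-\beta)\leq \eta(\beta)$ on $[0,1/2]$ and $\beta>1/2$ via $1-\beta<\beta$, together with $\beta\leq R_1(M)$ then gives $\eta(1-\beta)\leq \frac{\log(e)}{e(1-k)}R_1(M)^{k}$. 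Hence $|A|\leq \frac{\log(e)}{e(1-k)}M^{1-k}R_1(M)^{k}$, and $|H(Y)-H(Z)|\leq B+|A|$ meets the target.

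The conditional difference $|H(Y|X)-H(Z|X)|=|C-D|$ is treated analogously: the same decomposition is applied pointwise in $x$ with $\beta$ replaced by $\alpha_x$, and after the pointwise bounds one invokes Jensen's inequality for the concave map $t\mapsto t^{k}$ to pass $\int\alpha_x^{k}\,p(\mathrm{d}x)\leq \beta^{k}\leq R_1(M)^k$. Adding the two entropy-difference bounds produces the factor $2$ in the conclusion. The hardest step to get right is the mixture/concavity lower bound on $A$ (and its pointwise analogue for $D$), which cannot be extracted from subadditivity alone and requires the case analysis that relates $\eta(1-\beta)$ to $R_1(M)^{k}$.
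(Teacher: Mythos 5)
Your proposal is correct, and it shares the paper's overall skeleton---split $|I(p,W)-I(p,W_M)|$ into the output-entropy and the conditional-entropy differences, bound each by $\tfrac{\log(\e)}{\e(1-k)}\big[M^{1-k}R_1(M)^k+R_k(M)\big]$, and let the triangle inequality produce the factor $2$---but the technical core is genuinely different. The paper proves a single two-sided, \emph{termwise} perturbation estimate (Lemma~\ref{lem:log}), $|(p+x)\log(p+x)-p\log p|\le \tfrac{\log(\e)}{\e(1-k)}x^k$, obtained from monotonicity in the base point $p$, and applies it symbol-by-symbol both to the head terms $i<M$ (perturbation $x=\tfrac{1}{M}\sum_{j\ge M}W(j|\cdot)$) and to the tail terms $i\ge M$ (base point $0$); you instead use only the $p=0$ special case $\eta(t)\le\tfrac{\log(\e)}{\e(1-k)}t^k$ and control the aggregate head contribution $A$ asymmetrically: subadditivity of $\eta$ gives $A\le M\,\eta(\beta/M)$, while the lower bound comes from writing the truncated output law as the mixture $(1-\beta)\tilde q+\beta\,\mathrm{Unif}$, invoking concavity of entropy to get $A\ge \beta\log M-\eta(1-\beta)$, and then your case split showing $\eta(1-\beta)\le\tfrac{\log(\e)}{\e(1-k)}\beta^k$; the pointwise analogue plus Jensen for $t\mapsto t^k$ handles the conditional term. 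All of these steps check out (the $\beta=1$ degenerate case is trivial), so the argument is sound and yields the stated constant. As for what each route buys: the paper's termwise lemma is shorter once stated and treats both directions and both entropy terms uniformly, whereas your aggregate argument requires more work for the decrease direction; on the other hand, your estimate $\eta(1-\beta)\le\tfrac{\log(\e)}{\e(1-k)}\beta^k$ is precisely the ingredient that the terse endpoint check in the proof of Lemma~\ref{lem:log} implicitly relies on (for fixed $x$ the binding endpoint is $p=1-x$, where the difference equals $\eta(1-x)$, rather than $p=1$), and your mixture bound $A\ge\beta\log M-\eta(1-\beta)$ is marginally sharper than summing $M$ copies of the termwise bound.
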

\begin{proof}	
See Appendix~\ref{ap:cutting:thm}.
\end{proof}
Note that Theorem~\ref{thm:C} directly implies an upper bound to the capacity since
	\begin{align*}
		|C(W)-C(W_M)| =  \big| \sup\limits_{p\in \meas(\X)} I(p,W)- \sup \limits_{p \in \meas(\X)} I(p,W_{M}) \big| \leq \sup \limits_{p \in \meas(\X)} \big| I(p,W) - I(p,W_{M}) \big|.
	\end{align*} 
We consider two types of input cost constraints: a peak-power constraint $\Prob{X\in \A}=1$ for a compact set $\A\subseteq\mathcal{X}$ and an average-power constraint $\E{s(X)}\leq S$ for $S\in\Rp$ and a continuous function $s$ on $\mathcal{X}$.  
The primal capacity problem for the channel $W_{M}$ is given by
\begin{equation} \label{eq:inf:channel:primal}
C_{\A,S}(W_{M})= \left\{
\begin{array}{lll}
			&\sup\limits_{p} 		& \I{p}{W_{M}} \\
			&\st					& \E{s(X)}\leq S\\
			& 						& p\in \mathcal{P}(\A),
	\end{array} \right.
\end{equation}
where $\mathcal{P}(\A)$ denotes the space of all probability distributions supported on $\A$, cf.~\eqref{eq:cont_DMC_capacity_const}. Our method always requires a peak-power constraint, whereas the average-power constraint is optimal.
The following proposition allows us to restrict the optimization variables from probability distributions to probability densites.
\begin{myprop} \label{lem:density:dense}
The optimization problem \eqref{eq:inf:channel:primal} is equivalent to
\begin{equation*} 
C_{\A,S}(W_{M})= \left\{
\begin{array}{lll}
			&\sup\limits_{p} 		& \I{p}{W_{M}} \\
			&\st					& \E{s(X)}\leq S\\
			& 						& p\in \mathcal{D}(\A),
	\end{array} \right.
\end{equation*}
where $\mathcal{D}(\A)$ is the set of probability densities functions,~i.e., $\mathcal{D}(\A):=\{ f\in\Lp{1}(\A) \, : \, f\geq 0, \, \int_{\A}f(x)\drv x = 1 \}$. 
\end{myprop}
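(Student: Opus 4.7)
The plan is to show $\sup_{\mathcal{D}(\A)} \ge \sup_{\mathcal{P}(\A)}$, since the reverse inequality is immediate upon identifying each density with its associated absolutely continuous measure. First, I will establish that the objective and constraint functionals are continuous on $\mathcal{P}(\A)$ in the weak topology. Because the output alphabet of $W_{M}$ has only $M$ letters, the mutual information admits the decomposition
\begin{equation*}
\I{p}{W_{M}} = -\int_{\A} r_{M}(x)\, p(\drv x) + H(pW_{M}), \qquad r_{M}(x) := -\sum_{j=0}^{M-1} W_{M}(j|x) \log W_{M}(j|x).
\end{equation*}
Assumption~\ref{a:channel}\eqref{ass:channel:ii} together with compactness of $\A$ implies that each $W_{M}(j|\cdot)$, and hence $r_{M}$, is continuous and bounded on $\A$, so $p\mapsto \int r_{M}\,\drv p$ is weakly continuous and the vector map $p\mapsto pW_{M}\in\Delta_{M}$ is weakly continuous componentwise. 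Composing with the continuous entropy $H$ on the simplex $\Delta_{M}$ yields weak continuity of $\I{\cdot}{W_{M}}$, and $p\mapsto \int s\,\drv p$ is weakly continuous since $s$ is continuous on the compact set $\A$.

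The feasible set $\{p\in\mathcal{P}(\A) : \int s\,\drv p \le S\}$ is then weakly closed, and is contained in the weakly compact set $\mathcal{P}(\A)$. By weak continuity of $\I{\cdot}{W_{M}}$ the supremum in \eqref{eq:inf:channel:primal} is attained at some $p^{\star}\in\mathcal{P}(\A)$. If either the feasible set is empty or $s\equiv S$ on $\A$ the claim is trivial, so assume there exists $x_{0}\in\A$ with $s(x_{0})<S$. To move $p^{\star}$ into the strict interior of the cost constraint, set, for $\eta\in(0,1)$,
\begin{equation*}
p_{\eta} := (1-\eta)\, p^{\star} + \eta\, \delta_{x_{0}},
\end{equation*}
so that $\int s\,\drv p_{\eta} \le S-\eta\,(S-s(x_{0})) < S$ and $p_{\eta}\to p^{\star}$ weakly, which by continuity gives $\I{p_{\eta}}{W_{M}}\to \I{p^{\star}}{W_{M}}$ as $\eta\to 0$.

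Finally, I mollify $p_{\eta}$ to obtain a feasible density. Let $\phi_{\varepsilon}$ be a smooth nonnegative kernel on $\R$ with support in $[-\varepsilon,\varepsilon]$ and $\int\phi_{\varepsilon}\,\drv x=1$, and let $T_{\varepsilon}:\A+[-\varepsilon,\varepsilon]\to\A$ be an affine contraction converging to the identity as $\varepsilon\to 0$. Denote by $p_{\eta,\varepsilon}$ the pushforward of the convolution $p_{\eta}\ast\phi_{\varepsilon}$ under $T_{\varepsilon}$; by construction $p_{\eta,\varepsilon}\in\mathcal{D}(\A)$ and $p_{\eta,\varepsilon}\to p_{\eta}$ weakly as $\varepsilon\to 0$, so both $\I{p_{\eta,\varepsilon}}{W_{M}}\to \I{p_{\eta}}{W_{M}}$ and $\int s\,\drv p_{\eta,\varepsilon}\to \int s\,\drv p_{\eta} < S$. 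Choosing $\varepsilon=\varepsilon(\eta)$ small enough ensures $\int s\,\drv p_{\eta,\varepsilon(\eta)}\le S$, and the diagonal sequence $p_{\eta_{n},\varepsilon(\eta_{n})}$ with $\eta_{n}\to 0$ lies in $\mathcal{D}(\A)$, is feasible for the restricted problem, and satisfies $\I{p_{\eta_{n},\varepsilon(\eta_{n})}}{W_{M}}\to \I{p^{\star}}{W_{M}} = C_{\A,S}(W_{M})$, completing the proof. The main technical obstacle is coordinating three competing requirements in the approximation, namely keeping the support of the approximants inside $\A$ after convolution, preserving the average-power constraint, and transferring the mutual-information value through the limit; the \emph{perturb-then-mollify} construction, in which $p^{\star}$ is first pushed strictly inside the cost constraint and then smoothed and rescaled by $T_{\varepsilon}$, resolves all three simultaneously.
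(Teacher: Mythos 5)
Your proof is correct for the intended setting and takes a genuinely different route from the paper. The paper's proof cites the weak \emph{lower semicontinuity} of $p\mapsto I(p,W_M)$ (a result of Verd\'u et al.) and then shows that the absolutely continuous measures are weakly dense in $\mathcal{P}(\A)$ by passing through finitely supported measures and the Lebesgue differentiation theorem, finishing with Radon--Nikod\'ym; notably, it does not explicitly track the constraint $\E{s(X)}\le S$ along the approximating sequence. You instead exploit the finiteness of the truncated output alphabet to prove full weak \emph{continuity} of $I(\cdot,W_M)$ directly from the decomposition $-\int r_M\,\drv p + H(pW_M)$, obtain an optimizer by weak compactness, and then use a ``perturb-then-mollify'' construction (mix with $\delta_{x_0}$ to get strict feasibility, convolve, rescale by an affine contraction) to produce feasible densities converging in value. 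What your approach buys is self-containedness (no citation for semicontinuity) and an explicit treatment of the average-power constraint, which the paper's proof silently glosses over; what it costs is a mild extra structural requirement, since your affine contraction $T_\varepsilon$ implicitly needs $\A$ to be an interval (or at least convex), whereas the paper's density argument only needs $\A$ to have positive Lebesgue measure around its points --- in the paper's actual use case $\A=[0,A]$ both are satisfied. Two small caveats: your case split ``feasible set empty or $s\equiv S$'' does not literally cover the degenerate situation where $s\ge S$ on $\A$ with equality only on a Lebesgue-null set (there the proposition itself is delicate, and the paper's proof fares no better), and the attainment of the supremum you invoke is convenient but not essential, since the same argument runs through near-optimizers.
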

\begin{proof} 
See Appendix~\ref{app:proof:density:restriction}.
\end{proof}
We consider the pair of vector spaces
$(\Lp{1}(\A),\Lp{\infty}(\A))$ together with the bilinear form
\begin{align*}
\inprod{f}{g}:=\int_{\mathcal{X}}f(x)g(x)\drv x.
\end{align*}
In the light of \cite[Theorem~243G]{ref:fremlin-03} this is a dual pair of vector spaces; we refer to \cite[Section~3]{anderson87} for the details of the definition of dual pairs of vector spaces. 
Considering the standard inner product as a bilinear form on the dual pair $(\R^{M},\R^{M})$, we define the linear operator $\WW: \R^{M}\to \Lp{\infty}(\A)$ and its adjoint operator $\WW^{\star}:\Lp{1}(\A)\to\R^{M}$, given by
\begin{align*}
\WW \lambda (x) := \sum_{i=1}^{M}W_{M}(i-1|x)\lambda_{i}, \qquad  (\WW^{\star}p)_{i} := \int_{X}W_{M}(i-1|x)p(x)\drv x.
\end{align*}
Let $S_{\max}:=\inf_{p\in\mathcal{D}(\A)}\{ \inprod{p}{s} \ : \ I(p,W_{M}) = \sup_{q\in\mathcal{D}(\A)} I(q,W_{M})\}$. Following similar lines as in Lemma~\ref{lem:equivalent:primal:problem}, one can deduce that in problem \eqref{eq:inf:channel:primal} the inequality input constraint can be replaced by equality (resp. removed) is $S<S_{\max}$ (resp. $S\geq S_{\max}$). That is, in view of  Proposition~\ref{lem:density:dense}, Lemma~\ref{lem:equivalent:primal:problem} and the discussion there, problem \eqref{eq:inf:channel:primal} (under Assumption~\ref{ass:channel:Poisson}, that we require later) is equivalent to
\begin{equation} \label{opt:primal:equivalent:Poisson}
 	\mathsf{P}: \quad \left\{ \begin{array}{lll}
			&\sup\limits_{p,q} 		&- \inprod{p}{r} + H(q) \\
			&\st					& \WW^{\star}p = q\\
			&					& \inprod{p}{s} = S\\
		%	&					& \inprod{p}{\indic{[0,A]}}=1 \\
			& 					& p\in\mathcal{D}(\A), \ q\in\Delta_{M},
	\end{array} \right.
\end{equation}
where $r(\cdot): = -\sum_{j=0}^{M-1}W_{M}(j|\cdot)\log(W_{M}(j|\cdot))$ is an element in $\Lp{\infty}(\A)$ by Assumption~\ref{a:channel}\eqref{ass:channel:ii}. For the rest of the section we restrict attention to \eqref{opt:primal:equivalent:Poisson}, since unconstrained problem can be solved in a similar way.
We call \eqref{opt:primal:equivalent:Poisson} the primal program.
Thanks to the dual vector space framework, the Lagrange dual program of $\mathsf{P}$ is given by
\begin{align} \label{eq:Dual:Program:Poisson}\mathsf{D}: \quad\left\{ \begin{array}{ll}
	\underset{\lambda}{\inf} 		&G(\lambda) + F(\lambda) \\
			\text{s.t. } 					& \lambda\in \R^{M},
	\end{array}\right.
\end{align}
where 
\begin{align*} 
G(\lambda)= \left\{ \begin{array}{ll}
			\underset{p}{\sup} 		& \inprod{p}{\WW \lambda}-\inprod{p}{r} \\
			\text{s.t. } 				& \inprod{p}{s} =  S \\
								& p\in \mathcal{D}(\A)
	\end{array} \right. \text{ and}
	\qquad 
	F(\lambda)= \left\{ \begin{array}{ll}
			\underset{q}{\max} 		&H(q)-\lambda\transp q \\
			\text{s.t. } 				& q\in\Delta_{M}.
	\end{array}\right. 
\end{align*}
\begin{mylem} \label{lem:strong:duality:poisson}
Strong duality holds between \eqref{opt:primal:equivalent:Poisson} and \eqref{eq:Dual:Program:Poisson}.
\end{mylem}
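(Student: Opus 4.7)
The plan is to establish strong duality via Sion's minimax theorem, applied after reformulating the primal on the weak-$*$ compact space of probability measures on $\A$. First, weak duality $\mathsf{P}\le\mathsf{D}$ is immediate from the Lagrangian: for any $\lambda\in\R^{M}$ and any primal feasible $(p,q)$ of \eqref{opt:primal:equivalent:Poisson},
\[
-\inprod{p}{r} + H(q) = -\inprod{p}{r} + H(q) + \inprod{\lambda}{\WW^{\star}p - q} \le G(\lambda) + F(\lambda),
\]
using $\WW^{\star}p=q$ and maximizing separately over $p$ and $q$. Passing to the supremum on the left and the infimum on the right yields $\mathsf{P}\le\mathsf{D}$.

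For the reverse inequality, the plan is to exploit the Fenchel biconjugation identity $H(q) = \inf_{\lambda\in\R^{M}}\{F(\lambda) + \inprod{\lambda}{q}\}$ for $q\in\Delta_{M}$, which follows from \eqref{analytical:solution:F} and the Fenchel--Moreau theorem applied to the proper, lower semi-continuous, convex function $-H$ extended by $+\infty$ outside $\Delta_{M}$. Eliminating $q$ through $q=\WW^{\star}p$ and invoking Proposition~\ref{lem:density:dense} to replace densities by probability measures, the primal value rewrites as
\[
\mathsf{P} = \sup_{\mu\in K}\,\inf_{\lambda\in\R^{M}}\bigl[F(\lambda) + \inprod{\mu}{\WW\lambda - r}\bigr],
\]
where $K:=\{\mu\in\mathcal{P}(\A) : \inprod{\mu}{s}=S\}$ and the bilinear form is extended from densities to measures via $\inprod{\mu}{f}:=\int_{\A}f\,\drv\mu$. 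Sion's minimax theorem is then applied to swap the sup and inf; after the swap, the inner supremum over $\mu$ produces precisely $G(\lambda)$, yielding $\mathsf{P} = \inf_{\lambda}\{F(\lambda)+G(\lambda)\} = \mathsf{D}$.

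The main obstacle is verifying Sion's hypotheses in this infinite-dimensional setting. The set $K$ is weak-$*$ compact and convex: $\mathcal{P}(\A)$ is weak-$*$ compact by Banach--Alaoglu (since $\A$ is compact, so $\mathcal{P}(\A)\subset C(\A)^{*}$ is a bounded weak-$*$ closed subset), and $s\in C(\A)$ makes $\mu\mapsto\inprod{\mu}{s}$ weak-$*$ continuous, keeping $K$ closed. The integrand $(\mu,\lambda)\mapsto F(\lambda)+\inprod{\mu}{\WW\lambda-r}$ is affine, hence concave and weak-$*$ upper semi-continuous in $\mu$, provided $\WW\lambda$ and $r$ lie in $C(\A)$. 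This continuity is obtained from Assumption~\ref{a:channel}\eqref{ass:channel:ii}: each $x\mapsto W_{M}(i|x)$ is continuous (the formula \eqref{W_M} expresses it as a combination of the Lipschitz functions $W(j|\cdot)$ with $j<M$), so the finite sums defining $\WW\lambda$ and $r$ (with $y\mapsto-y\log y$ continuously extended to $[0,1]$) are continuous. In $\lambda$ the integrand is convex and continuous since $F$ is the log-sum-exp in \eqref{analytical:solution:F} and the remainder is linear. With these properties in place, Sion's theorem applies and delivers the claimed strong duality.
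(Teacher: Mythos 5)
Your route---weak duality plus a Sion minimax exchange on a weak-$*$ compact set of probability measures, with $H$ recovered through Fenchel biconjugation---is genuinely different from the paper's argument, which simply observes that the dualized constraint $\WW^{\star}p=q$ is an affine equality with finite-dimensional range and invokes the strong-duality results of \cite{mitter08}. The weak-duality inequality, the identity $H(q)=\inf_{\lambda\in\R^{M}}\{F(\lambda)+\inprod{\lambda}{q}\}$ on $\Delta_{M}$, and the verification of Sion's hypotheses (weak-$*$ compactness of $K$, continuity of $\WW\lambda$ and $r$ on the compact set $\A$, convexity and continuity in $\lambda$) are all sound.

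The gap lies in the two places where you silently trade densities for measures. Both \eqref{opt:primal:equivalent:Poisson} and the function $G$ in \eqref{eq:Dual:Program:Poisson} are defined as suprema over $p\in\mathcal{D}(\A)$ subject to the \emph{equality} constraint $\inprod{p}{s}=S$, whereas Proposition~\ref{lem:density:dense} concerns the inequality-constrained problem \eqref{eq:inf:channel:primal} with objective $I(\cdot,W_{M})$: its density argument produces densities converging weakly to a given $\mu$, for which $\inprod{p_{n}}{s}\to S$ but in general $\inprod{p_{n}}{s}\neq S$, so they are not feasible here. Concretely, writing $V_{\mathcal{P}}$ for your measure-relaxed primal value and $G_{\mathcal{P}}(\lambda):=\sup_{\mu\in K}\inprod{\mu}{\WW\lambda-r}$, weak duality gives $\mathsf{P}\leq\mathsf{D}$ and Sion gives $V_{\mathcal{P}}=\inf_{\lambda}\{F(\lambda)+G_{\mathcal{P}}(\lambda)\}\geq\mathsf{D}$, so the inequality you still must supply is $V_{\mathcal{P}}\leq\mathsf{P}$ (and you additionally assert $G_{\mathcal{P}}=G$); neither is justified by Proposition~\ref{lem:density:dense}, and in degenerate cases both fail: if $S=\min_{\A}s$ is attained at a single point $x_{0}$, then $K=\{\delta_{x_{0}}\}$ while no density satisfies $\inprod{p}{s}=S$, so the density primal is infeasible and $G(\lambda)=-\infty<G_{\mathcal{P}}(\lambda)$. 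Under the implicit non-degeneracy $\min_{\A}s<S<\max_{\A}s$ (which the paper also uses tacitly, e.g.\ when dividing by $\underline{s}$ and $\overline{s}$ in Lemma~\ref{lem:iota:new}) the exchange can be repaired---approximate $\mu\in K$ weakly by densities, mix in a vanishing fraction of a density concentrated where $s<S$ (or $s>S$) to restore the equality exactly, and then use weak lower semicontinuity of $I(\cdot,W_{M})$ together with its concavity---but this feasibility-preserving correction is precisely the content your proposal is missing, and without it the asserted chain does not establish $\mathsf{D}\leq\mathsf{P}$ for the problems as stated.
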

\begin{proof}
Note that the dualized constraint is a linear equality constraint. Therefore the conditions of (1) in \cite[Theorem~5]{mitter08} holds and as such strong duality follows by \cite[Theorem~4]{mitter08}.
\end{proof}
In the remainder of this article we impose the following assumption on the channel.
\begin{myass} \label{ass:channel:Poisson}
$\gamma_{M}:=\min\limits_{y\in\{0,1,\hdots,M-1\}}\min\limits_{x\in\A}W_{M}(y|x)>0$
\end{myass}
In case $\sum_{j\geq M} W(j|x) > 0$ for all $x$, Assumption~\ref{ass:channel:Poisson} holds according to \eqref{W_M} and a lower bound can be given by $\gamma_{M} \geq \tfrac{1}{M} \min_{x} \sum_{j\geq M} W(j|x)$.
Under Assumption~\ref{ass:channel:Poisson} we can show that we can again assume without loss of generality that $\lambda$ takes values in a compact set.
\begin{mylem} \label{lem:compact:set:Poisson}
Under Assumption~\ref{ass:channel:Poisson}, the dual program \eqref{eq:Dual:Program:Poisson} is equivalent to 
\begin{align*}
\left\{ \begin{array}{ll}
			\underset{\lambda}{\min} 		&G(\lambda) + F(\lambda) \\
			\textnormal{s.t. } 				& \lambda\in Q,
	\end{array}\right. 
\end{align*}
where $Q:= \left\{ \lambda\in\R^{M} \ : \ \norm{\lambda}_{2}\leq M \left( \log(\gamma_{M}^{-1}) \vee \tfrac{1}{\ln 2} \right) \right\}$.
\end{mylem}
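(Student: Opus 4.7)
The plan is to mirror the argument for Lemma~\ref{lem:compact:set} from the DMC setting, observing that only two structural features of the channel were used there: the row stochasticity of the transition probabilities and a uniform positive lower bound on them. Both features carry over to the truncated channel $W_M$: the rows $W_M(\cdot|x)$ are probability mass functions on $\{0,1,\ldots,M-1\}$ by construction \eqref{W_M}, and Assumption~\ref{ass:channel:Poisson} supplies the uniform lower bound $\gamma_M>0$. Hence the proof strategy transfers with $\gamma$ replaced by $\gamma_M$ and summations over the continuous input alphabet replaced by integrals against $p\in\mathcal{D}(\A)$.

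Concretely, I would first exploit a \emph{shift invariance} of the dual objective. For any $c\in\R$,
\begin{equation*}
F(\lambda+c\mathbf{1}) = \log\Bigl(\sum_{j=1}^M 2^{-\lambda_j-c}\Bigr) = F(\lambda)-c,
\end{equation*}
while $\WW(\lambda+c\mathbf{1})(x) = \WW\lambda(x)+c$ since $\sum_{i=1}^M W_M(i-1|x)=1$, and therefore $\inprod{p}{\WW(\lambda+c\mathbf{1})} = \inprod{p}{\WW\lambda} + c$ for every $p\in\mathcal{D}(\A)$. Thus $G(\lambda+c\mathbf{1}) = G(\lambda)+c$, and $F+G$ is invariant under such shifts. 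By strong duality (Lemma~\ref{lem:strong:duality:poisson}) the infimum in \eqref{eq:Dual:Program:Poisson} is attained at some $\lambda^\star$, and using this invariance I can choose the shift so that the Lagrange multiplier $\mu^\star$ appearing in the analytical solution \eqref{analytical:solution:F} of $F(\lambda^\star)$ equals zero, i.e., the induced output distribution satisfies $q_j^\star = 2^{-\lambda_j^\star}$.

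Second, I would derive componentwise bounds on this normalized $\lambda^\star$. By complementary slackness the $F$--optimizer $q^\star$ coincides with $\WW^\star p^\star$ where $p^\star$ is a primal optimizer, so
\begin{equation*}
q_j^\star = \int_{\A} W_M(j-1|x)\,p^\star(x)\,\drv x \in [\gamma_M,1]
\end{equation*}
by Assumption~\ref{ass:channel:Poisson} and $\int_\A p^\star=1$. Inverting $q_j^\star = 2^{-\lambda_j^\star}$ gives $0\leq \lambda_j^\star \leq \log(\gamma_M^{-1})$, so $\|\lambda^\star\|_\infty \leq \log(\gamma_M^{-1})\vee \tfrac{1}{\ln 2}$ and therefore $\|\lambda^\star\|_2 \leq \sqrt{M}(\log(\gamma_M^{-1})\vee \tfrac{1}{\ln 2}) \leq M(\log(\gamma_M^{-1})\vee \tfrac{1}{\ln 2})$, which places $\lambda^\star$ in $Q$.

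The only step that requires genuine care, and which I expect to be the main obstacle, is justifying that a primal optimizer $p^\star$ exists and that the KKT/complementary-slackness identification $q^\star=\WW^\star p^\star$ is valid in the continuous-input setting, where the primal feasible set $\mathcal{D}(\A)$ is infinite-dimensional. Existence can be obtained from the weak-$\star$ compactness of the peak-power-constrained set and weak-$\star$ upper semicontinuity of the mutual information (exploiting $\A$ compact and $W_M$ Lipschitz continuous in $x$ via Assumption~\ref{a:channel}\eqref{ass:channel:ii}); the KKT characterization then follows from the strong duality result invoked in Lemma~\ref{lem:strong:duality:poisson}. With these ingredients the argument reduces to the same two short steps used for the discrete lemma.
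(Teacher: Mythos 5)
Your route is genuinely different from the paper's: the paper proves this lemma by the same perturbation/exact-penalty argument used for Lemma~\ref{lem:compact:set} (an $\ell_\infty$-slack $\varepsilon$ penalized by $\beta\varepsilon$ in the primal, strong duality with the $\ell_1$-ball-constrained dual, and a Taylor bound on the output entropy showing the optimal slack is zero once $\beta>M(\log\gamma_M^{-1}\vee\tfrac{1}{\ln 2})$), an argument that never needs a primal or dual optimizer to exist. You instead normalize a dual optimizer by the shift invariance of $F+G$ and read off componentwise bounds from the identification of the unique $F$-maximizer with the induced output distribution $\WW^{\star}p^{\star}\in[\gamma_M,1]^M$. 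The shift-invariance computation and the bound $0\le\lambda_j^\star\le\log(\gamma_M^{-1})$ are correct and would even give the sharper radius $\sqrt{M}(\log(\gamma_M^{-1})\vee\tfrac{1}{\ln 2})$.

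However, as written there is a genuine gap exactly at the point you flag. The primal \eqref{opt:primal:equivalent:Poisson} is a supremum over densities $\mathcal{D}(\A)$, and this supremum is in general \emph{not} attained: the capacity-achieving input for a peak-power-constrained channel with finite output alphabet is discrete (cf.\ the discussion around Remark~\ref{rmk:stabilization:optimizer} and \cite{shamai90}), hence not in $\mathcal{D}(\A)$. Your proposed remedy (weak-$\star$ compactness of $\mathcal{P}(\A)$ and semicontinuity of the mutual information) produces an optimizer in $\mathcal{P}(\A)$, not in $\mathcal{D}(\A)$, so the complementary-slackness identity $q^\star=\WW^{\star}p^{\star}$ cannot be invoked directly with the duality of Lemma~\ref{lem:strong:duality:poisson}, which is formulated for the density problem; you would have to re-derive the duality over measures (checking that $G$ has the same value over $\mathcal{P}(\A)$ and $\mathcal{D}(\A)$, via the denseness argument of Proposition~\ref{lem:density:dense} and the continuity of $\WW\lambda-r$). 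A cleaner patch within your own framework is to avoid primal attainment altogether: take a maximizing sequence $p_n\in\mathcal{D}(\A)$, set $q_n:=\WW^{\star}p_n\in[\gamma_M,1]^M$, note that $F(\lambda^\star)+G(\lambda^\star)\ge \bigl(H(q_n)-\langle\lambda^\star,q_n\rangle\bigr)+\langle p_n,\WW\lambda^\star-r\rangle = H(q_n)-\langle p_n,r\rangle\to C$ forces $q_n$ to be asymptotically optimal for $F(\lambda^\star)$, and use uniqueness and continuity of the $F$-maximizer to conclude $q^\star=\lim q_n\ge\gamma_M$ componentwise. Separately, your opening claim that the dual infimum is attained ``by strong duality'' is not automatic: strong duality as stated in Lemma~\ref{lem:strong:duality:poisson} asserts a zero gap, and you need the cited duality theorem (or a finiteness argument for the perturbation function of the finite-dimensional coupling constraint) to guarantee dual attainment; the paper's penalty argument sidesteps this as well.
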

\begin{proof}
The proof follows the same lines as in the proof of Lemma~\ref{lem:compact:set}.
\end{proof}

Note that $F(\lambda)$ is the same as in Section~\ref{sec:classicalCapacity} and therefore given by \eqref{analytical:solution:F} and its gradient by \eqref{eq:gradient:F}.
As in Section~\ref{sec:classicalCapacity}, we consider the smooth approximation
\begin{align} \label{eq:Gnu:cts}
G_{\nu}(\lambda)= \left\{ \begin{array}{ll}
			\underset{p}{\sup} 		& \inprod{p}{\WW \lambda}-\inprod{p}{r} + \nu \Hdiff{p} - \nu \log(\rho) \\
			\text{s.t. } 				& \inprod{p}{s} =  S \\
				 				& p\in\mathcal{D}(\A),
	\end{array} \right. 
\end{align}
with smoothing parameter $\nu\in\R_{>0}$ and $\rho$ denoting the Lebesgue measure of $\A$. 
To analyze the properties of $G_{\nu}(\lambda)$ we need one more auxiliary lemma. 
\begin{mylem}\label{lem:strong:convexity:cts}
The function $\mathcal{D}(\A)\ni p \mapsto  -\Hdiff{p}+\log(\rho)\in\Rp$ is strongly convex with convexity parameter $\sigma = 1$.
\end{mylem}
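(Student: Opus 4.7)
The plan is to recognize the functional as a relative entropy against the uniform density on $\A$, and then reduce strong convexity to Pinsker's inequality via a standard Jensen--Shannon type decomposition.

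First, I would observe that if $u_{\A}(x) \equiv 1/\rho$ denotes the uniform probability density on $\A$, then by a direct manipulation
\begin{equation*}
-\Hdiff{p} + \log(\rho) \;=\; \int_{\A} p(x) \log\!\bigl(p(x)\,\rho\bigr)\drv x \;=\; \D{p}{u_{\A}}.
\end{equation*}
Hence the claim reduces to showing that $p \mapsto \D{p}{u_{\A}}$ is strongly convex with parameter $1$ on $\mathcal{D}(\A)$ with respect to the $\Lp{1}(\A)$ norm (which is the norm naturally paired with $\Lp{\infty}(\A)$ via the bilinear form used in this section).

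Next, for $p,q \in \mathcal{D}(\A)$ and $\alpha\in[0,1]$, set $\bar p := \alpha p + (1-\alpha) q$. I would verify the identity
\begin{equation*}
\alpha\,\D{p}{u_{\A}} + (1-\alpha)\,\D{q}{u_{\A}} - \D{\bar p}{u_{\A}} \;=\; \alpha\,\D{p}{\bar p} + (1-\alpha)\,\D{q}{\bar p}.
\end{equation*}
This is a direct expansion: the $\log u_{\A}$ contribution cancels because $\bar p - \alpha p - (1-\alpha) q = 0$, and recombining the remaining logarithms yields the right-hand side. This is the well-known convexity decomposition for the KL functional.

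Finally, I would invoke Pinsker's inequality for probability densities, which in base $2$ reads $\D{f}{g} \geq \tfrac{1}{2\ln 2}\|f-g\|_1^{2} \geq \tfrac{1}{2}\|f-g\|_1^{2}$, applied to each of $\D{p}{\bar p}$ and $\D{q}{\bar p}$. Using $\|p - \bar p\|_1 = (1-\alpha)\|p-q\|_1$ and $\|q - \bar p\|_1 = \alpha\|p-q\|_1$, this gives
\begin{equation*}
\alpha\,\D{p}{\bar p} + (1-\alpha)\,\D{q}{\bar p} \;\geq\; \tfrac{1}{2}\bigl[\alpha(1-\alpha)^{2} + (1-\alpha)\alpha^{2}\bigr] \|p-q\|_1^{2} \;=\; \tfrac{\alpha(1-\alpha)}{2}\|p-q\|_1^{2},
\end{equation*}
which is the defining inequality for strong convexity with $\sigma = 1$ in the $\Lp{1}(\A)$ norm.

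There is no real obstacle here: the argument is essentially a textbook one once the KL-identification is made. The only subtlety worth flagging is a norm choice convention, namely that strong convexity is meant in $\Lp{1}(\A)$, consistent with the dual pair $(\Lp{1}(\A),\Lp{\infty}(\A))$ introduced earlier in the section, and that using base-$2$ Pinsker gives a constant $1/(2\ln 2) \geq 1/2$, which is what yields the conservative value $\sigma = 1$ matching the finite-dimensional counterpart used via Nesterov's Lemma~3.
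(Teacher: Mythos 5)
Your proof is correct, but it takes a genuinely different route from the paper. The paper follows Nesterov's finite-dimensional argument: it computes the second directional derivative of $d(p):=-\Hdiff{p}+\log(\rho)$, namely $\inprod{d''(p)\cdot g}{g}=\int_{\A} g(x)^2/p(x)\,\drv x$, and lower-bounds this quadratic form by $\norm{g}_1^2$ via Cauchy--Schwarz (writing $|g|=\tfrac{|g|}{\sqrt{p}}\sqrt{p}$ and using $\int_{\A}p=1$), which is the second-order characterization of strong convexity with $\sigma=1$ in the $\Lp{1}(\A)$ norm. You instead identify $d(p)=\D{p}{u_{\A}}$ with $u_{\A}$ the uniform density, use the compensation identity $\alpha\D{p}{u_{\A}}+(1-\alpha)\D{q}{u_{\A}}-\D{\bar p}{u_{\A}}=\alpha\D{p}{\bar p}+(1-\alpha)\D{q}{\bar p}$, and conclude via Pinsker's inequality, verifying the defining midpoint inequality directly with constant $\tfrac{1}{2\ln 2}\geq\tfrac12$. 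Your approach buys robustness: it requires no differentiability of $d$ on $\mathcal{D}(\A)$ (the paper's Hessian expression is only formal where $p$ vanishes, and justifying twice differentiability in $\Lp{1}$ is glossed over), it handles the $+\infty$ cases of the KL terms trivially, and you correctly flag that the relevant norm is $\norm{\cdot}_1$, consistent with the dual pair $(\Lp{1}(\A),\Lp{\infty}(\A))$ and with how the lemma feeds into the Lipschitz constant of $\nabla G_{\nu}$. The paper's approach buys brevity and exact parallelism with \cite[Lemma~3]{nesterov05}; it also in principle yields the slightly stronger local bound $\int g^2/p\geq\norm{g}_1^2$ rather than the Pinsker-rounded $\sigma=1$, though both statements are used only at the level $\sigma=1$.
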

\begin{proof}
See Appendix~\ref{app:proof:strong:convexity:poisson}.
\end{proof}
Furthermore, we can show that the uniform approximation $G_{\nu}(\lambda)$ is smooth and has a Lipschitz continuous gradient, with known constant.
The following result is a generalization of Proposition~\ref{prop:Lipschitz:continuity:DMC}.
\begin{myprop} \label{prop:Lipschitz:cts:channel}
$G_{\nu}(\lambda)$ is well defined and continuously differentiable at any $\lambda\in \R^{M}$. Moreover, this function is convex and its gradient $\nabla G_{\nu}(\lambda)=\WW^{\star} p_{\nu}^{\lambda}$ is Lipschitz continuous with constant $L_{\nu} =\tfrac{1}{\nu}$.
\end{myprop}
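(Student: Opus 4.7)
The plan is to follow the finite-dimensional argument behind Proposition~\ref{prop:Lipschitz:continuity:DMC} (i.e., \cite[Theorem~1 and Lemma~3]{nesterov05}) but now inside the dual pair $(\Lp{1}(\A),\Lp{\infty}(\A))$ introduced above. Two ingredients are needed: a bound on the operator norm $\|\WW\|_{1,2}:=\max\{\|\WW\lambda\|_{\Lp{\infty}(\A)} : \|\lambda\|_{2}=1\}$ and the strong-convexity modulus of the regularizer. For the first, given $\|\lambda\|_{2}=1$ and $x\in\A$, Cauchy--Schwarz together with $\sum_{i=0}^{M-1}W_{M}(i|x)^{2}\le\sum_{i=0}^{M-1}W_{M}(i|x)=1$ gives $|(\WW\lambda)(x)|\le 1$, hence $\|\WW\|_{1,2}\le 1$; this is the continuous counterpart of \eqref{eq:operator:norm}. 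For the second, Lemma~\ref{lem:strong:convexity:cts} supplies convexity parameter $\sigma=1$.

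Existence and uniqueness of the maximizer $p_{\nu}^{\lambda}$ in \eqref{eq:Gnu:cts} then follow from strict concavity of the objective on the non-empty affine convex feasible set, with the explicit Gibbs-type candidate $p_{\nu}^{\lambda}(x)\propto 2^{\mu_{1}+\mu_{2}s(x)+(\WW\lambda-r)(x)/\nu}$ (the continuous analog of Lemma~\ref{lem:cover}) serving as witness once $\mu_{1},\mu_{2}$ are chosen to enforce $p_{\nu}^{\lambda}\in\mathcal{D}(\A)$ and $\inprod{p_{\nu}^{\lambda}}{s}=S$. Convexity of $G_{\nu}$ is then immediate since $\lambda\mapsto\inprod{p}{\WW\lambda}+c(p)$ is affine and $G_{\nu}$ is a supremum of such affine functions. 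The gradient formula $\nabla G_{\nu}(\lambda)=\WW^{\star}p_{\nu}^{\lambda}$ I would obtain via a Danskin-type envelope argument: uniqueness of $p_{\nu}^{\lambda}$ and linearity of the objective in $\lambda$ imply G\^ateaux differentiability with this gradient, and the Lipschitz estimate below upgrades this to continuous (Fr\'echet) differentiability.

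For the Lipschitz constant, fix $\lambda_{1},\lambda_{2}\in\R^{M}$ and set $p_{i}\Let p_{\nu}^{\lambda_{i}}$. Because the objective in \eqref{eq:Gnu:cts} is $\nu\sigma$-strongly concave in $p$, first-order optimality of $p_{j}$ at $\lambda_{j}$ along tangent directions of the affine feasible set yields, for $\{i,j\}=\{1,2\}$,
\begin{equation*}
G_{\nu}(\lambda_{j})\ \ge\ G_{\nu}(\lambda_{i})+\inprod{p_{i}}{\WW(\lambda_{j}-\lambda_{i})}+\tfrac{\nu}{2}\|p_{i}-p_{j}\|_{\Lp{1}(\A)}^{2}.
\end{equation*}
Adding the two instances and using H\"older together with the bound $\|\WW\|_{1,2}\le 1$ gives $\|p_{1}-p_{2}\|_{\Lp{1}(\A)}\le\|\WW\|_{1,2}\|\lambda_{1}-\lambda_{2}\|_{2}/\nu$, whence
\begin{equation*}
\|\nabla G_{\nu}(\lambda_{1})-\nabla G_{\nu}(\lambda_{2})\|_{2}=\|\WW^{\star}(p_{1}-p_{2})\|_{2}\le\|\WW\|_{1,2}\|p_{1}-p_{2}\|_{\Lp{1}(\A)}\le\tfrac{1}{\nu}\|\lambda_{1}-\lambda_{2}\|_{2},
\end{equation*}
where the first inequality uses $\|\WW^{\star}\|=\|\WW\|_{1,2}$ by duality of operator norms on the paired spaces. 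The main obstacle I anticipate is the careful infinite-dimensional envelope/differentiability step and the justification of the strong-concavity inequality along directions respecting the equality constraints of \eqref{eq:Gnu:cts}; the explicit Gibbs form of $p_{\nu}^{\lambda}$ and Lemma~\ref{lem:strong:convexity:cts} keep this tractable, but the argument is more delicate than in the finite-alphabet case treated in Proposition~\ref{prop:Lipschitz:continuity:DMC}.
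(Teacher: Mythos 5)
Your proposal is correct, and its two quantitative ingredients are exactly the paper's: the bound $\|\WW\|\leq 1$ (your Cauchy--Schwarz estimate $|(\WW\lambda)(x)|\le\bigl(\sum_i W_M(i|x)^2\bigr)^{1/2}\le 1$ is the same computation as the chain in the paper's Appendix~\ref{ap:lipschitz:cts:gradient:cts}, just phrased via $\|\WW\lambda\|_{\Lp{\infty}(\A)}$ instead of $\sup\inprod{p}{\WW\lambda}$), and the convexity parameter $\sigma=1$ from Lemma~\ref{lem:strong:convexity:cts}, combining to $L_\nu=\|\WW\|^2/\nu\le 1/\nu$. The only real difference is that the paper outsources the analytic core — well-posedness, differentiability, $\nabla G_\nu=\WW^\star p_\nu^\lambda$, and the Lipschitz estimate with constant $\|\WW\|^2/(\nu\sigma)$ — to \cite[Theorem~5.1]{ref:devolder-12}, whereas you reprove that abstract result inline via the Danskin/envelope argument and the two symmetric strong-concavity inequalities; your derivation is precisely the standard proof of that cited theorem, so it buys self-containedness at the cost of length. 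One small remark on the step you flag as delicate: you do not actually need differentiability of the entropy term or an explicit first-order optimality condition at $p_j$; for a $\nu$-strongly concave objective maximized at $p_j$ over the convex feasible set, the inequality $\phi_j(p)\le\phi_j(p_j)-\tfrac{\nu}{2}\|p-p_j\|_{\Lp{1}(\A)}^2$ for all feasible $p$ follows directly from comparing $\phi_j\bigl((1-t)p_j+tp\bigr)\le\phi_j(p_j)$ with the strong-concavity lower bound and letting $t\to 0$, which removes the anticipated obstacle.
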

\begin{proof}
See Appendix~\ref{ap:lipschitz:cts:gradient:cts}.
\end{proof}
We denote by $p_{\nu}^{\lambda}$ the optimizer to \eqref{eq:Gnu:cts}, that is unique since the objective function is strictly concave. 
To analyze the solution to \eqref{eq:Gnu:cts} we consider the following optimization problem, that, if feasible, has a closed form solution
\begin{equation} \label{opt:cover:cont}
 	\left\{ \begin{array}{lll}
			&\underset{p}{\sup} 		&\Hdiff{p} + \inprod{p}{c} \\
			&\text{s.t. } 			& \inprod{p}{s} = S\\
			& 					& p\in\mathcal{D}(\A),
	\end{array} \right.
\end{equation}
with $c,s\in \Lp{\infty}(\A)$.
\begin{mylem} \label{lem:cover:cont}
Let $p^{\star}(x)=2^{\mu_1 + c(x) + \mu_2 s(x)}$, where $\mu_1, \mu_2\in\R$ are chosen such that $p^\star$ satisfies the constraints in \eqref{opt:cover:cont}. Then $p^\star$ uniquely solves \eqref{opt:cover:cont}.
\end{mylem}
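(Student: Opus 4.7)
The plan is to prove this via a Gibbs-type inequality, exploiting the non-negativity of relative entropy, which bypasses any need to handle Lagrange multipliers or calculus of variations directly. I would first note that the candidate $p^\star$, by hypothesis, satisfies $\inprod{p^\star}{s}=S$ and $\int_{\A} p^\star(x)\drv x=1$, and that $p^\star>0$ on $\A$.

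Next I would let $p\in\mathcal{D}(\A)$ be an arbitrary feasible point for \eqref{opt:cover:cont} and consider the relative entropy $\D{p}{p^\star}=\int_{\A} p(x)\log\bigl(p(x)/p^\star(x)\bigr)\drv x\geq 0$, which holds (possibly with value $+\infty$) by the standard information-theoretic inequality. Substituting $\log p^\star(x)=\mu_1+c(x)+\mu_2 s(x)$ and using the two constraints $\inprod{p}{\mathbf{1}}=1$ and $\inprod{p}{s}=S$, the cross term becomes
\begin{equation*}
\int_{\A} p(x)\log p^\star(x)\drv x = \mu_1 + \inprod{p}{c} + \mu_2 S,
\end{equation*}
which is independent of the particular feasible $p$. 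Rearranging $\D{p}{p^\star}\geq 0$ therefore yields
\begin{equation*}
\Hdiff{p}+\inprod{p}{c}\;\leq\;-\mu_1-\mu_2 S.
\end{equation*}
Applying the same identity to $p^\star$ itself (for which $\D{p^\star}{p^\star}=0$) shows that the right-hand side is precisely $\Hdiff{p^\star}+\inprod{p^\star}{c}$, so $p^\star$ attains the bound and is optimal.

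For uniqueness I would invoke the strict positivity of relative entropy: $\D{p}{p^\star}=0$ forces $p=p^\star$ almost everywhere on $\A$, so any other feasible $p$ gives a strictly smaller objective. The main subtlety I anticipate is a measurability/integrability check, namely ensuring that $\int p \log p^\star$ is well-defined and that the relative entropy inequality applies in the present setting; this is handled by the fact that $c,s\in\Lp{\infty}(\A)$ and $\A$ is bounded, so $\log p^\star$ is bounded on $\A$ and the integral $\int p\log p^\star$ is finite for every $p\in\mathcal{D}(\A)$. The case $\Hdiff{p}=-\infty$ is automatic since the claimed inequality is then trivial.
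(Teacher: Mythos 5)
Your proof is correct and follows essentially the same route as the paper, whose argument (via the proof of Lemma~\ref{lem:cover} and \cite[p.~409]{cover}) is exactly this non-negativity-of-relative-entropy (Gibbs) argument with uniqueness from the equality case $\D{p}{p^{\star}}=0$. The only blemish is the phrase claiming $\int_{\A} p(x)\log p^{\star}(x)\,\drv x=\mu_1+\inprod{p}{c}+\mu_2 S$ is ``independent of the particular feasible $p$'' (the term $\inprod{p}{c}$ does depend on $p$; what is independent of $p$ is the resulting bound $-\mu_1-\mu_2 S$), but your displayed rearrangement is correct, so this is harmless.
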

The proof directly follows from \cite[p.~409]{cover} and the proof of Lemma~\ref{lem:cover}. Hence, $G_{\nu}(\lambda)$ has a (unique) analytical optimizer
\begin{align} \label{eq:finite:optimizer:pmu:cts}
p_{\nu}^{\lambda}(x,\mu) = 2^{\mu_{1} + \tfrac{1}{\nu}\left( \WW \lambda(x) - r(x)\right) + \mu_{2}s(x)}, \quad x\in \mathcal{X},
\end{align}
where $\mu_{1},\mu_{2}\in\R$ have to be chosen such that $\inprod{p_{\nu}^{\lambda}(\cdot,\mu)}{s}=S$ and $p_{\nu}^{\lambda}(\cdot,\mu)\in\mathcal{D}(\A)$; for this choice of $\mu_1$, $\mu_2$ we denote the solution by $p_{\nu}^{\lambda}(\cdot)$. 
\begin{myremark}[No input constraints]\label{rmk:stabilization:optimizer:cts}
In case of no input constraints, the unique optimizer to \eqref{eq:Gnu:cts} is given by
\begin{equation*}
p_{\nu}^{\lambda}(x) = \frac{2^{ \tfrac{1}{\nu} (\WW\lambda(x) -  r(x))} }{\int_{\A} 2^{ \tfrac{1}{\nu} (\WW\lambda(x) -  r(x))} \drv x},
\end{equation*}
whose numerical evaluation can be done in a stable way by following Remark~\ref{rmk:finite:no:input:const}.
\end{myremark}
\begin{myremark}[Additional input constraints]\label{rmk:finite:constraint:optimizer:cts}
As in Remark~\ref{rmk:finite:constraint:optimizer}, in case of additional input constraints we need an efficient method to find the coefficients $\mu_{i}$ in \eqref{eq:finite:optimizer:pmu:cts}. This problem can again be reduced to a finite dimensional convex optimization problem (\cite[Theorem~4.8]{ref:Borwein-91},\cite[p.~257 ff.]{ref:Lasserre-11}), in the sense that the coefficients $\mu_i$ are the unique maximizers to
\begin{equation} \label{eq:opt:problem:find:mu:cts}
\max\limits_{\mu\in\R^{2}}\left\{ y\transp \mu -\int_{\A}p_{\nu}^{\lambda}(x,\mu)\drv x \right\},
\end{equation}
where $y:=(1,S)$. Note that $\eqref{eq:opt:problem:find:mu:cts}$ is an unconstrained maximization of a striclty concave function. The evalutation of the gradient and the Hessian of this objective function involves computing moments of the measure $p_{\nu}^{\lambda}(x,\mu)\drv x$, which unlike to the finite input alphabet case (Remark~\ref{rmk:finite:constraint:optimizer}) is numerically difficult. In \cite[p.~259 ff.]{ref:Lasserre-11}, an efficient approximation of the mentioned gradient and Hessian in terms of two single semidefinite programs involving two linear matrix inequalities (LMI) is presented, where the desired accuracy is controlled by the size of the LMI constraints. As mentioned in Remark~\ref{rmk:finite:constraint:optimizer}, this will provide a suboptimal solution to the maximum entropy problem \eqref{eq:Gnu:cts} and as such the error bounds of Theorem~\ref{thm:error:bound:capacity:continuous:channel} do not hold. By following \cite{ref:Devolver-13}, however, one can quantify the approximation error of Algorithm~\hyperlink{algo:1}{1} in case of an inexact gradient. We also refer the interested reader to \cite{DavidSutter14}, for a related work on channel capacity approximation under inexact first-order information.
\end{myremark}

Note that the differential entropy $\Hdiff{p}\leq \log(\rho) $ for all $p\in\mathcal{D}(\A)$ and that there exists a function $\iota:\Rsp\to\Rp$ such that
 \begin{equation} \label{eq:uniform:bound:cts}
 G_{\nu}(\lambda)\leq G(\lambda)\leq G_{\nu}(\lambda) + \iota(\nu) \ \text{for all }\lambda\in Q,
 \end{equation}
 i.e., $G_{\nu}(\lambda)$ is a uniform approximation of the non-smooth function $G(\lambda)$.  
The following lemma, Lemma~\ref{lem:iota:new}, provides an explicit expression for the function $\iota$ in \eqref{eq:uniform:bound:cts} under some Lipschitz continuity assumptions, implying in particular that $\iota(\nu)\to 0$ as $\nu\to 0$. 
\begin{mylem}\label{lem:Lf}
Under Assumption~\ref{a:channel}\eqref{ass:channel:ii} and Assumption~\ref{ass:channel:Poisson} the function $f_{\lambda}(\cdot):= \WW\lambda(\cdot) -  r(\cdot)$ is Lipschitz continuous uniformly in $\lambda\in Q$ with constant $L_{f}=LM^2 (\log \tfrac{1}{\gamma_M} \vee \tfrac{1}{\ln 2})+M L |\log \tfrac{1}{\gamma_M} - \frac{1}{\ln 2} |$.
\end{mylem}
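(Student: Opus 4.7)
The plan is to apply the triangle inequality to split the bound into two pieces, $|f_\lambda(x) - f_\lambda(x')| \leq |\mathcal{W}\lambda(x) - \mathcal{W}\lambda(x')| + |r(x) - r(x')|$, and handle each separately; these two pieces will directly produce the two summands of the claimed constant $L_f$.

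For the first piece, I would expand $\mathcal{W}\lambda(x) = \sum_{i=1}^M W_M(i-1|x)\lambda_i$ linearly and apply the triangle inequality to obtain $|\mathcal{W}\lambda(x) - \mathcal{W}\lambda(x')| \leq \sum_{i=1}^M |\lambda_i| \cdot |W_M(i-1|x) - W_M(i-1|x')|$. Then pull $|\lambda_i| \leq \|\lambda\|_2 \leq M(\log\gamma_M^{-1} \vee \tfrac{1}{\ln 2})$ out of the sum, uniformly in $\lambda \in Q$ by Lemma~\ref{lem:compact:set:Poisson}. For the remaining Lipschitz factor I would verify that $W_M(i|\cdot)$ inherits an $L$-Lipschitz property from Assumption~\ref{a:channel}(ii): the tail correction can be rewritten as $\frac{1}{M}\sum_{j\geq M}W(j|\cdot) = \frac{1}{M}\bigl(1 - \sum_{j<M}W(j|\cdot)\bigr)$, so only finitely many applications of the $L$-Lipschitz assumption on the head terms are needed. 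Summing $M$ such bounds and multiplying by the bound on $\|\lambda\|_2$ produces $LM^2(\log\gamma_M^{-1}\vee\tfrac{1}{\ln 2})|x-x'|$, the first summand of $L_f$.

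For the second piece, set $\eta(t) := -t\log_2 t$ so that $r(\cdot) = \sum_{j=0}^{M-1}\eta(W_M(j|\cdot))$. Triangle inequality and the mean value theorem give $|r(x) - r(x')| \leq \sum_{j=0}^{M-1} \sup_{u\in[\gamma_M,1]}|\eta'(u)| \cdot |W_M(j|x) - W_M(j|x')|$. A direct computation yields $\eta'(u) = -\log_2 u - \tfrac{1}{\ln 2}$; since $W_M(j|x) \in [\gamma_M,1]$ for all admissible $x$ and $j$ by Assumption~\ref{ass:channel:Poisson}, $|\eta'|$ on this interval is controlled by its value at the lower endpoint, giving $|\log\gamma_M^{-1} - \tfrac{1}{\ln 2}|$. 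Multiplying by $M$ and by the $L$-Lipschitz constant of $W_M(j|\cdot)$ yields the second summand $ML|\log\gamma_M^{-1} - \tfrac{1}{\ln 2}|\,|x-x'|$, and adding the two bounds gives $L_f$ as stated.

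The main obstacle is the careful bookkeeping on the two Lipschitz constants: first, checking that the tail-sum correction does not inflate the $L$-Lipschitz constant of $W_M(i|\cdot)$ beyond what the claim allows (handled by the rewriting trick above), and second, identifying where on $[\gamma_M,1]$ the derivative of $\eta$ attains its maximum absolute value so that the sharp dependence on $\gamma_M$ appears in the final bound. Beyond these two observations the estimate is a routine combination of triangle inequality and mean value theorem, uniform in $\lambda \in Q$.
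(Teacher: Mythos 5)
Your decomposition and ingredients are the same as the paper's (triangle inequality, the bound on the dual variable coming from Lemma~\ref{lem:compact:set:Poisson}, and a derivative bound on $t\mapsto -t\log t$ over $[\gamma_M,1]$), but one step as written would fail, and with it your bookkeeping does not reach the stated constant. The truncated channel $W_M(i|\cdot)$ defined in \eqref{W_M} is not $L$-Lipschitz: your own rewriting $\tfrac1M\sum_{j\ge M}W(j|\cdot)=\tfrac1M\bigl(1-\sum_{j<M}W(j|\cdot)\bigr)$ shows that the tail correction is Lipschitz with constant up to $L$ (a sum of $M$ head terms divided by $M$), so $W_M(i|\cdot)$ is only guaranteed to be $2L$-Lipschitz (using $W_M(i|x)-W_M(i|x')=(1-\tfrac1M)\bigl(W(i|x)-W(i|x')\bigr)-\tfrac1M\sum_{j<M,\,j\ne i}\bigl(W(j|x)-W(j|x')\bigr)$ one gets the sharper $2\tfrac{M-1}{M}L$, which still exceeds $L$ for every $M\ge 2$). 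With the corrected constant your first piece becomes $2LM^2\bigl(\log\gamma_M^{-1}\vee\tfrac1{\ln 2}\bigr)|x-x'|$ and your second $2ML\bigl|\log\gamma_M^{-1}-\tfrac1{\ln2}\bigr|\,|x-x'|$, i.e.\ you prove Lipschitz continuity with constant $2L_f$ rather than the claimed $L_f$.

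The reason the paper lands on $LM^2(\cdot)$ for the first summand is that it does not bound each $|\lambda_i|$ by the radius of $Q$ and sum $M$ copies (that costs a full factor $M$); it bounds $\sum_i|\lambda_i|\,\mathrm{Lip}\bigl(W_M(i-1|\cdot)\bigr)\le LM\|\lambda\|_1$ -- a crude per-term constant $LM\ge 2L$ for $M\ge 2$ -- and then uses the one-norm bound $\|\lambda\|_1\le M\bigl(\log\gamma_M^{-1}\vee\tfrac1{\ln2}\bigr)$, which is what the argument behind Lemma~\ref{lem:compact:set:Poisson} (cf.\ the proof of Lemma~\ref{lem:compact:set}) actually delivers before relaxing to the Euclidean ball $Q$. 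If you replace your per-coordinate estimate $|\lambda_i|\le\|\lambda\|_2$ by this $\ell_1$ bound (or at least by $\|\lambda\|_1\le\sqrt M\,\|\lambda\|_2$), the slack in the factor $M$ absorbs the $2$ and the stated first summand follows. For the second summand your mean-value argument mirrors the paper's appeal to the Lipschitz constant of the entropy on the simplex; note, however, that your claim that $|\eta'|$ on $[\gamma_M,1]$ is maximized at the lower endpoint fails when $\gamma_M$ is not small (e.g.\ $\gamma_M$ near $\e^{-1}$, where $|\eta'(1)|=\tfrac1{\ln2}$ dominates) -- the paper's own statement shares this blemish, so it is not a gap relative to the paper -- while the $L$ versus $2L$ issue in this piece is explicit in your write-up and should be repaired or the constant adjusted.
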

\begin{proof}
See Appendix~\ref{ap:proof:Lf}.
\end{proof}
\begin{myass}[Lipschitz continuity of the average-power constraint function] \label{a:constraint_fct:s}	
The average-power constraint function $s(\cdot)$ is Lipschitz continuous with constant $L_{s}$. 
\end{myass}
\begin{mylem} \label{lem:iota:new}
Under Assumptions~\ref{a:channel}\eqref{ass:channel:ii}, \ref{ass:channel:Poisson} and \ref{a:constraint_fct:s} a possible choice of the function $\iota$ in \eqref{eq:uniform:bound:cts} is given by
\begin{equation*}
\iota(\nu) = \left\{ 
  \begin{array}{l l}
    \nu \left( \log\left( \frac{T_{1}}{\nu}+T_{2} \right) +1 \right), & \quad \nu<\tfrac{T_{1}}{1-T_{2}} \text{ or } T_{2}>1\\
    \nu, & \quad \text{otherwise},
  \end{array} \right.
\end{equation*}
where $T_{1}\!:=L_{f}\rho + 2L_{f}L_{s}\rho^{2}\!\left( \tfrac{1}{-\underline{s}} \vee \tfrac{1}{\overline{s}} \right)$, $T_{2}\!:= L_{s}\rho (\underline{\mu} \, \vee \, \overline{\mu})$, $\underline{\mu}\!:=\tfrac{2}{-\underline{s}}\log\!\left( \tfrac{2L_{s}\rho}{-\underline{s}}\vee 1 \right)$, $\overline{\mu}\!:=\tfrac{2}{\overline{s}}\log\!\left( \tfrac{2L_{s}\rho}{\overline{s}}\vee 1 \right)$, $\rho:=\int_{\A}\drv x$, $\underline{s} := -S + \min_{x\in\A} s(x)$ and $\overline{s} := -S + \max_{x\in\A} s(x)$.
\end{mylem}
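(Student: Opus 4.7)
The plan is to work with the Lagrangian duals of $G(\lambda)$ and $G_\nu(\lambda)$ with respect to the scalar equality constraint $\inprod{p}{s}=S$ and then to quantitatively bound the optimal smooth multiplier. By the same linear-constraint strong-duality argument as in Lemma~\ref{lem:strong:duality:poisson}, writing $f_\lambda:=\WW\lambda-r$,
\begin{align*}
G(\lambda) &= \inf_{\mu\in\R}\left\{\max_{x\in\A}\bigl(f_\lambda(x)+\mu s(x)\bigr)-\mu S\right\},\\
G_\nu(\lambda) &= \inf_{\mu\in\R}\left\{\nu\log\!\int_\A 2^{(f_\lambda(x)+\mu s(x))/\nu}\,\drv x -\nu\log\rho-\mu S\right\},
\end{align*}
the inner sup for $G_\nu$ being closed-form by Lemma~\ref{lem:cover:cont}. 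The direction $G_\nu(\lambda)\le G(\lambda)$ is immediate from $\int_\A 2^{g/\nu}\,\drv x\le\rho\cdot 2^{\max g/\nu}$.

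For the reverse direction I would first derive a pointwise-in-$\mu$ estimate. With $g_\mu(x):=f_\lambda(x)+\mu s(x)$, Lemma~\ref{lem:Lf} together with Assumption~\ref{a:constraint_fct:s} gives that $g_\mu$ is Lipschitz on $\A$ with constant $L_g:=L_f+|\mu|L_s$. Picking $x^\star\in\arg\max_{x\in\A}g_\mu$, the set $\A\cap[x^\star-\delta,x^\star+\delta]$ has Lebesgue measure at least $\delta$ for $\delta\in(0,\rho]$, and on it $g_\mu\ge\max g_\mu-L_g\delta$; substituting and choosing $\delta=\nu/L_g$ yields
\[
\max_{x\in\A}g_\mu(x)-\Bigl[\nu\log\!\int_\A 2^{g_\mu/\nu}\,\drv x-\nu\log\rho\Bigr]\le \nu\bigl(1+\log(L_g\rho/\nu)\bigr),
\]
with the trivial bound $\le\nu$ in the regime $L_g\rho\le\nu$. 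Next I would quantitatively bound the minimizer $\mu^\star$ of the smooth dual. Its KKT condition reads $\int_\A s\,p_\nu^\lambda\,\drv x=S$; since the tilted mean is monotone in $\mu$, it suffices to exhibit $C_+$ for which the tilted mean at $\mu=C_+$ already exceeds $S$. Using Assumption~\ref{a:constraint_fct:s}, the superlevel set $\{x\in\A:s(x)\ge S+\overline{s}/2\}$ contains an interval of length $\overline{s}/(2L_s)$ about any maximizer of $s$, hence has Lebesgue measure at least $\overline{s}/(2L_s)$; comparing the tilted integral over this set against the (suppressed) tilted integral over $\{s<S\}$ and absorbing the prefactor $2^{(\max f_\lambda-\min f_\lambda)/\nu}\le 2^{L_f\rho/\nu}$ leads to the choice $C_+=2L_f\rho/\overline{s}+\nu\overline{\mu}$ with $\overline{\mu}$ as in the statement; the symmetric argument on the other side gives $\mu^\star\ge -C_-$ with $C_-=2L_f\rho/(-\underline{s})+\nu\underline{\mu}$. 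Consequently $(L_f+|\mu^\star|L_s)\rho\le T_1+\nu T_2$.

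Plugging $\mu=\mu^\star$ into the pointwise bound,
\[
G(\lambda)-G_\nu(\lambda)\le \nu\bigl(1+\log((L_f+|\mu^\star|L_s)\rho/\nu)\bigr)\le \nu\bigl(1+\log(T_1/\nu+T_2)\bigr),
\]
which is the asserted $\iota(\nu)$; the ``otherwise'' branch $\iota(\nu)=\nu$ covers the regime $T_1/\nu+T_2\le 1$ (equivalently $\nu\ge T_1/(1-T_2)$ when $T_2<1$), in which the log term is non-positive and the pointwise Lipschitz bound already gives $\le\nu$. The main obstacle I foresee is the third step, namely the quantitative control on $|\mu^\star|$: one must carefully isolate the $\nu$-independent ``scaling'' contribution $2L_f\rho/\overline{s}$, arising from the bulk ratio of tilted integrals, from the $\nu$-dependent ``log-volume'' contribution $\nu\overline{\mu}$, arising from the Lipschitz/volume estimate for $s$, and the $\vee\,1$ in the definitions of $\underline{\mu},\overline{\mu}$ handles the degenerate regime where $\overline{s}$ or $-\underline{s}$ is small relative to $2L_s\rho$ and the ball of radius $\overline{s}/(2L_s)$ no longer fits strictly inside $\A$.
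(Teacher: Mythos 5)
Your proposal is essentially the paper's own argument for the core estimate: you use the same Gibbs/log-sum-exp representation of $G_{\nu}$, the same weak-duality evaluation of $G(\lambda)$ at the smooth problem's multiplier, and the same Lipschitz level-set volume bound with the choice $\varepsilon=\nu$, which reproduces the intermediate inequality $G(\lambda)-G_{\nu}(\lambda)\leq \nu\bigl(1+\log\bigl(\tfrac{(L_f+|\mu^\star|L_s)\rho}{\nu}\vee 1\bigr)\bigr)$ exactly as in the paper (with your $\mu^\star$ playing the role of the paper's $\nu\mu_{\nu}$).

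The one place you diverge is the control of $|\mu^\star|$, and there your mechanism, while sound in principle, does not quite deliver the stated constants. The paper argues via $G_{\nu}(\lambda)\leq G(\lambda)\leq \overline{f}:=\max_{x,\lambda}f_{\lambda}(x)$, which yields the clean integral inequality $\int_{\A}2^{\mu_{\nu}(s(x)-S)}\drv x\leq \rho\, 2^{\Delta_f/\nu}$ with $\Delta_f:=\overline{f}-\underline{f}\leq L_f\rho$, and then applies the superlevel-set volume bound for $s$ at level $\overline{s}/2$; this gives $\nu\mu_{\nu}\leq \nu\overline{\mu}+\tfrac{2\Delta_f}{\overline{s}}$ with exactly the $\overline{\mu}$ of the statement (and symmetrically for the lower bound). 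Your route instead uses the stationarity condition $\int_{\A}s\,p_{\nu}^{\lambda}\drv x=S$ and monotonicity of the tilted mean, and certifies $C_+$ by comparing the tilted mass of $\{s\geq S+\overline{s}/2\}$ (weighted by $\overline{s}/2$) against that of $\{s<S\}$ (weighted by $-\underline{s}$). Carrying this comparison out gives the sufficient condition $2^{C_+\overline{s}/(2\nu)}\geq \tfrac{2(-\underline{s})}{\overline{s}}\bigl(\tfrac{2L_s\rho}{\overline{s}}\vee 1\bigr)2^{\Delta_f/\nu}$, i.e.\ an extra factor $\tfrac{2(-\underline{s})}{\overline{s}}$ (and symmetrically $\tfrac{2\overline{s}}{-\underline{s}}$ on the other side) inside the logarithm, so the resulting bound has the same form $\nu\bigl(\log(\tfrac{T_1}{\nu}+T_2')+1\bigr)$ but with a $T_2'$ that differs from the $T_2$ asserted in the lemma; your claimed $C_+=\tfrac{2L_f\rho}{\overline{s}}+\nu\overline{\mu}$ does not follow from the comparison as you set it up. To recover the lemma verbatim, replace that step by the paper's shortcut: $G_{\nu}\leq\overline{f}$ already bounds the full tilted normalization integral, so no positive/negative mass comparison (and hence no cross ratio $-\underline{s}/\overline{s}$) is needed. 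You also implicitly assume $\underline{s}<0<\overline{s}$ and attainment of the smooth dual infimum; both are implicit in the paper as well, so they are not points of difference.
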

\begin{proof} 
See Appendix~\ref{ap:proof:iota}. 
\end{proof}
We consider the smooth, finite dimensional, convex optimization problem
\begin{align}\label{Lagrange:Dual:Program:smooth:poisson}
 \mathsf{D}_{\nu}: \left\{ \begin{array}{ll}
	\underset{\lambda}{\inf} 		& F(\lambda) + G_{\nu}(\lambda) \\
			\text{s.t. } 					& \lambda\in Q,
	\end{array}\right.
\end{align}
whose solution can be approximated with Algorithm~\hyperlink{algo:1}{1} presented in Section~\ref{sec:classicalCapacity}, as follows. Define the constant $D_{1}:=\tfrac{1}{2}(M \log(\gamma^{-1})\vee\tfrac{1}{\ln 2})^2$.
\begin{mythm}\label{thm:error:bound:capacity:continuous:channel}
Under Assumptions~\ref{a:channel}\eqref{ass:channel:ii}, \ref{ass:channel:Poisson} and \ref{a:constraint_fct:s}, let $\alpha := 2(T_1 + T_2 + 1)$ where $T_1$ and $T_2$ are as defined in Lemma~\ref{lem:iota:new}. Given precision $\varepsilon \in (0, \tfrac{\alpha}{4})$, we set the smoothing parameter $\nu = \tfrac{\varepsilon / \alpha}{\log \left( \alpha / \varepsilon \right)}$ and number of iterations $n\geq \tfrac{1}{\varepsilon} \sqrt{8 D_{1}\alpha}\sqrt{\log(\varepsilon^{-1}) + \log(\alpha) + \tfrac{1}{4}}$.  Consider 
\begin{align} 
\hat{\lambda} = y_{n} \in Q\qquad \text{and} \qquad \hat{p}=\sum_{k=0}^{n}\frac{2(i+1)}{(n+1)(n+2)} p_{\nu}^{x_{k}}\in \mathcal{D}(\A), \label{eq:optInPut:cts}
\end{align}
where $y_k$ computed at the $\text{k}^{\text{th}}$ iteration of Algorithm~\hyperlink{algo:1}{1} and $p_{\nu}^{x_{k}}$ is the analytical solution in \eqref{eq:finite:optimizer:pmu:cts}. Then, $\hat{\lambda}$ and $\hat{p}$ are the approximate solutions to the problems \eqref{eq:Dual:Program:Poisson} and \eqref{eq:inf:channel:primal}, i.e., 
\begin{align}
0\leq F(\hat{\lambda}) + G(\hat{\lambda}) - \I{\hat{p}}{W_{M}} \leq \varepsilon. \label{eq:EBB:cts}
\end{align}
Therefore, Algorithm~\hyperlink{algo:1}{1} requires $O\left( \tfrac{1}{\varepsilon}\sqrt{\log\left( \varepsilon^{-1} \right)} \right)$ iterations to find an $\varepsilon$-solution to the problems \eqref{eq:Dual:Program:Poisson} and \eqref{eq:inf:channel:primal}.
\end{mythm}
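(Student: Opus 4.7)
The strategy mirrors the proof of Theorem~\ref{thm:error:bound:capacity}, but with Lemma~\ref{lem:iota:new} replacing the finite-dimensional smoothing bound $\nu\log N$. The plan is to first combine Nesterov's accelerated scheme with the Lipschitz estimate of Proposition~\ref{prop:Lipschitz:cts:channel} to obtain a primal-dual gap for the smoothed program $\mathsf{D}_\nu$, then transfer this bound to the original program $\mathsf{D}$ via the uniform approximation $G_\nu \le G \le G_\nu + \iota(\nu)$, and finally tune $\nu$ and $n$.

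First I would invoke Lemma~\ref{lem:compact:set:Poisson} to restrict the search to the compact set $Q$ and Lemma~\ref{lem:strong:duality:poisson} to identify the dual value with the primal capacity $C_{\A,S}(W_M)$. The objective $F+G_\nu$ of \eqref{Lagrange:Dual:Program:smooth:poisson} has a Lipschitz gradient with constant $L_\nu = 1 + 1/\nu$ (adding the constant $1$ from the gradient of $F$ computed after \eqref{eq:gradient:F} to the $1/\nu$ bound from Proposition~\ref{prop:Lipschitz:cts:channel}). Applying \cite[Theorem~2]{nesterov05} to Algorithm~\hyperlink{algo:1}{1} with prox function $\tfrac12\|\lambda\|_2^2$ on $Q$ (for which the squared radius is $2D_1$), and then using the primal reconstruction formula \cite[Theorem~3]{nesterov05} together with the analytic optimizer \eqref{eq:finite:optimizer:pmu:cts}, the iterates $\hat\lambda = y_n$ and $\hat p$ from \eqref{eq:optInPut:cts} satisfy
\begin{equation*}
0 \;\le\; F(\hat\lambda) + G_\nu(\hat\lambda) - I(\hat p, W_M) \;\le\; \frac{4 L_\nu D_1}{(n+1)^2} \;\le\; \frac{8 D_1}{\nu (n+1)^2},
\end{equation*}
where the last step uses $\nu < 1$ for small $\varepsilon$. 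Adding $\iota(\nu)$ to the right-hand side via \eqref{eq:uniform:bound:cts} gives
\begin{equation*}
0 \;\le\; F(\hat\lambda) + G(\hat\lambda) - I(\hat p, W_M) \;\le\; \iota(\nu) + \frac{8 D_1}{\nu (n+1)^2}.
\end{equation*}

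Next I would bound $\iota(\nu)$. Since $\iota(\nu) \le \nu(\log(T_1/\nu + T_2) + 1) \le \nu\log(T_1/\nu + T_2 + \e)$ for $\nu$ small enough, and with $\alpha = 2(T_1 + T_2 + 1)$, substituting $\nu = (\varepsilon/\alpha)/\log(\alpha/\varepsilon)$ yields $T_1/\nu \le (\alpha/2)\log(\alpha/\varepsilon)/\varepsilon$, and a direct computation shows $\iota(\nu) \le \varepsilon/2$ for all $\varepsilon \in (0, \alpha/4)$. For the remaining term, requiring $8 D_1/(\nu (n+1)^2) \le \varepsilon/2$ is equivalent to $(n+1)^2 \ge 16 D_1 / (\nu \varepsilon)$; inserting the chosen $\nu$ gives $(n+1)^2 \ge (16 D_1 \alpha /\varepsilon^2)\log(\alpha/\varepsilon)$, which in turn is implied by the stated lower bound on $n$ after absorbing the constant $1/4$ into the square root. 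Combining the two halves proves \eqref{eq:EBB:cts}.

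The complexity claim then follows immediately: $n = O(\sqrt{1/\nu}/\sqrt{\varepsilon}) = O(\tfrac{1}{\varepsilon}\sqrt{\log(1/\varepsilon)})$.

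The only nontrivial step is the careful balance in the choice of $\nu$: we need $\nu$ small enough that $\iota(\nu) \le \varepsilon/2$, yet $1/\nu$ not too large so that the Lipschitz blow-up of $\nabla G_\nu$ is controlled. The logarithmic-over-linear scaling $\nu \propto \varepsilon/\log(1/\varepsilon)$ is precisely what makes both halves match; verifying the inequalities for $\iota(\nu)$ requires the case distinction in Lemma~\ref{lem:iota:new}, and one must check that the hypothesis $\nu < T_1/(1-T_2)$ (or $T_2 > 1$) is met under the smallness condition $\varepsilon < \alpha/4$. Once this is in place, the rest is direct substitution.
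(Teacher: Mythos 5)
Your overall strategy is the same as the paper's (smoothing error $\iota(\nu)$ from Lemma~\ref{lem:iota:new} plus the fast-gradient term with Lipschitz constant $1+\tfrac1\nu$, then tune $\nu\propto \varepsilon/\log(\alpha/\varepsilon)$ and split $\varepsilon$ in two halves), but the final accounting does not close, and this is a genuine gap rather than a cosmetic one. By replacing $4L_\nu D_1=4D_1+\tfrac{4D_1}{\nu}$ with the cruder $\tfrac{8D_1}{\nu}$ and then demanding $\tfrac{8D_1}{\nu(n+1)^2}\le\tfrac\varepsilon2$, you arrive at the requirement $(n+1)^2\ge \tfrac{16D_1\alpha}{\varepsilon^2}\log(\alpha/\varepsilon)$. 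The theorem's stated iteration count only guarantees $n^2\ge \tfrac{8D_1\alpha}{\varepsilon^2}\bigl(\log(\alpha/\varepsilon)+\tfrac14\bigr)$, and since $\varepsilon<\tfrac\alpha4$ forces $\log(\alpha/\varepsilon)>2>\tfrac14$, your requirement strictly exceeds what the stated $n$ provides (asymptotically by a factor $2$, i.e.\ $n$ inflated by $\sqrt2$); the $(n+1)^2$ versus $n^2$ slack is of lower order and cannot repair this. So the sentence ``which in turn is implied by the stated lower bound on $n$ after absorbing the constant $1/4$ into the square root'' is false: the $+\tfrac14$ term contributes only $\tfrac{2D_1\alpha}{\varepsilon^2}$, which is exactly what is needed to absorb the \emph{additive} term $\tfrac{4D_1}{(n+1)^2}$ (using $\varepsilon\le\tfrac\alpha4$), not a doubling of the leading $\tfrac{1}{\nu}$ term. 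The paper keeps the two terms $\tfrac{4D_1}{\nu(n+1)^2}+\tfrac{4D_1}{(n+1)^2}$ separate (see the error expression $\mathsf{err}(\nu,n)$ in the appendix) and performs the bookkeeping through the substitution $\nu=\delta/\log(\delta^{-1})$, the function $g(\delta)$ with $g(\delta)\le 1+T_1+T_2=\alpha/2$, and the split parameter $\beta=\tfrac{1}{2g(\delta)}$; this is precisely how the constants $\sqrt{8D_1\alpha}$ and $+\tfrac14$ arise. Your proof as written only establishes the theorem with a worse (by a constant factor) iteration bound; to recover the stated constants you must not discard the factor-of-two in $L_\nu$.

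Two smaller remarks. First, the intermediate claim $0\le F(\hat\lambda)+G_\nu(\hat\lambda)-I(\hat p,W_M)$ is not what Nesterov's Theorem~3 delivers (the smoothed dual value can undershoot the adjoint value by up to the smoothing gap); nonnegativity should be asserted for $F(\hat\lambda)+G(\hat\lambda)-I(\hat p,W_M)$ via weak duality (Lemma~\ref{lem:strong:duality:poisson} and Lemma~\ref{lem:compact:set:Poisson}), which is also how the paper states it, so this is harmless but should be phrased correctly. Second, your assertion that ``a direct computation shows $\iota(\nu)\le\varepsilon/2$ for all $\varepsilon\in(0,\alpha/4)$'' hides exactly the step where the paper invokes the bound $g(\delta)-1\le T_1+T_2$, which requires $\delta\le 2^{-1}\wedge 2^{-1/(T_1+T_2)}$; when $T_1+T_2<1$ this is an additional smallness condition on $\delta=\varepsilon/\alpha$ beyond $\varepsilon<\alpha/4$, so if you keep this route you should state and use that condition rather than appeal to an unspecified computation.
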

\begin{proof}
See Appendix~\ref{ap:thm:error:bound:capacity:continuous:channel}.
\end{proof}
Hence, under Assumption~\ref{a:constraint_fct:s} we can quantify the approximation error of the presented method to find the capacity of any channel $W$, satisfying Assumptions~\ref{a:channel} and \ref{ass:channel:Poisson}, by
\begin{align*}
\left| C(W) -  C_{\text{approx}}^{(n)}(W_{M}) \right| 	&\leq   \underbrace{\left| C(W) -  C(W_{M}) \right|}_{(\star)} + \underbrace{\left| C(W_{M}) -  C_{\text{approx}}^{(n)}(W_{M}) \right|}_{(\star \star)},
\end{align*}
where $(\star)$ and $(\star \star)$ are addressed by Theorem~\ref{thm:C} and Theorem~\ref{thm:error:bound:capacity:continuous:channel}, respectively. Let us highlight that for the term $(\star \star)$ we have two different quantitative bounds: First, the \textit{a priori} bound $\varepsilon$ for which Theorem~\ref{thm:error:bound:capacity:continuous:channel} prescribes a lower bound for the required number of iterations; second, the \textit{a posteriori} bound $F(\hat{\lambda}) + G(\hat{\lambda}) - I (\hat p, W_{M})$
which can be computed after a number of iterations have been executed. In practice, the a posteriori bound often approaches $\varepsilon$ much faster than the a priori bound. Note also that 
by \eqref{eq:uniform:bound:cts} and Theorem~\ref{thm:error:bound:capacity:continuous:channel}
\begin{align*}		
0\leq F(\hat{\lambda}) + G_{\nu}(\hat{\lambda}) +\iota(\nu) - \I{\hat{p}}{W_{M}} \leq \iota(\nu) + \varepsilon,
\end{align*}
which shows that $ F(\hat{\lambda}) + G_{\nu}(\hat{\lambda}) +\iota(\nu)$ is an upper bound for the channel capacity with a priori error $\iota(\nu) + \varepsilon$. This bound can be particularly helpful in cases where an evaluation of $G(\lambda)$ for a given $\lambda$ is hard.
\begin{myremark}[Optimal tail truncation] \label{rmk:optimal:truncation}
Given a fixed number of iterations, the term $(\star\star)$ above is effected by the truncation level $M$ for two reasons: the higher $M$ the larger the size of the output as well as the lower the parameter $\gamma_{M}$. Therefore, term $(\star\star)$ increases as M increases, which can be quantified by \eqref{eq:proof:rate:error:term}. On the other hand, term $(\star)$ obviously has the opposite behavior. Namely, the higher M leads to the better approximation of the channel W by the truncated version $W_M$ as quantified in Theorem~\ref{thm:C}. Hence, given a channel $W$ with the polynomial tail order $k$, there is an optimal value for the truncation parameter $M$, which thanks to the monotonicity explained above can be effectively computed in practice by techniques such as bisection.   \\
Note, that this truncation procedure could also be applied to a finite output alphabet, given that the channel satisfies Assumption~\ref{a:channel}\eqref{ass:channel:i}, and for example improve the performance of the method presented in Section~\ref{sec:classicalCapacity}.
\end{myremark}

\begin{myremark}[Without average-power constraint] \
In case of considering only a peak-power constraint and no average-power constraint, our proposed methodology allows us to access a closed form expression for $G_{\nu}(\lambda)$ and its gradient, 
\begin{align} 
G_{\nu}(\lambda) 	&= 	\nu \log\left( \int_{\A} 2^{\frac{1}{\nu}\left( \WW\lambda(x) -  r(x) \right)} \drv x \right) - \nu \log(\rho) \label{eq:poiss:closed:form:Gnu} \\
\nabla G_{\nu}(\lambda) 	&= 	\frac{ \int_{\A} 2^{\frac{1}{\nu}\left( \WW\lambda(x) -  r(x) \right)}W_{M}(\cdot|x) \drv x}{\int_{\A} 2^{\frac{1}{\nu}\left( \WW\lambda(x) -  r(x) \right)} \drv x}. \nonumber
\end{align}
\end{myremark}

%%%%%%%%%%%%%%%%%%%%%%%%%%%%%%%%%%%%%%%%%%%%%%%%%%%%%%%%%%%%%%%%%%
 \subsection*{Discrete-Time Poisson Channel} %\label{section:subsection:poisson}
 The discrete-time Poisson channel is a mapping from $\Rp$ to $\mathbb{N}_0$, such that conditioned on the input $x\geq 0$ the output is Poisson distributed with mean $x + \eta$, i.e.,
\begin{equation}
W(y|x)=e^{-(x+\eta)}\frac{(x+\eta)^y}{y!}, \quad y\in \mathbb{N}_0,\, x\in \Rp,\label{eq:WpoissonIntro}
\end{equation}
where $\eta\geq0$ denotes a constant sometimes referred to as \emph{dark current}. A peak-power constraint on the transmitter is given by the peak-input constraint $X\leq A$ with probability one, i.e., $\A=[0,A]$ and an average-power constraint on the transmitter is considered by $\E{X}\leq S$.

Up to now, no analytic expression for the capacity of a discrete-time Poisson channel is known. However, for different scenarios lower and upper bounds exist. Brady and Verd\'u derived a lower and upper bound in the presence of only an average-power constraint \cite{brady90}. Later, for $\eta=0$ and only an average-power constraint, Martinez introduced better upper and lower bounds \cite{martinez07}.
Lapidoth and Moser derived a lower bound and an asymptotic upper bound, which is valid only when the available peak and average power tend to infinity with their ratio held fixed, for the presence of a peak and average-power constraint \cite{lapidoth09}. Lapidoth \emph{et al.} computed the asymptotic capacity of the discrete-time Poisson channel when the allowed average-input power tends to zero with the allowed peak power --- if finite --- held fixed and the dark current is constant or tends to zero proportionally to the average power \cite{lapidoth11}. 

In \cite{ref:Chen-14-1} a numerical algorithm is presented, where the Blahut-Ariomoto algorithm is incorporated into the deterministic annealing method, that allows the computation of both the channel capacity under peak and average power constraints and its associated optimal input distribution. Furthermore, the works \cite{ref:Chen-14-1, ref:Chen-14-2} derive several fundamental properties of capacity achieving input distributions for the discrete-time Poisson channel. 

Here, we numerically approximate the capacity of a discrete-time Poisson channel using the proposed algorithm. For simplicity, we consider the case where only a peak power constraint is imposed; the case where an additional average power constraint is present can be treated similarly. It was shown in \cite{shamai90} that in the case of a peak power constraint (with or without average power constraint), the capacity achieving input distribution is discrete. This, in the limit as the number of iterations in the proposed approximation method goes to infinity, is consistent with the optimal input distribution given in Remark~\ref{rmk:stabilization:optimizer:cts}.

The following proposition provides an upper bound for the $k$-polynomial tail for the Poisson channel $W$ as defined in \eqref{eq:WpoissonIntro}.
	
	\begin{myprop}[Poisson tail]
	\label{prop:poisson_tail}
		The Poisson channel \eqref{eq:WpoissonIntro} having a bounded input alphabet $\X = [0,A]$ and \emph{dark current} parameter $\eta$ has a $k$-polynomial tail for any $k \in (0,1]$ in the sense of Definition \ref{def:tail}, which is upper bounded for all $M \ge A + \eta$ by
		\begin{align*}
			R_k(M) \le \Big( {\alpha\e^{(\alpha-1)(A+\eta)}\frac{(A+\eta)^{M}}{M!}} \Big)^k, \qquad \alpha \Let 2^{(k^{-1}-1)}.
		\end{align*}
	\end{myprop}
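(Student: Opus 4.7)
The plan is to combine three ingredients: (i) monotonicity of the Poisson pmf to identify $\sup_x W(i|x)$, (ii) Hölder's inequality applied to the resulting series with a geometric weighting, and (iii) a clean uniform bound on Poisson tails via the combinatorial inequality $(M+j)!\geq M!\,j!$.

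First I will compute the supremum. The function $x\mapsto e^{-(x+\eta)}(x+\eta)^i/i!$ on $[0,A]$ has derivative proportional to $i-(x+\eta)$, which is non-negative whenever $i\geq M\geq A+\eta$ and $x\in[0,A]$. Hence the map is non-decreasing on $[0,A]$ and attains its supremum at $x=A$, so $\sup_x W(i|x)=e^{-\mu}\mu^i/i!$ where $\mu\Let A+\eta$. Consequently $R_k(M)=e^{-k\mu}\sum_{i\geq M}(\mu^i/i!)^k$.

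Next I will apply Hölder's inequality with conjugate exponents $1/k$ and $1/(1-k)$, inserting the weight $\alpha^{-ki}$ so that
$$\sum_{i\geq M}\!\left(\frac{\mu^i}{i!}\right)^{\!k}=\sum_{i\geq M}\alpha^{-ki}\!\left(\frac{(\alpha\mu)^i}{i!}\right)^{\!k}\leq \left(\sum_{i\geq M}\frac{(\alpha\mu)^i}{i!}\right)^{\!k}\!\left(\sum_{i\geq M}\alpha^{-ki/(1-k)}\right)^{\!1-k}.$$
The specific choice $\alpha\Let 2^{1/k-1}$ is designed to make $\alpha^{-k/(1-k)}=1/2$, so the second factor becomes the geometric series $(\sum_{i\geq M}2^{-i})^{1-k}=(2^{1-M})^{1-k}=2^{(1-k)(1-M)}$. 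For the first factor I will exploit $\binom{M+j}{j}\geq 1$, i.e.\ $(M+j)!\geq M!\,j!$; after substituting $i=M+j$ this yields the clean uniform bound
$$\sum_{i\geq M}\frac{y^i}{i!}=\sum_{j\geq 0}\frac{y^{M+j}}{(M+j)!}\leq \frac{y^M}{M!}\sum_{j\geq 0}\frac{y^j}{j!}=\frac{y^M e^{y}}{M!},$$
applied here with $y=\alpha\mu$.

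Combining the two factors and using $\alpha^k=2^{1-k}$ together with the bookkeeping identity $2^{M(1-k)}\cdot 2^{(1-k)(1-M)}=2^{1-k}$ collapses the bound to
$$R_k(M)\leq e^{-k\mu}\cdot\frac{\alpha^{kM}\mu^{kM}e^{k\alpha\mu}}{(M!)^k}\cdot 2^{(1-k)(1-M)}=\alpha^k e^{k(\alpha-1)\mu}\!\left(\frac{\mu^M}{M!}\right)^{\!k},$$
which is the claimed inequality. The main obstacle is spotting the right weight $\alpha=2^{1/k-1}$ for the Hölder decomposition; once this choice is made (motivated by requiring $\alpha^{-k/(1-k)}=1/2$ so the tail geometric series has a trivial sum), the remaining algebra is routine, and the factorial inequality $(M+j)!\geq M!\,j!$ in the third step neatly avoids any case analysis based on the size of $M$ relative to $\alpha\mu$.
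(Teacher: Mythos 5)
Your proof is correct for $k\in(0,1)$ and follows the same overall skeleton as the paper's proof (supremum attained at $x=A$ because $M\ge A+\eta$, and the Poisson tail bound $\sum_{i\ge M}y^i/i!\le e^{y}y^{M}/M!$, which the paper invokes as a Taylor-series fact and you justify via $(M+j)!\ge M!\,j!$), but you establish the crucial weighted step differently. The paper proves, by induction on the two-term inequality $a^k+b^k\le 2^{1-k}(a+b)^k$ (Lemmas~\ref{lem:ab} and~\ref{lem:ai}), that $\sum_i a_i^k\le\bigl(\sum_i\alpha^i a_i\bigr)^k$, and applies it with shifted indices; you instead insert the weight $\alpha^{-ki}$ and apply H\"older with exponents $1/k$ and $1/(1-k)$, choosing $\alpha=2^{1/k-1}$ so that the conjugate factor is the geometric series $\sum_{i\ge M}2^{-i}=2^{1-M}$. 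These are in fact quantitatively identical: since $\alpha^{k(1-M)}=2^{(1-k)(1-M)}$, your H\"older bound coincides with what the paper's Lemma~\ref{lem:ai} gives on the tail, so your argument can be read as an alternative, shorter proof of that lemma which moreover explains where the otherwise mysterious constant $\alpha=2^{(k^{-1}-1)}$ comes from; the paper's induction, on the other hand, is entirely elementary and covers $k=1$ without any degeneracy. The one point to patch is exactly that endpoint: the proposition asserts the bound for $k\in(0,1]$, and your H\"older step with conjugate exponent $1/(1-k)$ (and the identity $\alpha^{-k/(1-k)}=1/2$) breaks down at $k=1$. This is harmless — for $k=1$ one has $\alpha=1$ and the claimed bound $R_1(M)\le\mu^{M}/M!$ follows directly from the supremum identification and the tail bound alone — but you should state that case separately rather than let it ride on the degenerate H\"older application.
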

\begin{proof}
See Appendix~\ref{ap:cutting:poisson}.
\end{proof}
In the following we present an example to illustrate the theoretical results developed in the preceding sections and their performance. Note that for the discrete-time Poisson channel Assumption~\ref{ass:channel:Poisson} clearly holds.
\begin{myex} \label{ex:poisson}
We consider a discrete-time Poisson channel $W$ as defined in \eqref{eq:WpoissonIntro} with a peak-power constraint $A$ and dark current $\eta = 1$. Up to now, the best known lower bound for the capacity is given by \cite[Theorem~4]{lapidoth09}
\begin{equation}
C(W) \geq \frac{1}{\ln 2}\left(\frac{1}{2} \ln A  + \left(\frac{A}{3}+1 \right)\ln \left(1 + \frac{3}{A} \right)-1 -\sqrt{\frac{\eta+\tfrac{1}{12}}{A}}\left(\frac{\pi}{4} + \frac{1}{2}\ln 2 \right) - \frac{1}{2}\ln \frac{\pi e}{2}\right)\!. \label{eq:MoserLB}
\end{equation}
To the best of our knowledge no upper bound for the capacity is known. In \cite{lapidoth09} an asymptotic upper bound is given which includes an unknown error term that is vanishing in the limit $A \to \infty$.
According to Theorems~\ref{thm:C} and \ref{thm:error:bound:capacity:continuous:channel}, the algorithm introduced in this article leads to an approximation error after $n$ iterations that is given by
\begin{align*}
\left|C_{\text{approx}}^{(n)}(W_{M})-C(W) \right| &\leq \left|C_{\text{approx}}^{(n)}(W_{M})-C(W_M) \right| + \left|C(W_M)-C(W) \right| \nonumber\\
&\leq F(\hat{\lambda}) + G(\hat{\lambda}) - \I{\hat{p}}{W} + \mathcal{E} , \label{eq:error:poisson:example}
\end{align*}
where $\mathcal{E} = \frac{2\log(\e)}{\e(1-k)} \Big[ M^{1-k}\big(R_1(M)\big)^k + R_k(M) \Big]$, $R_\ell(M) = \Big( {\alpha\e^{(\alpha-1)(A+\eta)}\frac{(A+\eta)^{M}}{M!}} \Big)^\ell$ and $\alpha \Let 2^{(\ell^{-1}-1)}$ for any $k\in(0,1)$ and $\ell\in(0,1]$. The truncation parameter $M$ was determined as described in Remark~\ref{rmk:optimal:truncation}. This finally leads to the following upper and lower bounds on $C(W)$
\begin{equation} \label{eq:upper:lower:Cp}
	2 \I{\hat{p}}{W}  -  \left( F(\hat{\lambda}) + G(\hat{\lambda}) \right) - \mathcal{E} \leq C(W) \leq 2  \left( F(\hat{\lambda}) + G(\hat{\lambda})  \right) - \I{\hat{p}}{W} +  \mathcal{E}. 
\end{equation}
Figure~\ref{fig:poissonPlott} compares the two bounds \eqref{eq:MoserLB} and \eqref{eq:upper:lower:Cp} for different values of $A$. Further details on the simulation can be found in Appendix~\ref{app:simulation:details}.

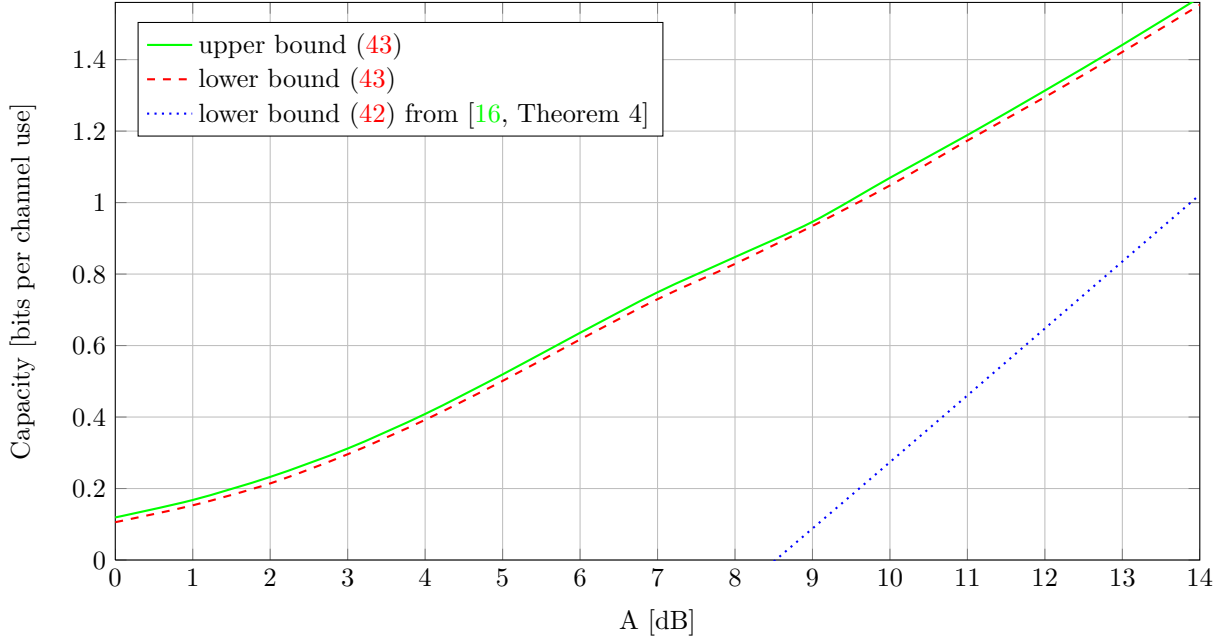
\begin{figure}[!htb]
  \begin{tikzpicture}
	\begin{axis}[
		height=9cm,
		width=16cm,
		grid=major,
		xlabel=A \bracket{dB},
		ylabel=Capacity \bracket{bits per channel use},
		xmin=0,
		xmax=14,
		ymax=1.56,
		ymin=0,
		legend style={at={(0.263,0.965)},anchor=north,legend cell align=left} 
	]

%%%%%%%%

%%%%%%%%
%upper bound C_UB
	\addplot[green,thick,smooth] coordinates {
	     (0,0.1191)	
		(1,0.1679)
		(2,0.2325)
		(3,0.3119)
		(4,0.4086)
		(5,0.5192)	
		(6,0.6357)	
		(7,0.7490)	
		(8,0.8476)	
		(9,0.9463)	
		(10,1.0693)
		(11,1.1889)
		(12, 1.3133)	
		(13, 1.4412)
		(14, 1.5743)		
		
% eta=3 with new thm	
%	     (0.1,0.0585)	
%		(1,0.0820)
%		(2,0.1191)
%		(3,0.1708)
%		(4,0.2405)
%		(5,0.3255)	
%		(6,0.4342)	
%		(7,0.5575)	
%		(8,0.6862)	
%		(9,0.7981)	
%		(10,0.9068)	
		
	% eta=3, with old thm		
%		(0.1,0.0575)	
%		(1,0.0811)
%		(2,0.1159)
%		(3,0.1666)
%		(4,0.2360)
%		(5,0.3246)	
%		(6,0.4328)	
%		(7,0.5568)	
%		(8,0.6846)	
%		(9,0.8000)	
%		(10,0.9059)	
%		(11,1.0301)	
%		(12,1.1673)
%		(13,1.3025)			
%		(14,1.4449)
%		(15,1.5987)	
%		(16,1.7547)	
%		(17,2.0057)										
	};
	\addlegendentry{upper bound \eqref{eq:upper:lower:Cp}}
		
	%aposteriori error
		\addplot[red,thick,smooth,dashed] coordinates {
		(0,0.1058)	
		(1,0.1530)
		(2,0.2143)
		(3,0.2959)
		(4,0.3922)
		(5,0.5010)	
		(6,0.6170)	
		(7,0.7297)	
		(8,0.8285)	
		(9,0.9347)	
		(10,1.0479)		
		(11,1.1734)
		(12, 1.2950)
		(13, 1.4214)
		(14, 1.5532)
% dark current eta=3 with correct thm	
%		(0.1,0.0471)	
%		(1,0.0702)
%		(2,0.1063)
%		(3,0.1569)
%		(4,0.2248)
%		(5,0.3152)	
%		(6,0.4221)	
%		(7,0.5434)	
%		(8,0.6681)	
%		(9,0.7877)	
%		(10,0.8879)	
		
		%eta=3, old stuff with "wrong theorem" (first fix nu)
%		(0.1,0.0425)	
%		(1,0.0649)
%		(2,0.1046)
%		(3,0.1558)
%		(4,0.2229)
%		(5,0.3110)	
%		(6,0.4175)	
%		(7,0.5369)	
%		(8,0.6616)	
%		(9,0.7821)	
%		(10,0.8893)		
%		(11,1.0119)		
%		(12,1.1450)
%		(13,1.2784)					
%		(14,1.4161)
%		(15,1.5545)	
%		(16,1.6855)	
%		(17,1.7823)																			
	};
		\addlegendentry{lower bound \eqref{eq:upper:lower:Cp}}

		\addplot[blue,thick,smooth,dotted] coordinates {
		(8.5,-0.00428557)(8.51,-0.00244138)(8.52,-0.000597025)(8.53,0.00124748)(8.54,0.00309215)(8.55,0.00493697)(8.56,0.00678195)(8.57,0.00862708)(8.58,0.0104724)(8.59,0.0123178)(8.6,0.0141634)(8.61,0.0160092)(8.62,0.0178551)(8.63,0.0197011)(8.64,0.0215473)(8.65,0.0233937)(8.66,0.0252402)(8.67,0.0270869)(8.68,0.0289337)(8.69,0.0307807)(8.7,0.0326278)(8.71,0.0344751)(8.72,0.0363225)(8.73,0.0381701)(8.74,0.0400178)(8.75,0.0418656)(8.76,0.0437137)(8.77,0.0455618)(8.78,0.0474101)(8.79,0.0492586)(8.8,0.0511072)(8.81,0.0529559)(8.82,0.0548048)(8.83,0.0566539)(8.84,0.0585031)(8.85,0.0603524)(8.86,0.0622019)(8.87,0.0640515)(8.88,0.0659012)(8.89,0.0677511)(8.9,0.0696012)(8.91,0.0714514)(8.92,0.0733017)(8.93,0.0751521)(8.94,0.0770028)(8.95,0.0788535)(8.96,0.0807044)(8.97,0.0825554)(8.98,0.0844066)(8.99,0.0862579)(9.,0.0881093)(9.01,0.0899609)(9.02,0.0918126)(9.03,0.0936645)(9.04,0.0955165)(9.05,0.0973686)(9.06,0.0992208)(9.07,0.101073)(9.08,0.102926)(9.09,0.104778)(9.1,0.106631)(9.11,0.108484)(9.12,0.110337)(9.13,0.11219)(9.14,0.114044)(9.15,0.115897)(9.16,0.117751)(9.17,0.119604)(9.18,0.121458)(9.19,0.123312)(9.2,0.125166)(9.21,0.127021)(9.22,0.128875)(9.23,0.13073)(9.24,0.132584)(9.25,0.134439)(9.26,0.136294)(9.27,0.138149)(9.28,0.140004)(9.29,0.141859)(9.3,0.143715)(9.31,0.14557)(9.32,0.147426)(9.33,0.149281)(9.34,0.151137)(9.35,0.152993)(9.36,0.154849)(9.37,0.156706)(9.38,0.158562)(9.39,0.160418)(9.4,0.162275)(9.41,0.164132)(9.42,0.165989)(9.43,0.167846)(9.44,0.169703)(9.45,0.17156)(9.46,0.173417)(9.47,0.175275)(9.48,0.177132)(9.49,0.17899)(9.5,0.180848)(9.51,0.182705)(9.52,0.184564)(9.53,0.186422)(9.54,0.18828)(9.55,0.190138)(9.56,0.191997)(9.57,0.193855)(9.58,0.195714)(9.59,0.197573)(9.6,0.199432)(9.61,0.201291)(9.62,0.20315)(9.63,0.205009)(9.64,0.206868)(9.65,0.208728)(9.66,0.210588)(9.67,0.212447)(9.68,0.214307)(9.69,0.216167)(9.7,0.218027)(9.71,0.219887)(9.72,0.221747)(9.73,0.223608)(9.74,0.225468)(9.75,0.227329)(9.76,0.229189)(9.77,0.23105)(9.78,0.232911)(9.79,0.234772)(9.8,0.236633)(9.81,0.238494)(9.82,0.240355)(9.83,0.242217)(9.84,0.244078)(9.85,0.24594)(9.86,0.247801)(9.87,0.249663)(9.88,0.251525)(9.89,0.253387)(9.9,0.255249)(9.91,0.257111)(9.92,0.258973)(9.93,0.260836)(9.94,0.262698)(9.95,0.264561)(9.96,0.266423)(9.97,0.268286)(9.98,0.270149)(9.99,0.272012)(10.,0.273875)(10.01,0.275738)(10.02,0.277601)(10.03,0.279464)(10.04,0.281328)(10.05,0.283191)(10.06,0.285055)(10.07,0.286918)(10.08,0.288782)(10.09,0.290646)(10.1,0.29251)(10.11,0.294374)(10.12,0.296238)(10.13,0.298102)(10.14,0.299966)(10.15,0.301831)(10.16,0.303695)(10.17,0.30556)(10.18,0.307424)(10.19,0.309289)(10.2,0.311154)(10.21,0.313019)(10.22,0.314884)(10.23,0.316749)(10.24,0.318614)(10.25,0.320479)(10.26,0.322344)(10.27,0.32421)(10.28,0.326075)(10.29,0.327941)(10.3,0.329806)(10.31,0.331672)(10.32,0.333538)(10.33,0.335404)(10.34,0.33727)(10.35,0.339136)(10.36,0.341002)(10.37,0.342868)(10.38,0.344734)(10.39,0.3466)(10.4,0.348467)(10.41,0.350333)(10.42,0.3522)(10.43,0.354066)(10.44,0.355933)(10.45,0.3578)(10.46,0.359667)(10.47,0.361533)(10.48,0.3634)(10.49,0.365267)(10.5,0.367135)(10.51,0.369002)(10.52,0.370869)(10.53,0.372736)(10.54,0.374604)(10.55,0.376471)(10.56,0.378339)(10.57,0.380206)(10.58,0.382074)(10.59,0.383942)(10.6,0.38581)(10.61,0.387677)(10.62,0.389545)(10.63,0.391413)(10.64,0.393281)(10.65,0.39515)(10.66,0.397018)(10.67,0.398886)(10.68,0.400754)(10.69,0.402623)(10.7,0.404491)(10.71,0.40636)(10.72,0.408228)(10.73,0.410097)(10.74,0.411966)(10.75,0.413834)(10.76,0.415703)(10.77,0.417572)(10.78,0.419441)(10.79,0.42131)(10.8,0.423179)(10.81,0.425048)(10.82,0.426917)(10.83,0.428786)(10.84,0.430656)(10.85,0.432525)(10.86,0.434394)(10.87,0.436264)(10.88,0.438133)(10.89,0.440003)(10.9,0.441872)(10.91,0.443742)(10.92,0.445612)(10.93,0.447481)(10.94,0.449351)(10.95,0.451221)(10.96,0.453091)(10.97,0.454961)(10.98,0.456831)(10.99,0.458701)(11.,0.460571)(11.01,0.462441)(11.02,0.464312)(11.03,0.466182)(11.04,0.468052)(11.05,0.469922)(11.06,0.471793)(11.07,0.473663)(11.08,0.475534)(11.09,0.477404)(11.1,0.479275)(11.11,0.481145)(11.12,0.483016)(11.13,0.484887)(11.14,0.486758)(11.15,0.488628)(11.16,0.490499)(11.17,0.49237)(11.18,0.494241)(11.19,0.496112)(11.2,0.497983)(11.21,0.499854)(11.22,0.501725)(11.23,0.503596)(11.24,0.505467)(11.25,0.507338)(11.26,0.50921)(11.27,0.511081)(11.28,0.512952)(11.29,0.514824)(11.3,0.516695)(11.31,0.518566)(11.32,0.520438)(11.33,0.522309)(11.34,0.524181)(11.35,0.526052)(11.36,0.527924)(11.37,0.529796)(11.38,0.531667)(11.39,0.533539)(11.4,0.535411)(11.41,0.537282)(11.42,0.539154)(11.43,0.541026)(11.44,0.542898)(11.45,0.54477)(11.46,0.546642)(11.47,0.548514)(11.48,0.550386)(11.49,0.552258)(11.5,0.55413)(11.51,0.556002)(11.52,0.557874)(11.53,0.559746)(11.54,0.561618)(11.55,0.56349)(11.56,0.565362)(11.57,0.567234)(11.58,0.569107)(11.59,0.570979)(11.6,0.572851)(11.61,0.574723)(11.62,0.576596)(11.63,0.578468)(11.64,0.58034)(11.65,0.582213)(11.66,0.584085)(11.67,0.585958)(11.68,0.58783)(11.69,0.589703)(11.7,0.591575)(11.71,0.593448)(11.72,0.59532)(11.73,0.597193)(11.74,0.599065)(11.75,0.600938)(11.76,0.60281)(11.77,0.604683)(11.78,0.606556)(11.79,0.608428)(11.8,0.610301)(11.81,0.612174)(11.82,0.614046)(11.83,0.615919)(11.84,0.617792)(11.85,0.619665)(11.86,0.621537)(11.87,0.62341)(11.88,0.625283)(11.89,0.627156)(11.9,0.629028)(11.91,0.630901)(11.92,0.632774)(11.93,0.634647)(11.94,0.63652)(11.95,0.638393)(11.96,0.640266)(11.97,0.642138)(11.98,0.644011)(11.99,0.645884)(12.,0.647757)(12.01,0.64963)(12.02,0.651503)(12.03,0.653376)(12.04,0.655249)(12.05,0.657122)(12.06,0.658995)(12.07,0.660868)(12.08,0.662741)(12.09,0.664613)(12.1,0.666486)(12.11,0.668359)(12.12,0.670232)(12.13,0.672105)(12.14,0.673978)(12.15,0.675851)(12.16,0.677724)(12.17,0.679597)(12.18,0.68147)(12.19,0.683343)(12.2,0.685216)(12.21,0.687089)(12.22,0.688962)(12.23,0.690835)(12.24,0.692708)(12.25,0.694581)(12.26,0.696454)(12.27,0.698327)(12.28,0.7002)(12.29,0.702073)(12.3,0.703946)(12.31,0.705819)(12.32,0.707692)(12.33,0.709565)(12.34,0.711438)(12.35,0.713311)(12.36,0.715184)(12.37,0.717057)(12.38,0.718929)(12.39,0.720802)(12.4,0.722675)(12.41,0.724548)(12.42,0.726421)(12.43,0.728294)(12.44,0.730167)(12.45,0.73204)(12.46,0.733913)(12.47,0.735785)(12.48,0.737658)(12.49,0.739531)(12.5,0.741404)(12.51,0.743277)(12.52,0.74515)(12.53,0.747022)(12.54,0.748895)(12.55,0.750768)(12.56,0.752641)(12.57,0.754513)(12.58,0.756386)(12.59,0.758259)(12.6,0.760132)(12.61,0.762004)(12.62,0.763877)(12.63,0.76575)(12.64,0.767622)(12.65,0.769495)(12.66,0.771367)(12.67,0.77324)(12.68,0.775113)(12.69,0.776985)(12.7,0.778858)(12.71,0.78073)(12.72,0.782603)(12.73,0.784475)(12.74,0.786348)(12.75,0.78822)(12.76,0.790093)(12.77,0.791965)(12.78,0.793837)(12.79,0.79571)(12.8,0.797582)(12.81,0.799454)(12.82,0.801327)(12.83,0.803199)(12.84,0.805071)(12.85,0.806943)(12.86,0.808816)(12.87,0.810688)(12.88,0.81256)(12.89,0.814432)(12.9,0.816304)(12.91,0.818176)(12.92,0.820048)(12.93,0.82192)(12.94,0.823792)(12.95,0.825664)(12.96,0.827536)(12.97,0.829408)(12.98,0.83128)(12.99,0.833152)(13.,0.835024) (13.01,0.836896) (13.02,0.838768) (13.03,0.840639) (13.04,0.842511) (13.05,0.844383) (13.06,0.846255) (13.07,0.848126) (13.08,0.849998) (13.09,0.851869) (13.1,0.853741) (13.11,0.855613) (13.12,0.857484) (13.13,0.859356) (13.14,0.861227) (13.15,0.863098) (13.16,0.86497) (13.17,0.866841) (13.18,0.868712) (13.19,0.870584) (13.2,0.872455) (13.21,0.874326) (13.22,0.876197) (13.23,0.878068) (13.24,0.87994) (13.25,0.881811) (13.26,0.883682) (13.27,0.885553) (13.28,0.887424) (13.29,0.889294) (13.3,0.891165) (13.31,0.893036) (13.32,0.894907) (13.33,0.896778) (13.34,0.898649) (13.35,0.900519) (13.36,0.90239) (13.37,0.904261) (13.38,0.906131) (13.39,0.908002) (13.4,0.909872) (13.41,0.911743) (13.42,0.913613) (13.43,0.915483) (13.44,0.917354) (13.45,0.919224) (13.46,0.921094) (13.47,0.922964) (13.48,0.924835) (13.49,0.926705) (13.5,0.928575) (13.51,0.930445) (13.52,0.932315) (13.53,0.934185) (13.54,0.936055) (13.55,0.937925) (13.56,0.939794) (13.57,0.941664) (13.58,0.943534) (13.59,0.945404) (13.6,0.947273) (13.61,0.949143) (13.62,0.951012) (13.63,0.952882) (13.64,0.954751) (13.65,0.956621) (13.66,0.95849) (13.67,0.960359) (13.68,0.962229) (13.69,0.964098) (13.7,0.965967) (13.71,0.967836) (13.72,0.969705) (13.73,0.971574) (13.74,0.973443) (13.75,0.975312) (13.76,0.977181) (13.77,0.97905) (13.78,0.980919) (13.79,0.982787) (13.8,0.984656) (13.81,0.986525) (13.82,0.988393) (13.83,0.990262) (13.84,0.99213) (13.85,0.993999) (13.86,0.995867) (13.87,0.997735) (13.88,0.999603) (13.89,1.00147) (13.9,1.00334) (13.91,1.00521) (13.92,1.00708) (13.93,1.00894) (13.94,1.01081) (13.95,1.01268) (13.96,1.01455) (13.97,1.01642) (13.98,1.01828) (13.99,1.02015) (14.,1.02202)

	};
	\addlegendentry{lower bound \eqref{eq:MoserLB} from \cite[Theorem~4]{lapidoth09}}

	\end{axis}  
\end{tikzpicture}
\label{fig:poissonPlott}

\caption{This plot depicts the capacity of a discrete-time Poisson channel with dark current $\eta=1$ as a function of the peak-power constraint parameter $A$. The red (resp.\ green) line shows the lower (resp.\ upper) bound \eqref{eq:upper:lower:Cp} obtained for a moderate number of iterations, see Appendix~\ref{app:simulation:details}. As a comparison we plot the lower bound of \cite{lapidoth09}, which to the best of our knowledge is the tightest
lower bound available to date (blue line). The parameter $A$ is given in decibels where $A[\textnormal{dB}]=10 \log_{10}(A)$.}

\end{figure}
\end{myex}

\begin{myremark}[AWGN channel with a quantized output]
Another example of a channel that is well studied and can be treated by the proposed method is the discrete-time additive white Gaussian noise (AWGN) channel under output quantization. The output of the channel is described by
\begin{equation*}
Y = \mathsf{Q}(X+N),
\end{equation*}
where $X \in \R$ is the channel input, $N \sim \mathcal{N}(0,\sigma^2)$ for $\sigma^2 > 0$ is white Gaussian noise and $\mathsf{Q}(\cdot)$ is a quantizer that maps the real valued input $X + N$ to one of $M$ bins (where we assume $M < \infty$), which gives $Y \in \{y_1,\ldots,y_MÊ\}$. In addition an average and/or a peak power constraint at the input is considered. More information about this channel model and why it is of interest can be found in \cite{singh09,koch13}. By definition, the AWGN channel with a quantized output has a continuous input alphabet and a discrete output alphabet. Thus, the approximation method discussed in this section can be used to compute the capacity of such channels.  
\end{myremark}

%%%%%%%%%%%%%%%%%%%%%%%%%%%%%%%%%
%%%%%%%%%%%%%%%%%%%%%%%%%%%%%%%%%
\section{Conclusion and Future Work} \label{sec:conclusion}
We introduced a new approach to approximate the capacity of DMCs possibly having constraints on the input distribution. The dual problem of Shannon's capacity formula turns out to have a particular structure such that the Lagrange dual function admits a closed form solution. Applying smoothing techniques to the non-smooth dual function enables us to solve the dual problem efficiently. This new approach, in the case of no constraints on the input distribution, has a computational complexity per iteration step of $O(MN)$, where $N$ is the input alphabet size and $M$ the size of the output alphabet. In comparison, the Blahut-Arimoto algorithm has the same computational cost of $O(MN)$ per iteration step. More precisely for no input power constraint, the total computational cost to find an $\varepsilon$-close solution is $O(\tfrac{M^2 N \sqrt{\log(N)}}{\varepsilon})$ for the algorithm developed in this article, whereas the Blahut-Arimoto algorithm requires $O(\tfrac{MN \log(N)}{\varepsilon})$. 
%We would like to emphasize that the computational cost of the smallest unit, i.e.\ the cost of one iteration is strictly better for the algorithm introduced in this article. As highlighted by Example~\ref{ex:killBlahut}, this can make a crucial difference especially for large input alphabets. 
A strength of the new approach is that it provides an a posteriori error, i.e., after having run a certain number of iterations we can precisely estimate the actual error in the current approximation. This is computationally appealing as explicit (or a priori) error bounds often are conservative in practice. By exploiting this a posteriori bound we can stop the computation once the desired accuracy has been reached.

As a second contribution, we have shown how similar ideas can be used to approximate the capacity of memoryless channels with continuous bounded input alphabets and countable output alphabets under a mild assumption on the channels tail. This assumption holds, for example, for discrete-time Poisson channels, allowing us to efficiently approximate their capacity. As an example we derived upper and lower bounds for a discrete-time Poisson channel having a peak-power constraint at the input. 

The presented optimization method highly depends on the Lipschitz constant estimate of the objective's gradient. The worse this estimate the more steps the method requires for an a priori $\varepsilon$-precision. For future work, we aim to study the derivation of local Lipschitz constants of the gradient. This technique has recently been shown to be very efficient in practice (up to three orders of magnitude reduction of computation time), while preserving the worst-case complexity \cite{ref:Baes-14}.

In the case of a continuous input alphabet, the proposed method requires to evaluate the gradient $\nabla G_{\nu}(\cdot)$ in every step of Algorithm~1, that requires solving an integral over $\mathbb{A}$. As such the method used to compute those integrals has to be included to the complexity of the proposed algorithm. Therefore, it would be interesting to investigate under which structural properties on the channel the gradient $\nabla G_{\nu}(\cdot)$ can be evaluated efficiently.

The approach introduced in this article can be used to efficiently approximate the capacity of classical-quantum channels, i.e., channels that have classical input and quantum mechanical output, with a discrete or bounded continuous input alphabet. Using the idea of a universal encoder allows us to compute close upper and lower bounds for the Holevo capacity \cite{DavidSutter14}.

%%%%%%%%%%%%%%%%%%%%%%%%%%%%%%%%%%

%%%%%%%%%%%%%%%%%%%%%%%%%%%%%%%%%%%
\vspace{5mm}

%%%%%%%%%%%%%%%%%%%%%%%%%%%%%%%%%%
%%%%%%%%%%%%%%%%%%%%%%%%%%%%%%%%%%%

%%%%%%%%%%%%%%%%%%%%%%%%%%%%%%%%%%%
%%%%%%%%%%%%%%%%%%%%%%%%%%%%%%%%%%%

%\clearpage

%%%%%%%%%%%%%%%%%
%%%%%%%%%%%%%%%%%
%%%%%%%%%%%%%%%%%

%\widetext
\appendix

%%%%%%%%%%%%%%%%%%%%%%%%%%%%%%%%%%%%%%%%%%%%%%%%%%%
%%%%%%%%%%%%%%%%%%%%%%%%%%%%%%%%%%%%%%%%%%%%%%%%%%%

%%%%%%%%%%%%%%%%%%%%%%%%%%%%%%%%%%%%%%%%%%%%%%%%%%%
%%%%%%%%%%%%%%%%%%%%%%%%%%%%%%%%%%%%%%%%%%%%%%%%%%%

\section{Proofs}
This appendix collects the technical proofs omitted above.
\subsection{Proof of Lemma~\ref{lem:equivalent:primal:problem}} \label{ap:LemmaMung}
The mutual information $\I{p}{W}$ can be expressed as
\begin{align*}
\I{p}{W} 	&= 	\sum_{i=1}^{N} \sum_{j=1}^{M} \W_{ij}p_{i}\log\left( \frac{\W_{ij}}{\sum_{k=1}^{N}\W_{kj}p_{k}} \right) \\
		&=	\sum_{i=1}^{N} \sum_{j=1}^{M} \left[ p_{i} \W_{ij}\log(\W_{ij}) - p_{i}\W_{ij}\log\left(\sum_{k=1}^{N}\W_{kj}p_{k} \right) \right].
\end{align*}
By adding the constraint $\sum_{i=1}^{N}p_{i}\W_{ij}=q_{j}$ for all $j=1,\hdots, M$, 
\begin{align*}
\I{p}{W} 	&= 	\sum_{i=1}^{N} \sum_{j=1}^{M} \left[ p_{i} \W_{ij}\log(\W_{ij}) - p_{i}\W_{ij}\log (q_{j}) \right]  \\
		&=	\sum_{i=1}^{N} \sum_{j=1}^{M}  p_{i} \W_{ij}\log(\W_{ij})  -\sum_{j=1}^{M} q_{j }\log(q_{j}) \\
		&= 	- r\transp p + H(q),
\end{align*}
where $p\in\Delta_{N}$. Since $q=\W\transp p$ and $\W\transp$ is a stochastic matrix, this implies $q\in\Delta_{M}$. 
By definition of $S_{\max}$ it is obvious that the input cost constraint $s\transp p \leq S$ is inactive for $S\geq S_{\max}$, leading to the first optimization problem in Lemma~\ref{lem:equivalent:primal:problem}. It remains to show that for
$S<S_{\max}$, the input constraint can be written with equality, leading to the second optimization problem in Lemma~\ref{lem:equivalent:primal:problem}. 
In oder to keep the notation simple we define $\mathsf{C}(S):=C_{S}(W)$ for a fixed channel $W$.
We show that $\mathsf{C}(S)$ is concave in $S$ for $S\in[0,S_{\max}]$. Let $S^{(1)},S^{(2)} \in [0,S_{\max}]$, $0\leq \lambda \leq 1$ and $p^{(i)}$ probability mass functions that achieve $\mathsf{C}(S^{(i)})$ for $i \in \{1,2 \}$. Consider the probability mass function $p^{(\lambda)}=\lambda p^{(1)}+ (1-\lambda) p^{(2)}$. We can write
\begin{align}
s\transp p^{(\lambda)} &= \lambda s\transp p^{(1)} + (1-\lambda) s\transp p^{(2)} \nonumber\\
 &\leq \lambda S^{(1)}+(1-\lambda) S^{(2)} \nonumber\\
 &=: S^{(\lambda)} \label{eq:slam} \in [0,S_{\max}].
\end{align}
Using the concavity of the mutual information in the input distribution, we obtain
\begin{align*}
\lambda \mathsf{C}(S^{(1)}) + (1-\lambda) \mathsf{C}(S^{(2)}) &= \lambda \I{p^{(1)}}{W}+(1-\lambda) \I{p^{(2)}}{W}  \\
&\leq \I{p^{(\lambda)}}{W} \\
&\leq \mathsf{C}(S^{(\lambda)}),
\end{align*}
where the final inequality follows by Shannon's formula for the capacity given in \eqref{eq:shannon48}. $\mathsf{C}(S)$ clearly is non-decreasing in $S$ since enlarging $S$ relaxes the input cost constraint. Furthermore, we show that
\begin{equation} \label{eq:proof:ineq:equality:epsilon:step}
\mathsf{C}(S_{\max}-\varepsilon)<\mathsf{C}(S_{\max}), \quad \text{for all }\varepsilon>0.
\end{equation}
Suppose $\mathsf{C}(S_{\max}-\varepsilon)=\mathsf{C}(S_{\max})$ and denote $\mathsf{C}^{\star}:=\max\limits_{p\in\Delta_{N}}\I{p}{W}$. This then implies that there exists $\bar{p}\in\Delta_{N}$ such that $\I{\bar{p}}{W}=\mathsf{C}^{\star}$ and $s\transp \bar{p}\leq S_{\max}-\varepsilon$, which contradicts the definition of $S_{\max}$. Hence, the concavity of $\mathsf{C}(S)$ together with the non-decreasing property and \eqref{eq:proof:ineq:equality:epsilon:step} imply that $\mathsf{C}(S)$ is strictly increasing in $S$. 
\qed

%%%%%%%%%%%%%%%%%%%%%%%%

\subsection{Proof of Lemma~\ref{lem:cover}} \label{ap:cover}
This proof is similar to the proof given in \cite[Theorem~12.1.1]{cover}. Let $q$ satisfy the constraints in \eqref{opt:cover}. Then
\begin{subequations}
\begin{align}
J(q)&= H(q)-c \transp q = -\sum_{i=1}^N q_i \log(q_i) -c\transp q   \nonumber \\
 &= -\sum_{i=1}^N q_i \log\left( \frac{q_i}{p^\star_i}p^\star_i \right) -c\transp q=-\D{q}{p^\star}- \sum_{i=1}^N q_i \log(p^\star_i) - c \transp q  \nonumber \\
 & \leq - \sum_{i=1}^N q_i \log(p^\star_i) - c \transp q \label{eq:ineq}\\
 & = - \sum_{i=1}^N q_i \left(\mu_1 + \mu_2 s_i \right) \label{eq:step1}\\
 &= -\sum_{i=1}^N p^{\star}_i \left(\mu_1 + \mu_2 s_i \right) -c\transp p^\star +c \transp p^\star \label{eq:step2}\\ 
 &= -\sum_{i=1}^N p^\star_i \log(p^\star_i) - c\transp p^\star= J(p^\star). \nonumber
\end{align}
\end{subequations}
The inequality follows form the non-negativity of the relative entropy. Equality \eqref{eq:step1} follows by the definition of $p^\star$ and \eqref{eq:step2} uses the fact that both $p^\star$ and $q$ satisfy the constraints in \eqref{opt:cover}. Note that equality holds in \eqref{eq:ineq} if and only if $q = p^\star$. This proves the uniqueness. \qed

%%%%%%%%%%%%%%%%%%%%%%%%

\subsection{Proof of Lemma~\ref{lem:compact:set}} \label{ap:bounding:lambda}
Consider the following two convex optimization problems
\begin{align*} 
\mathsf{P}_{\beta}:\quad  \left\{ \begin{array}{ll}
			\max\limits_{p,q,\varepsilon} 		&-r\transp p + H(q) - \beta\varepsilon\\
			\text{s.t. }						&\|\W\transp p - q\|_{\infty}\leq \varepsilon  \\
										& s\transp p = S \\
			 							& p\in\Delta_{N}, \ q\in\Delta_{M}, \ \varepsilon \in\Rp
	\end{array} \right.  \quad \textnormal{and} \quad
	\quad 
\mathsf{D}_{\beta}: \quad  \left\{ \begin{array}{ll}
			\min\limits_{\lambda} 		&F(\lambda) + G(\lambda) \\
			\text{s.t. } 					& \norm{\lambda}_{1} \leq \beta \\
									& \lambda\in \R^{M}.
	\end{array}\right. 
\end{align*}
\begin{myclaim}
Strong duality holds between $\mathsf{P}_{\beta}$ and $\mathsf{D}_{\beta}$.
\end{myclaim}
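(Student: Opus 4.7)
The plan is to derive the Lagrangian dual of $\mathsf{P}_{\beta}$ explicitly and recognize it as $\mathsf{D}_{\beta}$, then invoke a standard strong duality theorem for convex programs with affine constraints. The key preliminary observation is that $\mathsf{P}_{\beta}$ is a \emph{convex} program: its objective is concave (linear in $p$ plus the concave entropy $H(q)$ minus $\beta\varepsilon$), the constraint $\|\W\transp p - q\|_{\infty}\leq \varepsilon$ rewrites as the two affine vector inequalities $\W\transp p - q \leq \varepsilon \mathbf 1$ and $-\W\transp p + q \leq \varepsilon \mathbf 1$, the cost constraint $s\transp p = S$ is affine, and $\Delta_N$, $\Delta_M$, $\R_{\geq 0}$ are convex. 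Since every constraint is affine, strong duality holds by \cite[Proposition~5.3.1]{ref:Bertsekas-09} as soon as the primal is feasible (take any $p\in\Delta_N$ with $s\transp p=S$, any $q\in\Delta_M$, and $\varepsilon$ large) and the optimal value is finite (the simplex sets are compact and the objective is continuous).

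To identify the dual explicitly, I would introduce Lagrange multipliers $\alpha_{+},\alpha_{-}\in\R_{\geq 0}^{M}$ for the two vector inequalities above and keep the simplex/equality constraints as abstract domain constraints. Collecting terms, the Lagrangian is
\begin{align*}
L = -r\transp p + (\alpha_{-}-\alpha_{+})\transp \W\transp p + H(q) + (\alpha_{+}-\alpha_{-})\transp q + \bigl(\mathbf 1\transp(\alpha_{+}+\alpha_{-})-\beta\bigr)\varepsilon.
\end{align*}
Maximizing over $\varepsilon\geq 0$ forces $\mathbf 1\transp(\alpha_{+}+\alpha_{-})\leq\beta$ (else the dual function is $+\infty$), in which case the $\varepsilon$-term contributes $0$. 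The remaining maximization decouples over $p$ and $q$: setting $\lambda:=\alpha_{-}-\alpha_{+}$, the $p$-maximization is exactly $G(\lambda)$ from \eqref{equation:F:and:G}, and the $q$-maximization yields $F(-\lambda)$ with the sign convention of \eqref{equation:F:and:G}. Using that $F(\cdot)$ is even in the relevant sense (or, equivalently, substituting $\lambda \mapsto -\lambda$ at the end), the dual problem becomes $\inf\{F(\lambda)+G(\lambda)\,:\,\lambda=\alpha_{-}-\alpha_{+},\ \alpha_{\pm}\geq 0,\ \mathbf 1\transp(\alpha_{+}+\alpha_{-})\leq\beta\}$.

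Finally, I would observe that the parametrization $\lambda=\alpha_{-}-\alpha_{+}$ with $\alpha_{\pm}\geq 0$ and $\mathbf 1\transp(\alpha_{+}+\alpha_{-})\leq\beta$ traces out exactly the $\ell_{1}$-ball $\{\lambda\in\R^{M}:\|\lambda\|_{1}\leq\beta\}$: for any such $\lambda$, the choice $\alpha_{+}=(-\lambda)_{+}$, $\alpha_{-}=\lambda_{+}$ (componentwise positive parts) gives $\alpha_{-}-\alpha_{+}=\lambda$ and $\alpha_{+}+\alpha_{-}=|\lambda|$, so $\mathbf 1\transp(\alpha_{+}+\alpha_{-})=\|\lambda\|_{1}\leq\beta$; conversely, any admissible $(\alpha_{+},\alpha_{-})$ gives $\|\lambda\|_{1}\leq \mathbf 1\transp(\alpha_{+}+\alpha_{-})\leq\beta$. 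Hence the dual reduces to $\min_{\|\lambda\|_{1}\leq\beta}\{F(\lambda)+G(\lambda)\}=\mathsf{D}_{\beta}$, and strong duality is inherited from the general statement above.

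The computations are routine; the only mild obstacle I expect is keeping the sign conventions straight when matching the decoupled inner maxima with the definitions of $F$ and $G$ in \eqref{equation:F:and:G}, and justifying that the abstract-set treatment of $\Delta_{N}$, $\Delta_{M}$ still falls into the hypothesis of \cite[Proposition~5.3.1]{ref:Bertsekas-09}; this is standard since these sets are polyhedral and can themselves be absorbed into the list of affine (in)equalities.
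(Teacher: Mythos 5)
Your proof is correct, and it reaches $\mathsf{D}_{\beta}$ by a mechanically different route than the paper. The paper never introduces multipliers for the individual inequalities: it uses the dual-norm identity $\norm{\W\transp p-q}_{\infty}=\max_{\norm{\lambda}_{1}\leq 1}\lambda\transp(\W\transp p-q)$ to eliminate $\varepsilon$ and rewrite $\mathsf{P}_{\beta}$ as a max--min problem over $(p,q)$ and $\norm{\lambda}_{1}\leq\beta$, and then exchanges $\max$ and $\min$, citing \cite[Proposition~5.3.1]{ref:Bertsekas-09}; the $\ell_{1}$-ball thus appears immediately from the dual-norm pairing. You instead split the $\ell_{\infty}$-constraint into $2M$ affine inequalities, run the standard Lagrangian computation with multipliers $\alpha_{\pm}\geq 0$, maximize out $\varepsilon\geq 0$ to obtain $\boldsymbol{1}\transp(\alpha_{+}+\alpha_{-})\leq\beta$, and recover the $\ell_{1}$-ball via the parametrization $\lambda=\alpha_{-}-\alpha_{+}$, invoking the same Bertsekas result (for which your explicit mention of feasibility of $\{p\in\Delta_{N}\,:\,s\transp p=S\}$ and finiteness of the value is appropriate; the paper assumes this implicitly). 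Your computation is more elementary and explicit; the paper's is shorter and makes the role of the $\ell_{1}/\ell_{\infty}$ duality transparent. One small correction to your bookkeeping: with $\lambda:=\alpha_{-}-\alpha_{+}$ the $q$-part of your Lagrangian is $H(q)+(\alpha_{+}-\alpha_{-})\transp q=H(q)-\lambda\transp q$, so the inner maximum is exactly $F(\lambda)$ as defined in \eqref{equation:F:and:G}, not $F(-\lambda)$; no evenness or $\lambda\mapsto-\lambda$ substitution is needed, and such a substitution could not be used to repair a genuine mismatch anyway, since it would also flip the argument of $G$ (which is not even). With that sign fixed, the decoupled dual is $\min\{F(\lambda)+G(\lambda)\,:\,\norm{\lambda}_{1}\leq\beta\}=\mathsf{D}_{\beta}$ as claimed, so the slip does not affect the validity of your argument.
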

\begin{proof}
According to the identity $\norm{\W\transp p - q}_{\infty}=\max_{\norm{\lambda}_{1}\leq 1} \lambda\transp \left( \W\transp p - q \right)$ \cite[p.~7]{holevo_book} the optimization problem $\mathsf{P}_{\beta}$ can be rewritten as
\begin{align*} 
\mathsf{P}_{\beta}:\quad  \left\{ \begin{array}{ll}
			\max\limits_{p,q} 			&-r\transp p + H(q) + \min\limits_{\norm{\lambda}_{1}\leq \beta}\lambda\transp \left( \W\transp p - q \right)\\
			\text{s.t. }					& s\transp p = S \\
			 							& p\in\Delta_{N}, \ q\in\Delta_{M},
	\end{array} \right.
\end{align*}
whose dual program, where strong duality holds according to \cite[Proposition~5.3.1, p.~169]{ref:Bertsekas-09} is given by
\begin{align*}
 \quad  \left\{ \begin{array}{lll}
			\min\limits_{\norm{\lambda}_{1}\leq \beta}  &\max\limits_{p,q} 		&-r\transp p + H(q) + \lambda\transp \left( \W\transp p - q \right) \\
			&\text{s.t. } 				  &s\transp p = S \\
								&	&  p\in\Delta_{N}, \ q\in\Delta_{M},
	\end{array}\right. .
\end{align*}
which clearly is equivalent to $\mathsf{D}_{\beta}$ with $F(\cdot)$ and $G(\cdot)$ as given in \eqref{equation:F:and:G}.
\end{proof}
Denote by $\varepsilon^{\star}(\beta)$ the optimizer of $\mathsf{P}_{\beta}$ with the respective optimal value $J^{\star}_{\beta}$. We show that for a sufficiently large $\beta$ the optimizer $\varepsilon^{\star}(\beta)$ of $\mathsf{P}_{\beta}$ is equal to zero. Hence, in light of the duality relation, the constraint $\norm{\lambda}_{1} \leq \tfrac{\beta}{2}$ in $\mathsf{D}_{\beta}$ is inactive and as such $\mathsf{D}_{\beta}$ is equivalent to $\mathsf{D}$ in equation~\eqref{Lagrange:Dual:Program}. Note that for
\begin{align}  \label{eq:J(eps)}
J(\varepsilon):=  \left\{ \begin{array}{ll}
			\max\limits_{p,q} 		&-r\transp p + H(q) \\
			\text{s.t. }						&\|\W\transp p - q\|_{\infty}\leq \varepsilon \\
									 & s\transp p = S \\
			 							& p\in\Delta_{N}, \ q\in\Delta_{M}
	\end{array} \right. ,
	\end{align}
the mapping $\varepsilon \mapsto J(\varepsilon)$, the so-called perturbation function, is concave \cite[p.~268]{ref:BoyVan-04}. In the next step we write the optimization problem \eqref{eq:J(eps)} in another equivalent form
\begin{align}  \label{eq:J(eps):equiv}
J(\varepsilon)=  \left\{ \begin{array}{ll}
			\max\limits_{p,v} 		&-r\transp p + H(\W\transp p + \varepsilon v) \\
			\text{s.t. }						&\norm{v}_{\infty} \leq 1 \\
								      & s\transp p = S \\
			 							& p\in\Delta_{N}, \ v\in \mathsf{Im}(\W\transp)\subset\R^{M}
	\end{array} \right. .
	\end{align}
By using Taylor's theorem, there exists $y_{\varepsilon}\in[0,\varepsilon]$ such that the entropy term in the objective function of \eqref{eq:J(eps):equiv} can be bounded as
\begin{align}
H(\W\transp p + \varepsilon v) 	&=		H(\W\transp p) - \left( \log(\W\transp p) + \tfrac{1}{\ln 2}\boldsymbol{1} \right)\transp v \varepsilon - \sum_{j=1}^{M}\frac{v_{j}^{2}}{\sum_{i=1}^{N}\W_{ij}p_{i}+y_{\varepsilon}v_{j}}\varepsilon^{2} \tfrac{1}{\ln 2} \nonumber \\
						&\leq		H(\W\transp p) - \left( \log(\W\transp p) + \tfrac{1}{\ln 2}\boldsymbol{1}  \right)\transp v \varepsilon + \frac{M}{\gamma \ln 2}\varepsilon^{2} . \label{eq:Taylor:bound}
\end{align}
Thus, the optimal value of problem $\mathsf{P}_{\beta}$ can be expressed as
\begin{subequations}
\begin{align}
J_{\beta}^{\star} 	&\leq 		\max\limits_{\varepsilon} \left\{ J(\varepsilon)-\beta \varepsilon \right\} \nonumber \\
			&\leq		\max\limits_{\varepsilon} \left\{ \max\limits_{p,v}\left[ -r\transp p + H(\W\transp p) - \left( \log(\W\transp p) + \tfrac{1}{\ln 2} \boldsymbol{1} \right)\transp v \varepsilon :  s\transp p = S  \right]  + \frac{M}{\gamma \ln 2}\varepsilon^{2}  -\beta \varepsilon \right\} \label{eq:proof:compact:first:step}\\
			&\leq		\max\limits_{\varepsilon} \left\{ \max\limits_{p,v}\left[ -r\transp p + H(\W\transp p) \ : \ s\transp p = S  \right] + (\rho - \beta)\varepsilon + \frac{M}{\gamma \ln 2}\varepsilon^{2} \right\} \label{eq:proof:compact:second:step}\\
			&=		J(0) + \max\limits_{\varepsilon} \left\{  (\rho - \beta)\varepsilon +\frac{M}{\gamma \ln 2}\varepsilon^{2} \right\}, \label{eq:proof:compact:third:step}
\end{align}
\end{subequations}
where $\rho = M \left( \log(\gamma^{-1}) \vee \tfrac{1}{\ln 2}\right)$. Note that \eqref{eq:proof:compact:first:step} follows from $\eqref{eq:J(eps):equiv}$ and \eqref{eq:Taylor:bound}. The equation \eqref{eq:proof:compact:second:step} uses the fact that for $\|v\|_\infty \leq 1$, $- \left( \log(\W\transp p) + \tfrac{1}{\ln 2}\boldsymbol{1} \right)\transp v \leq M \left( \log(\gamma^{-1}) \vee \tfrac{1}{\ln 2}\right)$. Thus, for $\beta>\rho$ and $\varepsilon_{1}=\frac{\gamma\ln 2}{M}(\beta - \rho)$, we have $\max\limits_{\varepsilon\leq \varepsilon_{1}} \left\{  (\rho - \beta)\varepsilon + \tfrac{M}{\gamma \ln 2}\varepsilon^{2}\right\} = 0.$ Therefore, \eqref{eq:proof:compact:third:step} together with the concavity of the mapping $\varepsilon\mapsto J(\varepsilon)$ imply that $J(0)$ is the global optimum of $J(\varepsilon)$ and as such $\varepsilon^{\star}(\beta)=0$ for $\beta>\rho$, indicating that $\mathsf{P}_{\beta}$ is equivalent to $\mathsf{P}$ in the sense that $J^{\star}_{\beta}=J^{\star}_{0}$. By strong duality this implies that the constraint $\norm{\lambda}_{1} \leq \beta$ in $\mathsf{D}_{\beta}$ is inactive. Finally, $\norm{\lambda}_{2}\leq\norm{\lambda}_{1}$ concludes the proof.  \qed

%%%%%%%%%%%%%%%%%%%%%%%%

\subsection{Proof of Theorem~\ref{thm:C}} \label{ap:cutting:thm}
To prove Theorem \ref{thm:C} we need a preliminary lemma.
	
	\begin{mylem}
	\label{lem:log}
		Given $k \in (0,1)$ and $p\in[0,1]$, we have for all $x\in[0,1-p]$ 
		\begin{align*}
			\big| (p+x)\log(p+x) - p\log(p) \big| \le \frac{\log(\e)}{\e(1-k)} x^k.
		\end{align*}
	\end{mylem}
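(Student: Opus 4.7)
The plan is to reduce the claim to two one-variable inequalities by exploiting a monotonicity in the ``base point'' $p$. Set $f(t):=t\log t$, so the left-hand side equals $|g(x)|$ with $g(x):=f(p+x)-f(p)$. Since $f$ is strictly convex on $(0,1]$ with derivative $f'(t)=\log t+\log\e$, for each fixed $x\in(0,1]$ the auxiliary map $\Phi(p):=f(p+x)-f(p)$, defined on $[0,1-x]$, satisfies $\Phi'(p)=\log((p+x)/p)>0$ on $(0,1-x]$ and is therefore strictly increasing. Consequently $\Phi(p)\in[\Phi(0),\Phi(1-x)]=[f(x),-f(1-x)]$ for every admissible $p$; combining this with the fact $f\le 0$ on $[0,1]$ gives
\begin{equation*}
|g(x)|\le \max\bigl\{-f(x),\,-f(1-x)\bigr\},
\end{equation*}
so it suffices to bound each of the two endpoint quantities by $\frac{\log(\e)}{\e(1-k)}x^k$.

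For the first endpoint, I would verify $-f(x)=x\log(1/x)\le\frac{\log(\e)}{\e(1-k)}x^k$ by elementary calculus: the function $\omega(x):=x^{1-k}\log(1/x)$ on $(0,1]$ has a unique critical point at $x^\star=\e^{-1/(1-k)}$, at which a direct computation gives $\omega(x^\star)=\log(\e)/(\e(1-k))$, which is the global maximum.

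The second endpoint is the main obstacle: after the substitution $u:=1-x\in[0,1]$ it amounts to the genuinely asymmetric inequality $-u\log u\le\frac{\log(\e)}{\e(1-k)}(1-u)^k$. I would treat this via the further substitution $u=\e^{-s}$, $s\ge 0$, which converts the target to $\e(1-k)\,s\e^{-s}\le(1-\e^{-s})^k$. The standard estimate $1-\e^{-s}\ge s\e^{-s}$ (the difference vanishes at $s=0$ and has non-negative derivative $s\e^{-s}$) lets me replace the right-hand side by $(s\e^{-s})^k$, reducing matters to $(s\e^{-s})^{1-k}\le 1/(\e(1-k))$. Since $\max_{s\ge 0}s\e^{-s}=1/\e$, the left-hand side is bounded by $\e^{-(1-k)}$, and the remaining inequality $\e^{-(1-k)}\le 1/(\e(1-k))$ rearranges to $1-k\le \e^{-k}$, which is the tangent-line bound to the convex function $k\mapsto\e^{-k}$ at the origin. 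Combining the two endpoint estimates yields the claim.
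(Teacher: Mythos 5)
Your proof is correct, and its skeleton --- the monotonicity of $p\mapsto (p+x)\log(p+x)-p\log(p)$ in $p$, followed by a reduction to endpoint checks, with the $p=0$ endpoint settled by maximizing $x^{1-k}\log(1/x)$ at $x^\star=\e^{-1/(1-k)}$ --- is the same as the paper's. Where you genuinely diverge is at the other end of the admissible range: you correctly note that for fixed $x>0$ the largest admissible base point is $p=1-x$, not $p=1$, and that the value there, $-(1-x)\log(1-x)$, is positive and must itself be compared with $\frac{\log(\e)}{\e(1-k)}x^k$; your chain $1-\e^{-s}\ge s\e^{-s}$, $s\e^{-s}\le 1/\e$, $1-k\le \e^{-k}$ proves this asymmetric inequality cleanly, and it is genuinely needed --- it is asymptotically tight as $k\to 0$ near $x=1-1/\e$, so it cannot be waved away. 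The paper's printed proof reduces to ``$p\in\{0,1\}$'' and disposes of $p=1$ by observing that it forces $x=0$, so it never actually verifies this boundary case; your argument therefore reaches the result by essentially the same route while supplying a step the paper's own proof glosses over. The only cosmetic remark is that your interval $[\Phi(0),\Phi(1-x)]=[f(x),-f(1-x)]$ implicitly uses the convention $0\log 0=0$ and continuity of $\Phi$ at $p=0$, which is harmless and shared with the paper.
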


	\begin{proof}
		Note that for a fixed $x \in [0,1]$, the mapping $p \mapsto  (p+x)\log(p+x) - p\log(p)$ is non-decreasing; observe that the derivative of the mapping is non-negative for all $x \in [0,1]$. Therefore, it suffices to verify the claim for $p \in \{0,1\}$. For $p = 1$ and accordingly $x = 0$, Lemma~\ref{lem:log} holds trivially. Let $p = 0$ and $h(x) \Let \frac{\log(\e)}{\e(1-k)} x^{k-1} + \log(x)$. Note that $h(1) = \tfrac{\log(\e)}{\e(1-k)} >0$ and $h(x) \rightarrow \infty$ as $x \rightarrow 0$. Hence, by setting $\frac{\diff }{\diff x}h(x^\star) = 0$, it can be easily seen that 
			$$\min_{x \in (0,1]}h(x) = h(x^\star)= 0, \qquad x^\star \Let \e^{\frac{1}{k-1}}.$$
		Thus $h(x)\ge 0$, and consequently $xh(x)\ge 0$ for all $x \in (0,1]$, which concludes the proof. 
	\end{proof}

	\begin{proof}[Proof of Theorem \ref{thm:C}]
%	Note that 
%	\begin{align*}
%		|C(W)-C(W_M)| =  \big| \max\limits_{p\in \meas(\X)} I(p,W)- \max \limits_{p \in \meas(\X)} I(p,W_{M}) \big| \leq \max \limits_{p \in \meas(\X)} \big| I(p,W) - I(p,W_{M}) \big|.
%	\end{align*}
	We bound the mutual information difference uniformly in the input probability distribution $p \in \meas(\X)$. Observe that
	\begin{align*}
		\big|  I(p &, W)  - I(p,W_{M}) \big| \\
		& = \bigg| \int_{\X} \Big[-h\big(W(\cdot,x)\big) + h\big(W_M(\cdot,x)\big)\Big] p(\diff x) + h\Big(\int_{\X}W(\cdot,x) p(\diff x) \Big) -  h\Big(\int_{\X}W_M(\cdot,x) p(\diff x) \Big) \bigg| \\
		& = \bigg|  \int_{\X} \Big[ \sum\limits_{i \in \N} W(i|x)\log(W(i|x)) - W_M(i|x)\log(W_M(i|x))\Big] p(\diff x) \\
		& \qquad \qquad + \sum\limits_{i \in \N}  -\Big(\int_{\X} W(i|x)p(\diff x)  \Big)\log \Big( \int_{\X} W(i|x)p(\diff x) \Big) \\
		& \qquad \qquad + \Big(\int_{\X} W_M(i|x) p(\diff x) \Big)\log \Big( \int_{\X} W_M(i|x) p(\diff x) \Big) \bigg|.
	\end{align*}
	By the definition of the truncated channel in \eqref{W_M} and applying Lemma \ref{lem:log} to the above relation, we have
	\begin{align*}
		\big|  I(p , W)   - I(p,W_{M}) \big| & \le \frac{\log(\e)}{\e(1-k)} \bigg(  \int_{\X} \bigg[ \sum\limits_{i < M } \Big(\frac{1}{M}\sum\limits_{j \ge M}W(j|x)\Big)^k + \sum\limits_{i \ge M} \big(W(i|x)\big)^k \bigg] p(\diff x) \\ 
		& \qquad \qquad  + \sum\limits_{i < M } \Big(\frac{1}{M}\sum\limits_{j \ge M}  \int_{\X} W(j|x)p(\diff x) \Big)^k + \sum\limits_{i \ge M}   \Big( \int_{\X} W(i|x)p(\diff x) \Big)^k  \bigg) \\
		& \le \frac{2\log(\e)}{\e(1-k)} \bigg( M \Big(\frac{R_1(M)}{M}\Big)^k + R_k(M)\bigg),
	\end{align*}
	which concludes the proof.
	\end{proof}

%%%%%%%%%%%%%%%%%%%%%%%%

\subsection{Proof of Proposition~\ref{lem:density:dense}} \label{app:proof:density:restriction}
We show that the optimization problem \eqref{eq:inf:channel:primal} is equivalent to
\begin{equation*}
C_{\A,S}(W_{M}) = \sup\limits_{p\in\mathfrak{D}(\A)} \left\{ I(p,W_{M}) \ : \ \E{s(X)}\leq S \right\},
\end{equation*}
where $\mathfrak{D}(\A)$ is the space of probability measures that are absolutely continuous with respect to the Lebesgue measure. This completes the proof since optimizing over $\mathfrak{D}(\A)$ is equivalent to optimizing over the space of probability densities $\mathcal{D}(\A)$ according to the Radon-Nikod\'ym Theorem \cite[Theorem~3.8, p.~90]{ref:Folland-99}.

It is known that the mapping $p \mapsto I(p,W_{M})$ is weakly lower semicontinuous \cite{ref:verdu-12}. It then suffices to show that $\mathfrak{D}(\A)$ is weakly dense in $\mathcal{P}(\A)$. Let $\B$ be a countable dense subset of $\A$, and $\Delta(\B)$ be the family of probability measures whose supports are finite subsets of $\B$. It is well known that $\Delta(\B)$ is weakly dense in $\mathcal{P}(\A)$, i.e., $\mathcal{P}(\A) = \overline{{\Delta}(\B)}$ \cite[Theorem~4, p.~237]{ref:billingsley-68}, where $\overline{\Delta}$ is the weak closure of $\Delta$. Moreover, thanks to the Lebesgue differentiation theorem \cite[Theorem~3.21, p.~98]{ref:Folland-99}, we know that for any $b \in \B$ the point measure $\delta_{\{b\}} \in \Delta(\B)$ can be arbitrarily weakly approximated by measures in $\mathfrak{D}(\A)$, i.e., $\delta_{\{b\}} \in \overline{\mathfrak{D}(\A)}$. Hence, we have $\Delta(\B) = \overline{\mathfrak{D}(\A)}$, which in light of the preceding assertion implies $\mathcal{P}(\A) = \overline{\mathfrak{D}(\A)}$. \qed

%%%%%%%%%%%%%%%%%%%%%%%%
\subsection{Proof of Lemma~\ref{lem:strong:convexity:cts}} \label{app:proof:strong:convexity:poisson}
The proof follows the ideas of \cite{nesterov05}. It can easily be shown that for $d(p):=-\Hdiff{p}+\log(\rho)$
\begin{equation*}
\inprod{d''(p)\cdot g}{g} = \int_{\A}\frac{g(x)^{2}}{p(x)}\drv x.
\end{equation*}
Cauchy-Schwarz then implies
\begin{align*}
\inprod{d''(p)\cdot g}{g} \geq \frac{\left( \int_{\A} g(x) \drv x \right)^{2}}{\int_{\A} p(x) \drv x} = \norm{g}^{2}.
\end{align*} \qed

%%%%%%%%%%%%%%%%%%%%%%%%

\subsection{Proof of Proposition~\ref{prop:Lipschitz:cts:channel}} \label{ap:lipschitz:cts:gradient:cts}
It is known, according to Theorem~5.1 in \cite{ref:devolder-12}, that $G_{\nu}(\lambda)$ is well defined and continuously differentiable at any $\lambda\in \R^{M}$ and that  this function is convex and its gradient $\nabla G_{\nu}(\lambda)=\WW^{\star} p_{\nu}^{\lambda}$ is Lipschitz continuous with constant $L_{\nu} =\tfrac{1}{\nu}\norm{\WW}^{2}$, where we have also used Lemma~\ref{lem:strong:convexity:cts}.
The operator norm can be simplified to
\begin{align}
\norm{\WW} 	&=		\sup\limits_{\lambda\in\R^{M}\!, \, p\in\Lp{1}(\A)} \left\{ \inprod{p}{\WW \lambda} \ : \ \norm{\lambda}_{2}=1, \ \norm{p}_{1}=1 \right\} \nonumber \\
			&\leq 	\sup\limits_{\lambda\in\R^{M}\!, \, p\in\Lp{1}(\A)} \left\{ \norm{\WW^{\star}p}_{2} \norm{\lambda}_{2} \ : \ \norm{\lambda}_{2}=1, \ \norm{p}_{1}=1\right\}\label{eq:norm:W:proof:step:CS} \\
			&\leq 	\sup\limits_{ p\in\Lp{1}(\A)} \left\{ \norm{\WW^{\star}p}_{1} \ : \ \norm{p}_{1}=1\right\}\nonumber \\
			&= 		\sup\limits_{ p\in\Lp{1}(\A)} \left\{ \sum_{i=0}^{M-1}  \int_{\mathcal{X}} W_{M}(i|x) p(x) \drv x   \ : \ \norm{p}_{1}=1\right\}\nonumber \\
			&= 		\sup\limits_{ p\in\Lp{1}(\A)} \left\{   \int_{\mathcal{X}} \norm{ W_{M}(\cdot|x)}_{1} p(x) \drv x   \ : \ \norm{p}_{1}=1\right\} \nonumber\\
			&\leq 	\sup\limits_{x\in \A} \norm{ W_{M}(\cdot|x)}_{1} \nonumber \\
			&\leq 	1, \nonumber
\end{align}
where \eqref{eq:norm:W:proof:step:CS} is due to Cauchy-Schwarz.\qed

%%%%%%%%%%%%%%%%%%%%%%%%
\subsection{Proof of Lemma~\ref{lem:Lf}} \label{ap:proof:Lf}
Let $x_1,x_2 \in \mathcal{X}$, then by definition of $f_{\lambda}(\cdot)$ we obtain
\begin{subequations}
\begin{align}
&|f_{\lambda}(x_1)-f_{\lambda}(x_2)| \nonumber \\
&\hspace{5mm}= \left| \sum_{i=1}^M W_{M}(i-1|x_1) \lambda_i + \sum_{j=1}^{M} W_{M}(j-1|x_1) \log W_{M}(j-1|x_1) \right. \nonumber \\
&\hspace{11mm}\left. - \sum_{i=1}^M W_{M}(i-1|x_2) \lambda_i - \sum_{j=1}^{M} W_{M}(j-1|x_2) \log W_{M}(j-1|x_2) \right| \nonumber \\
&\hspace{5mm}\leq \left| \sum_{i=1}^M \left(W_{M}(i-1|x_1)-W_{M}(i-1|x_2) \right)\lambda_i \right| + \left|\Hh{W_{M}(\cdot|x_1)}-\Hh{W_{M}(\cdot|x_2)} \right| \label{eq:triang}\\
&\hspace{5mm}\leq  \sum_{i=1}^M \left|\left(W_{M}(i-1|x_1)-W_{M}(i-1|x_2) \right)\lambda_i \right| + \left|\Hh{W_{M}(\cdot|x_1)}-\Hh{W_{M}(\cdot|x_2)} \right| \label{eq:ass1}\\
&\hspace{5mm}\leq L M \|\lambda \|_1 |x_1-x_2 | + \left|\Hh{W_{M}(\cdot|x_1)}-\Hh{W_{M}(\cdot|x_2)} \right|\label{eq:LambaSet}\\
&\hspace{5mm}\leq L M^2  \left(\log \frac{1}{\gamma_M}\vee \frac{1}{\ln 2} \right)  |x_1-x_2| + \left|\Hh{W_{M}(\cdot|x_1)}-\Hh{W_{M}(\cdot|x_2)} \right| \label{eq:LambdaSet}\\
&\hspace{5mm}\leq L M^2  \left(\log \frac{1}{\gamma_M}\vee \frac{1}{\ln 2} \right)  |x_1-x_2| + ML \left|\log \frac{1}{\gamma_M}- \frac{1}{\ln 2} \right|  |x_1-x_2|. \label{eq:fini}
\end{align}
\end{subequations}
Inequalities~\eqref{eq:triang} and \eqref{eq:ass1} use the triangle inequality. Inequality~\eqref{eq:LambaSet} follows by Assumption~\ref{a:channel}\eqref{ass:channel:ii} and \eqref{eq:LambdaSet} can be derived by following the proof of Lemma~\ref{lem:compact:set:Poisson}, which is similar to the one of Lemma~\ref{lem:compact:set}. Finally, \eqref{eq:fini} follows from the fact that the function $\Delta_n \ni x^n \mapsto \Hh{x^n} \in \Rp $ with $\min_{1\leq i \leq n} x_i < c$ is Lipschitz continuous with constant $n \left|\log \tfrac{1}{c}-\tfrac{1}{\ln 2} \right|$ and from Assumption~\ref{a:channel}\eqref{ass:channel:ii}. 
\qed

%%%%%%%%%%%%%%%%%%%%%%%%
\subsection{Proof of Lemma~\ref{lem:iota:new}} \label{ap:proof:iota}
We start by the following definitions that simplify the proof below
\begin{alignat*}{3}
f_{\lambda, \nu}(x)&:=\WW\lambda(x)-r(x) + \nu \mu_{\nu}s(x), \qquad  &\bar{f}_{\lambda,\nu}&:= \max\limits_{x\in\A}f_{\lambda,\nu}(x) \\
B_{\lambda,\nu}(\varepsilon)&:=\left\{ x\in\A \ | \ \bar{f}_{\lambda,\nu} - f_{\lambda,\nu}(x) <\varepsilon \right\}, \qquad  &\eta_{\lambda,\nu}(\varepsilon) &:= \int_{B_{\lambda,\nu}(\varepsilon)}\drv x.
\end{alignat*}
By the Lipschitz continuity of $f_{\lambda}(\cdot)$ and $s(\cdot)$ we get the uniform lower bound 
\begin{equation} \label{eq:eta:lower:bound}
\eta_{\lambda,\nu}(\varepsilon) \geq  \frac{\varepsilon}{L_{f}+|\nu\mu_{\nu}|L_{s}}\wedge \rho.
\end{equation}
By using the solution to $G_{\nu}(\lambda)$, according to \eqref{eq:finite:optimizer:pmu:cts} we can write
\begin{subequations}
\begin{align}
G_{\nu}(\lambda) 	&= 		-\nu \log(\rho) + \nu \log\left( \int_{\A}2^{\tfrac{1}{\nu}f_{\lambda,\nu}(x)}\drv x \right) \label{eq:proof:Gnu}\\
					&\leq	\inf\limits_{\ell\in\R}\max\limits_{x\in\A} \left\{ f_{\lambda}(x)+\ell s(x) \right\} \label{eq:step:Gnu:smaller:G0} \\
					&= G(\lambda), \label{eq:step:strong:duality}
\end{align}
\end{subequations}	
where the equality \eqref{eq:step:strong:duality} follows as \eqref{eq:step:Gnu:smaller:G0} is the dual program to $G(\lambda)$ and strong duality holds. The inequality \eqref{eq:step:Gnu:smaller:G0} then is due to $G_{\nu}(\lambda)\leq G(\lambda)$ for any $\lambda$, see \eqref{eq:uniform:bound:cts}. Therefore,

\begin{subequations}
\begin{align}
&G(\lambda)-G_{\nu}(\lambda)		\leq		\bar{f}_{\lambda,\nu}-G_{\nu}(\lambda) \label{eq:iota:G} \\
	%	&=		\bar{f}_{\lambda} - \nu \log\left(  \int_{\A}2^{\frac{1}{\nu}f_{\lambda}(x)}\drv x \right) + \nu\log \rho \\
	%	&=		\nu \left( \log\left( 2^{\frac{1}{\nu}\bar{f}_{\lambda}} \right) -  \log\left(  \int_{\A}2^{\frac{1}{\nu}f_{\lambda}(x)}\drv x \right) + \log \rho  \right) \\
	%	&=		\nu \left( -\log \left( \int_{\A}2^{\frac{1}{\nu}\left( f_{\lambda}(x) - \bar{f}_{\lambda}\right)} \drv x \right) + \log \rho  \right) \\
		&\qquad =		\nu \left( -\log \left(  \int_{B_{\lambda,\nu}(\varepsilon)}2^{\frac{1}{\nu}\left( f_{\lambda,\nu}(x) - \bar{f}_{\lambda,\nu}\right)} \drv x +  \int_{B_{\lambda,\nu}^{\setC}(\varepsilon)}2^{\frac{1}{\nu}\left( f_{\lambda,\nu}(x) - \bar{f}_{\lambda,\nu}\right)} \drv x\right) + \log(\rho)  \right) \label{eq:iota:Gnu}\\
		&\qquad \leq 	\nu \left( -\log \left(  \int_{B_{\lambda,\nu}(\varepsilon)}2^{\frac{1}{\nu}\left( f_{\lambda,\nu}(x) - \bar{f}_{\lambda,\nu}\right)} \drv x \right) + \log(\rho)  \right) \nonumber \\
		&\qquad \leq 	\nu \left( -\log \left( \eta_{\lambda,\nu}(\varepsilon) 2^{-\frac{\varepsilon}{\nu}} \right) + \log(\rho)  \right) \label{eq:iota:B:eta} \\
		&\qquad \leq 	\nu \left( -\log \left( \tfrac{\varepsilon}{L_{f}+|\nu\mu_{\nu}|L_{s}}\vee \rho \right) +\frac{\varepsilon}{\nu} + \log(\rho)  \right) \label{eq:iota:lipschitz} \\
	%	&=		\nu  \log \left( \frac{\rho}{\min\left\{ \tfrac{\varepsilon}{L},\rho \right\} }\right) + \varepsilon \\
		&\qquad =		\nu  \log \left( \tfrac{(L_{f}+|\nu\mu_{\nu}|L_{s})\rho}{\varepsilon} \vee 1 \right) + \varepsilon, \nonumber
\end{align}
\end{subequations}
where \eqref{eq:iota:G} follows from \eqref{eq:step:strong:duality} and \eqref{eq:iota:Gnu} is due to \eqref{eq:proof:Gnu}. The inequality \eqref{eq:iota:B:eta} results from the definitions of $B_{\lambda,\nu}(\varepsilon)$ and $\eta_{\lambda,\nu}(\varepsilon)$ above and \eqref{eq:iota:lipschitz} is implied by \eqref{eq:eta:lower:bound}. Finally, it can be seen that for $\nu<(L_{f}+|\nu\mu_{\nu}|L_{s})\rho$, the optimal choice for $\varepsilon$ is $\nu$, which leads to
\begin{equation} \label{eq:proof:itoa:almost:done}
G(\lambda)-G_{\nu}(\lambda) \leq \nu \left( 1 +  \log \left( \tfrac{(L_{f}+|\nu\mu_{\nu}|L_{s})\rho}{\nu} \vee 1 \right) \right). 
\end{equation}
It remains to upper bound the term $|\nu\mu_{\nu}|$. Define $\underline{f}:=\min_{x,\lambda}f_{\lambda}(x)$, $\overline{f}:=\max_{x,\lambda}f_{\lambda}(x)$, $\Delta_{f}:=\overline{f}-\underline{f}$ and note that $\Delta_{f}\leq L_{f}\rho$. By \eqref{eq:proof:Gnu}, \eqref{eq:uniform:bound:cts} and the fact that adding an additional constraint to a maximization problem cannot increase its objective value
\begin{equation*}
G_{\nu}(\lambda) =  \nu \log\left( \int_{\A}2^{\tfrac{1}{\nu}\left( f_{\lambda}(x)+\nu\mu_{\nu}s(x)\right)}\drv x \right) -\nu \log(\rho) \leq \overline{f}=\nu \log \left( 2^{\tfrac{1}{\nu}\overline{f}} \right),
\end{equation*}
which is equivalent to
 $\int_{\A}2^{\tfrac{1}{\nu}\left( f_{\lambda}(x) - \overline{f} +\nu\mu_{\nu}s(x)\right)}\drv x \leq \rho $ and implies
 \begin{equation} \label{eq:proof:iota:inbetween}
 \int_{\A} 2^{\mu_{\nu}s(x)} \drv x \leq \rho \, 2^{\tfrac{\Delta_{f}}{\nu}}.
 \end{equation}
From \eqref{eq:proof:iota:inbetween} two bounds can be derived. First, \eqref{eq:proof:iota:inbetween} implies that $\left(\rho \wedge \tfrac{\varepsilon}{L_{s}}\right) 2^{\mu_{\nu}(\overline{s}-\varepsilon)}\leq \rho 2^{\tfrac{\Delta_{f}}{\nu}}$, which by choosing $\varepsilon=\tfrac{\overline{s}}{2}$ leads to $2^{\mu_{\nu}\tfrac{\overline{s}}{2}}\leq \left( \tfrac{2L_{s}\rho}{\overline{s}}\vee 1 \right)2^{\tfrac{\Delta_{f}}{\nu}}$ and finally, 
\begin{equation} \label{eq:itoa:upper:bound}
\nu\mu_{\nu}\leq \tfrac{2}{\overline{s}}\log\left( \tfrac{2L_{s}\rho}{\overline{s}}\vee 1 \right)\nu + \tfrac{2\Delta_{f}}{\overline{s}}.
\end{equation}
Similarly one can derive a lower bound
\begin{equation} \label{eq:itoa:lower:bound}
\nu\mu_{\nu}\geq \tfrac{2}{\underline{s}}\log\left( \tfrac{2L_{s}\rho}{-\underline{s}}\vee 1 \right)\nu +  \tfrac{2\Delta_{f}}{\underline{s}}.
\end{equation}
Equation \eqref{eq:proof:itoa:almost:done} together with \eqref{eq:itoa:upper:bound} and \eqref{eq:itoa:lower:bound} complete the proof.
\qed

%%%%%%%%%%%%%%%%%%%%%%%%

\subsection{Proof of Theorem~\ref{thm:error:bound:capacity:continuous:channel}} \label{ap:thm:error:bound:capacity:continuous:channel}

Following \cite{nesterov05} and using Lemma~\ref{lem:compact:set:Poisson}, Lemma~\ref{lem:strong:duality:poisson}, Propostion~\ref{prop:Lipschitz:cts:channel} and Lemma~\ref{lem:iota:new}, after $n$ iterations of Algorithm~\hyperlink{algo:1}{1} the following approximation error is obtained
\begin{equation}\label{eq:proof:rate:error:term}
0\leq F(\hat{\lambda}) + G(\hat{\lambda}) - \I{\hat{p}}{W} \leq \iota(\nu) + \frac{4D_{1}}{\nu (n+1)^{2}}+\frac{4D_{1}}{(n+1)^{2}}=:\mathsf{err}(\nu,n),
\end{equation}
where for $\nu<\tfrac{T_{1}}{1-T_{2}}$ or $T_{2}>1$ we have $\iota(\nu)=\nu \left( \log\left( \tfrac{T_{1}}{\nu} + T_{2} \right) +1 \right)$, which is strictly increasing in $\nu$. Let us redefine the smoothing term by $\nu:=\tfrac{\delta}{\log\left( \delta^{-1} \right)}$ for $\delta \in (0,1)$ and define the function $g(\delta):=\left( \tfrac{\log\left( T_{1} \log\left( \delta^{-1} \right) + T_{2}\delta \right)+1}{\log\left( \delta^{-1} \right)} +1 \right)$. One can see that $\iota(\nu)=\delta g(\delta)$ and that $\lim_{\delta\to 0}g(\delta)=1$. Furthermore $\delta\leq 2^{-1} \wedge 2^{-\tfrac{1}{T_{1}+T_{2}}}$ implies 
\begin{equation} \label{eq:bound:g:delta}
g(\delta) - 1 \leq \frac{\log \left( 2(T_{1}+T_{2})\log\left( \delta^{-1} \right) \right) }{\log\left( \delta^{-1} \right)} \leq T_{1}+T_{2},
\end{equation}
where the first inequality is due to $\delta\leq 2^{-1}$ and the second follows from $\delta\leq 2^{-\tfrac{1}{T_{1}+T_{2}}}$. We seek for a lower bound of $n$ and upper bound $\delta$ such that the error term \eqref{eq:proof:rate:error:term} is smaller than the preassigned $\varepsilon>0$, i.e.,
\begin{equation} \label{eq:proof:leq:vareps}
\mathsf{err}(\tfrac{\delta}{\log\left( \delta^{-1} \right)},n) = g(\delta) \delta + \frac{4D_{1}}{(n+1)^{2}} \left( \frac{\log\left( \delta^{-1} \right)}{\delta} +1 \right) \leq \varepsilon
\end{equation}
To this end, we introduce an auxiliary variable $\zeta \in (0,1)$ such that such that $g(\delta) \delta = (1-\zeta)\varepsilon$ and $\tfrac{4D_{1}}{(n+1)^{2}} \left( \tfrac{\log\left( \delta^{-1} \right)}{\delta} +1 \right)\leq \zeta \varepsilon$, which implies \eqref{eq:proof:leq:vareps}. Observe that $g(\delta) \delta = (1-\zeta)\varepsilon$ is equivalent to $\delta = \tfrac{(1-\zeta)}{g(\delta)}\varepsilon=:\beta \varepsilon$. Hence $\zeta=1-\beta g(\delta)$ for $\beta \in [0,\tfrac{1}{g(\delta)}]$. Moreover,
\begin{align*}
\frac{4D_{1}}{(n+1)^{2}} \left( \frac{\log\left( \delta^{-1} \right)}{\delta} +1 \right) =  \frac{4D_{1}}{(n+1)^{2}} \left( \frac{\log\left( (\beta \varepsilon)^{-1} \right)}{\beta \varepsilon} +1 \right)\leq \zeta \varepsilon \nonumber
\end{align*}
is equivalent to
\begin{align}
 4 D_{1} \left(  \frac{\log\left( (\beta \varepsilon)^{-1} \right) + \beta \varepsilon}{\beta (1-g(\delta)\beta)\varepsilon^{2}}\right)
= 4 D_{1} \left( \frac{\log(\varepsilon^{-1}) + \log 2 g(\delta) + \tfrac{\varepsilon}{2 g(\delta)}}{\tfrac{\varepsilon^{2}}{4g(\delta)}} \right) \leq (n+1)^{2} \label{eq:proof:no:beta},
\end{align}
where we have chosen $\beta=\tfrac{1}{2g(\delta)}$ and as such is equivalent to
\begin{align}
\frac{4}{\varepsilon} \sqrt{D_{1}\left( g(\delta) \log\left( \varepsilon^{-1} \right) + g(\delta)\log\left( 2 g(\delta) \right) + \tfrac{\varepsilon}{2} \right)} \leq n+1 \nonumber
\end{align}
 Finally, using \eqref{eq:bound:g:delta} implies for $\nu = \tfrac{\varepsilon / \alpha}{\log \left( \alpha / \varepsilon \right)}$, where $\alpha := 2(T_1 + T_2 + 1)$
\begin{equation*}
\mathsf{err}(\nu,n) \leq \varepsilon \quad \text{ for } \quad n\geq \tfrac{1}{\varepsilon} \sqrt{8 D_{1}\alpha}\sqrt{\log(\varepsilon^{-1}) + \log(\alpha) + \tfrac{1}{4}}.
\end{equation*} \qed

%%%%%%%%%%%%%%%%%%%%%%%%

\subsection{Proof of Proposition~\ref{prop:poisson_tail}} \label{ap:cutting:poisson}
To prove Proposition \ref{prop:poisson_tail}, we need two lemmas. 
	
	\begin{mylem}\label{lem:ab}
		For any $k\in(0,1]$ and $a, b \ge 0$
		\begin{align*}
			a^k+b^k \le 2^{1-k}(a+b)^k.
		\end{align*}
	\end{mylem}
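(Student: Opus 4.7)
The plan is to reduce to the standard fact that $x\mapsto x^k$ is concave on $[0,\infty)$ when $k\in(0,1]$. First I would dispose of the trivial case $a+b=0$ (where both sides equal $0$) and assume $a+b>0$. By homogeneity of degree $k$ on both sides, it suffices to prove $t^k+(1-t)^k\le 2^{1-k}$ for $t:=a/(a+b)\in[0,1]$, so the lemma reduces to a one-variable inequality.

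The cleanest route is to apply Jensen's inequality directly to the concave function $\varphi(x)=x^k$: for any $a,b\ge 0$,
\begin{equation*}
\frac{a^k+b^k}{2}=\frac{\varphi(a)+\varphi(b)}{2}\le \varphi\!\left(\frac{a+b}{2}\right)=\frac{(a+b)^k}{2^k},
\end{equation*}
and multiplying by $2$ yields the claim. Alternatively, one can argue without invoking concavity of $x^k$ by studying $g(t):=t^k+(1-t)^k$ on $[0,1]$: since $g'(t)=k(t^{k-1}-(1-t)^{k-1})$ vanishes only at $t=1/2$, and $g(0)=g(1)=1\le 2^{1-k}=g(1/2)$, the maximum of $g$ on $[0,1]$ is $2^{1-k}$.

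There is no real obstacle here; the only point to double-check is the direction of the inequality in Jensen (which goes the right way precisely because $x^k$ is concave for $k\in(0,1]$) and the boundary case $k=1$, for which the inequality becomes the equality $a+b=a+b$. Both proofs handle these automatically.
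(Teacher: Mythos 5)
Your proof is correct, but it takes a different route from the paper. The paper assumes without loss of generality $a\ge b$, sets $x=b/a\in[0,1]$, and shows by a derivative computation that $g(x):=2^{1-k}(1+x)^k-x^k$ attains its minimum value $1$ on $[0,1]$ at $x=1$; multiplying $1\le g(b/a)$ by $a^k$ gives the claim. You instead invoke the concavity of $x\mapsto x^k$ on $[0,\infty)$ for $k\in(0,1]$ and apply the midpoint Jensen inequality, $\tfrac{a^k+b^k}{2}\le\bigl(\tfrac{a+b}{2}\bigr)^k$, which yields the inequality in one line and needs no case analysis at all (even $a=b=0$ is covered). This is shorter and arguably more transparent, at the cost of citing concavity of the power function rather than keeping the argument entirely self-contained; your backup argument, normalizing by $a+b$ and maximizing $t^k+(1-t)^k$ on $[0,1]$, is essentially the paper's calculus proof in a different (and more symmetric) parametrization, with the same minor caveat as the paper's: for $k=1$ the derivative vanishes identically rather than only at the interior critical point, but then both sides are equal, as you note.
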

	 
	 \begin{proof}
		 Let $g(x) \Let 2^{1-k}(1+x)^k - x^k$. By setting $\frac{\diff}{\diff x}g(x^\star) = 0$, one can easily see that $x^\star = 1$ is the minimizer of function $g$ over the interval $[0,1]$, i.e., $g(x) \ge g(1) = 1$ for all $x \in [0,1]$. Suppose, without loss of generality, that $a \ge b$. By virtue of the preceding result of function $g$, we know that 
		 	$$ 1 \le g\left( \frac{b}{a} \right) =  2^{1-k}\left( 1+\frac{b}{a} \right)^k - \left(\frac{b}{a}\right)^k, $$
		 where by multiplying $a^k$ it readily leads to the desired assertion. 
	 \end{proof}
	 
	\begin{mylem}\label{lem:ai}
		Let $(a_i)_{i \in \mathbb{N}}$ be a non-negative sequence of real numbers. For any $k \in (0,1]$
		\begin{align*}
			\sum\limits_{i \in \mathbb{N}} a_i^k \le \Big(\sum\limits_{i \in \mathbb{N}} \alpha^{i}a_i\Big)^k, \qquad \alpha \Let 2^{(k^{-1}-1)}.
		\end{align*}
	\end{mylem}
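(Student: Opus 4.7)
The plan is to use Lemma~\ref{lem:ab} iteratively, but applied in the correct order, namely from the tail of the sequence toward its head. For a finite truncation at $N$, define the tail $k$-norm $S_j := \bigl(\sum_{i=j}^{N} a_i^k\bigr)^{1/k}$ with the convention $S_{N+1}=0$, so that the defining identity $S_j^k = a_j^k + S_{j+1}^k$ holds. Applying Lemma~\ref{lem:ab} to the right-hand side of this identity (using only $\alpha^k = 2^{1-k}$) converts it into the one-step recursion $S_j \le \alpha\bigl(a_j + S_{j+1}\bigr)$.

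Next I would unroll this recursion by reverse induction on $j$, starting from the trivial base case $S_N = a_N \le \alpha a_N$. The inductive hypothesis $S_{j+1} \le \sum_{i=j+1}^{N}\alpha^{i-j}a_i$ plugged into the recursion yields $S_j \le \alpha a_j + \sum_{i=j+1}^{N}\alpha^{i-j+1}a_i = \sum_{i=j}^{N} \alpha^{i-j+1}a_i$. Instantiating at $j=1$ gives $S_1 \le \sum_{i=1}^{N}\alpha^{i}a_i$, and raising both sides to the $k$-th power produces the finite-$N$ version of the claim. Letting $N\to\infty$ then delivers the stated inequality by monotone convergence of the partial sums on both sides (the right side being monotone in $N$ and the map $x\mapsto x^k$ being monotone on $\mathbb{R}_{\ge 0}$).

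The step I expect to be the main obstacle, conceptually rather than technically, is identifying the correct direction of the induction. A naive forward induction that adds the term $a_{n+1}^k$ to an already-established bound $\sum_{i=1}^{n}a_i^k \le \bigl(\sum_{i=1}^{n}\alpha^{i}a_i\bigr)^k$ fails to close: Lemma~\ref{lem:ab} forces an extra factor of $\alpha$ in front of every earlier term, inflating the weight on $a_j$ from $\alpha^{j}$ to $\alpha^{j+1}$, and the new term $a_{n+1}$ with its modest weight $\alpha^{n+1}$ cannot compensate for this. The backward induction sidesteps the difficulty precisely because the geometric weights grow in the same direction as the induction proceeds, so that each inductive step accumulates exactly one additional factor of $\alpha$ per term, matching the target weighting $\alpha^{i}$ on $a_i$.
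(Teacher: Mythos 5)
Your proof is correct and is essentially the paper's own argument: the paper proves the finite-$N$ statement by induction on the number of terms, peeling off the first element and applying the hypothesis to the shifted tail before merging via Lemma~\ref{lem:ab}, which is exactly your reverse-induction recursion $S_j \le \alpha(a_j + S_{j+1})$ written in terms of tail $k$-norms, with one factor of $\alpha$ accumulating per step. Your closing remark about the failing forward induction correctly identifies why the tail-first (front-peeling) direction is the right one, which is also how the paper's inductive step is structured.
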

	\begin{proof}
		For the proof we make use of an induction argument. Note that for any $a_1 \ge 0$ it trivially holds that $a_1^k \le 2^{1 - k}a_1^k$. We now assume that for any sequence $(a_i)_{i=1}^{N} \subset \R_{\ge 0}$ we have
		\begin{align}
		\label{eq:0}
			\sum\limits_{i=1}^{N} a_i^k \le \Big(\sum\limits_{i=1}^{N} 2^{(k^{-1}-1)i}a_i\Big)^k. 
		\end{align}
		Let $(a_i)_{i=1}^{N+1} \subset \R_{\ge 0}$. Then, 
		\begin{align}
			\label{eq:1} \sum\limits_{i=1}^{N+1} a_i^k & = a_1^k + \sum\limits_{i=2}^{N+1} a_i^k  \le a_1^k + \Big(\sum\limits_{i=2}^{N+1} 2^{(k^{-1}-1){(i-1)}}a_i\Big)^k \le 2^{1-k} \Big(a_1 + \sum\limits_{i=2}^{N+1} 2^{(k^{-1}-1){(i-1)}}a_i\Big)^k \\
			& \notag = \Big(2^{(k^{-1}-1)}a_1 + \sum\limits_{i=2}^{N+1} 2^{(k^{-1}-1){i}}a_i\Big)^k = \Big(\sum\limits_{i=1}^{N+1} 2^{(k^{-1}-1){i}}a_i\Big)^k,
		\end{align}
		where the first (resp.\ second) inequality in \eqref{eq:1} follows from \eqref{eq:0} (resp.\ Lemma \ref{lem:ab}). 
	\end{proof}

	\begin{proof}[Proof of Proposition \ref{prop:poisson_tail}]
		It is straightforward to see that
		\begin{align}
		\label{A,n}
			\max_{x \in [0,A]} \e^{-x} x^i = \e^{-\min\{A,i\}} \big(\min\{A,i\}\big)^{i}. 
		\end{align} 
		Moreover, based on a Taylor series expansion, it is well known that for all $M \in \mathbb{N}$ and $x \in \R_{\geq 0}$
		\begin{align}
		\label{taylor}
			\sum\limits_{i \ge M} \frac{x^i}{i!} \le \frac{\e^x}{M!} x^{M}. %\qquad \forall x \in [0,A]. 
		\end{align}
		Therefore, it follows that
		\begin{subequations}
		\begin{align}
			\label{eq:3} R_k(M) & \Let \sum\limits_{i\ge M} \Big(\sup_{x \in [0,A]} \e^{-(x + \eta)} \frac{(x+\eta)^{i}}{i!} \Big)^k \le \sum\limits_{i\ge M} \Big( \e^{-(A + \eta)} \frac{(A+\eta)^{i}}{i!} \Big)^k \\
			\label{eq:4} & \le \e^{-k(A + \eta)}\Big( \sum\limits_{i \ge M} \alpha^{(i-M+1)} \frac{(A+\eta)^{i}}{i!} \Big)^k = \frac{\e^{-k(A + \eta)}}{\alpha^{k(M-1)}} \Big( \sum\limits_{i \ge M}  \frac{\big(\alpha(A+\eta)\big)^{i}}{i!} \Big)^k \\
			\label{eq:5} & \le \frac{\e^{-k(A + \eta)}}{\alpha^{k(M-1)}} \Big( \frac{\e^{\alpha(A+\eta)}}{M!} \alpha^{M} (A+\eta)^{M} \Big)^k = \Big( {\alpha\e^{(\alpha-1)(A+\eta)}\frac{(A+\eta)^{M}}{M!}} \Big)^k,
		\end{align}
		\end{subequations} 
		where \eqref{eq:3} results from \eqref{A,n} and the assumption $M \ge A + \eta$, and \eqref{eq:4} (resp.\ \eqref{eq:5}) follows from Lemma \ref{lem:ai} (resp.\ \eqref{taylor}). 
	\end{proof}

%%%%%%%%%%%%%%%%%%%%%%%%%%%%%%%%%%%%%%%%%%%%%%%%%%%
%%%%%%%%%%%%%%%%%%%%%%%%%%%%%%%%%%%%%%%%%%%%%%%%%%%

%%%%%%%%%%%%%%%%%%%%%%%%
\section{Simulation Details}\label{app:simulation:details}	
This section provides some further details on the simulation in Example~\ref{ex:poisson}. The parameters considered are $k=\tfrac{1}{2}$, $L_{f}=0$ and $M$ is chosen according to Table~\ref{tab:poisson:details}. All the simulations in this section are performed on a 2.3 GHz Intel Core i7 processor with 8 GB RAM with Matlab.

\begin{table}[!htb]
\centering 
\caption{Simulation details to Example~\ref{ex:poisson}}
\label{tab:poisson:details}
\vspace{1mm}
  \begin{tabular}{c | c c c c c c c c}
 $A$ [dB] \hspace{1mm}  & \hspace{1mm}    0  &1 & 2 & 3 & 4 & 5 & 6 & 7 \\ 
 $M$ \hspace{1mm}  & \hspace{1mm}    16  &17 & 19 & 20 & 22 & 25 & 28 & 31 \\ 
 Iterations $n$ & \hspace{1mm} 4$\cdot 10^{4}$ & 4$\cdot 10^{4}$   & 4$\cdot 10^{4}$  & 5$\cdot 10^{4}$  & 6$\cdot 10^{4}$ & 7$\cdot 10^{4}$ & 9$\cdot 10^{4}$ &1.2$\cdot 10^{5}$ \\
 $\nu$ \hspace{1mm}  & \hspace{1mm} 0.0026 & 0.0029 & 0.0036 & 0.0029 & 0.0027 & 0.0029 & 0.0026 &  0.0022 \\ 
 $F(\hat{\lambda}) + G(\hat{\lambda})$ & \hspace{1mm} 0.1144 & 0.1626  & 0.2263 & 0.3063 & 0.4029 & 0.5129 & 0.6293 & 0.7423 \\
 $\I{\hat{p}}{W}$ & \hspace{1mm} 0.1105 & 0.1583  & 0.2206  & 0.3015  & 0.3979 & 0.5072 & 0.6234 &  0.7365 \\ 
 $\mathcal{E}$ & \hspace{1mm} 9.3$\cdot 10^{-4}$ &9.7$\cdot 10^{-4}$  & 4.8$\cdot 10^{-4}$  & 8.5$\cdot 10^{-4}$  & 8.2$\cdot 10^{-4}$  & 4.9 $\cdot 10^{-4}$ & 5.0$\cdot 10^{-4}$ & 9.5$\cdot 10^{-4}$ \\ 
  \end{tabular}
  
  \vspace{7mm}
  \phantom{bla}
  \hspace{-13.1mm}
    \begin{tabular}{c | c c c c c c c}
 $A$ [dB] \hspace{1mm}  & \hspace{1mm}  8  & 9   & 10  & 11  & 12 & 13 & 14   \\ 
 $M$ \hspace{1mm}  & \hspace{1mm}   36   & 42  & 49  & 59 &  71 & 85 & 104    \\ 
 Iterations $n$ & \hspace{1mm}   2$\cdot 10^{5}$ & 5$\cdot 10^{5}$   &2$\cdot 10^{6}$   & 3$\cdot 10^{6}$ & 4$\cdot 10^{6}$ & 9$\cdot 10^{6}$ & 1.5$\cdot 10^{7}$\\
 $\nu$ \hspace{1mm}  & \hspace{1mm} 0.0016 & 7.1 $\cdot 10^{-5}$ & 8.0$\cdot 10^{-4}$ &8.3$\cdot 10^{-4}$& 9.7$\cdot 10^{-4}$ &  6.2$\cdot 10^{-4}$ & 5.8$\cdot 10^{-4}$ \\
 $F(\hat{\lambda}) + G(\hat{\lambda})$ & \hspace{1mm}   0.8410 & 0.9422  & 1.0591  & 1.1835  & 1.3070 & 1.4343 & 1.5671  \\
 $\I{\hat{p}}{W}$ & \hspace{1mm} 0.8351 & 0.9388  & 1.0547  & 1.1788 & 1.3013 &1.4219 & 1.5605 \\ 
 $\mathcal{E}$ & \hspace{1mm}    7.5$\cdot 10^{-4}$ &  7.1$\cdot 10^{-4}$ & 8.0$\cdot 10^{-4}$  & 6.2$\cdot 10^{-4}$ & 5.2 $\cdot 10^{-4}$ & 9.0$\cdot 10^{-4}$ & 6.7$\cdot 10^{-4}$ \\ 
  \end{tabular}
  
\end{table}	

%%%%%%%%%%%%%%%%%%%%%%%%%%%%%%%%%%%%%%%%%%%%%%%%%%%
%%%%%%%%%%%%%%%%%%%%%%%%%%%%%%%%%%%%%%%%%%%%%%%%%%%

\section*{Acknowledgment}
The authors thank Yurii Nesterov, Renato Renner and Stefan Richter for helpful discussions and pointers to references.

\bibliography{../bibtex/header,../bibtex/bibliofile}

%merlin.mbs apsrev4-1.bst 2010-07-25 4.21a (PWD, AO, DPC) hacked
%Control: key (0)
%Control: author (0) dotless jnrlst
%Control: editor formatted (1) identically to author
%Control: production of article title (0) allowed
%Control: page (1) range
%Control: year (0) verbatim
%Control: production of eprint (0) enabled
\begin{thebibliography}{44}%
\makeatletter
\providecommand \@ifxundefined [1]{%
 \@ifx{#1\undefined}
}%
\providecommand \@ifnum [1]{%
 \ifnum #1\expandafter \@firstoftwo
 \else \expandafter \@secondoftwo
 \fi
}%
\providecommand \@ifx [1]{%
 \ifx #1\expandafter \@firstoftwo
 \else \expandafter \@secondoftwo
 \fi
}%
\providecommand \natexlab [1]{#1}%
\providecommand \enquote  [1]{``#1''}%
\providecommand \bibnamefont  [1]{#1}%
\providecommand \bibfnamefont [1]{#1}%
\providecommand \citenamefont [1]{#1}%
\providecommand \href@noop [0]{\@secondoftwo}%
\providecommand \href [0]{\begingroup \@sanitize@url \@href}%
\providecommand \@href[1]{\@@startlink{#1}\@@href}%
\providecommand \@@href[1]{\endgroup#1\@@endlink}%
\providecommand \@sanitize@url [0]{\catcode `\\12\catcode `\$12\catcode
  `\&12\catcode `\#12\catcode `\^12\catcode `\_12\catcode `\%12\relax}%
\providecommand \@@startlink[1]{}%
\providecommand \@@endlink[0]{}%
\providecommand \url  [0]{\begingroup\@sanitize@url \@url }%
\providecommand \@url [1]{\endgroup\@href {#1}{\urlprefix }}%
\providecommand \urlprefix  [0]{URL }%
\providecommand \Eprint [0]{\href }%
\providecommand \doibase [0]{http://dx.doi.org/}%
\providecommand \selectlanguage [0]{\@gobble}%
\providecommand \bibinfo  [0]{\@secondoftwo}%
\providecommand \bibfield  [0]{\@secondoftwo}%
\providecommand \translation [1]{[#1]}%
\providecommand \BibitemOpen [0]{}%
\providecommand \bibitemStop [0]{}%
\providecommand \bibitemNoStop [0]{.\EOS\space}%
\providecommand \EOS [0]{\spacefactor3000\relax}%
\providecommand \BibitemShut  [1]{\csname bibitem#1\endcsname}%
\let\auto@bib@innerbib\@empty
%</preamble>
\bibitem [{\citenamefont {Shannon}(1948)}]{shannon48}%
  \BibitemOpen
  \bibfield  {author} {\bibinfo {author} {\bibfnamefont {Claude~E.}\
  \bibnamefont {Shannon}},\ }\bibfield  {title} {\enquote {\bibinfo {title} {A
  mathematical theory of communication},}\ }\href
  {http://cm.bell-labs.com/cm/ms/what/shannonday/shannon1948.pdf} {\bibfield
  {journal} {\bibinfo  {journal} {Bell System Technical Journal}\ }\textbf
  {\bibinfo {volume} {27}},\ \bibinfo {pages} {379--423} (\bibinfo {year}
  {1948})}\BibitemShut {NoStop}%
\bibitem [{\citenamefont {Blahut}(1972)}]{blahut72}%
  \BibitemOpen
  \bibfield  {author} {\bibinfo {author} {\bibfnamefont {Richard~E.}\
  \bibnamefont {Blahut}},\ }\bibfield  {title} {\enquote {\bibinfo {title}
  {Computation of channel capacity and rate-distortion functions},}\ }\href
  {\doibase 10.1109/TIT.1972.1054855} {\bibfield  {journal} {\bibinfo
  {journal} {IEEE Transactions on Information Theory}\ }\textbf {\bibinfo
  {volume} {18}},\ \bibinfo {pages} {460--473} (\bibinfo {year}
  {1972})}\BibitemShut {NoStop}%
\bibitem [{\citenamefont {Gallager}(1968)}]{gallager68}%
  \BibitemOpen
  \bibfield  {author} {\bibinfo {author} {\bibfnamefont {Robert~G.}\
  \bibnamefont {Gallager}},\ }\href@noop {} {\emph {\bibinfo {title}
  {Information Theory and Reliable Communication}}}\ (\bibinfo  {publisher}
  {John Wiley \& Sons},\ \bibinfo {year} {1968})\BibitemShut {NoStop}%
\bibitem [{\citenamefont {Arimoto}(1972)}]{arimoto72}%
  \BibitemOpen
  \bibfield  {author} {\bibinfo {author} {\bibfnamefont {Suguru}\ \bibnamefont
  {Arimoto}},\ }\bibfield  {title} {\enquote {\bibinfo {title} {An algorithm
  for computing the capacity of arbitrary discrete memoryless channels},}\
  }\href {\doibase 10.1109/TIT.1972.1054753} {\bibfield  {journal} {\bibinfo
  {journal} {IEEE Transactions on Information Theory}\ }\textbf {\bibinfo
  {volume} {18}},\ \bibinfo {pages} {14--20} (\bibinfo {year}
  {1972})}\BibitemShut {NoStop}%
\bibitem [{\citenamefont {Abou-Faycal}\ \emph {et~al.}(2001)\citenamefont
  {Abou-Faycal}, \citenamefont {Trott},\ and\ \citenamefont
  {Shamai}}]{shamai01}%
  \BibitemOpen
  \bibfield  {author} {\bibinfo {author} {\bibfnamefont {Ibrahim~C.}\
  \bibnamefont {Abou-Faycal}}, \bibinfo {author} {\bibfnamefont {Mitchell~D.}\
  \bibnamefont {Trott}}, \ and\ \bibinfo {author} {\bibfnamefont {Shlomo}\
  \bibnamefont {Shamai}},\ }\bibfield  {title} {\enquote {\bibinfo {title} {The
  capacity of discrete-time memoryless {R}ayleigh-fading channels},}\ }\href
  {\doibase 10.1109/18.923716} {\bibfield  {journal} {\bibinfo  {journal} {IEEE
  Transactions on Information Theory}\ }\textbf {\bibinfo {volume} {47}},\
  \bibinfo {pages} {1290--1301} (\bibinfo {year} {2001})}\BibitemShut {NoStop}%
\bibitem [{\citenamefont {Sayir}(2000)}]{sayir00}%
  \BibitemOpen
  \bibfield  {author} {\bibinfo {author} {\bibfnamefont {Jossy}\ \bibnamefont
  {Sayir}},\ }\bibfield  {title} {\enquote {\bibinfo {title} {Iterating the
  {A}rimoto-{B}lahut algorithm for faster convergence},}\ }\href {\doibase
  10.1109/ISIT.2000.866533} {\bibfield  {journal} {\bibinfo  {journal}
  {Proceedings IEEE International Symposium on Information Theory (ISIT)}\ ,\
  \bibinfo {pages} {235}} (\bibinfo {year} {2000})}\BibitemShut {NoStop}%
\bibitem [{\citenamefont {Matz}\ and\ \citenamefont {Duhamel}(2004)}]{matz04}%
  \BibitemOpen
  \bibfield  {author} {\bibinfo {author} {\bibfnamefont {Gerald}\ \bibnamefont
  {Matz}}\ and\ \bibinfo {author} {\bibfnamefont {Pierre}\ \bibnamefont
  {Duhamel}},\ }\bibfield  {title} {\enquote {\bibinfo {title} {Information
  geometric formulation and interpretation of accelerated
  {B}lahut-{A}rimoto-type algorithms},}\ }\href {\doibase
  10.1109/ITW.2004.1405276} {\bibfield  {journal} {\bibinfo  {journal}
  {Proceedings Information Theory Workshop (ITW)}\ ,\ \bibinfo {pages}
  {66--70}} (\bibinfo {year} {2004})}\BibitemShut {NoStop}%
\bibitem [{\citenamefont {Yu}(2010)}]{yaming10}%
  \BibitemOpen
  \bibfield  {author} {\bibinfo {author} {\bibfnamefont {Yaming}\ \bibnamefont
  {Yu}},\ }\bibfield  {title} {\enquote {\bibinfo {title} {Squeezing the
  {A}rimoto-{B}lahut algorithm for faster convergence},}\ }\href {\doibase
  10.1109/TIT.2010.2048452} {\bibfield  {journal} {\bibinfo  {journal} {IEEE
  Transactions on Information Theory}\ }\textbf {\bibinfo {volume} {56}},\
  \bibinfo {pages} {3149--3157} (\bibinfo {year} {2010})}\BibitemShut {NoStop}%
\bibitem [{\citenamefont {Dauwels}(2005)}]{dauwels05}%
  \BibitemOpen
  \bibfield  {author} {\bibinfo {author} {\bibfnamefont {Justin}\ \bibnamefont
  {Dauwels}},\ }\bibfield  {title} {\enquote {\bibinfo {title} {Numerical
  computation of the capacity of continuous memoryless channels},}\ }\href@noop
  {} {\bibfield  {journal} {\bibinfo  {journal} {Proceedings of the 26th
  Symposium on Information Theory in the BENELUX}\ ,\ \bibinfo {pages}
  {221--228}} (\bibinfo {year} {2005})}\BibitemShut {NoStop}%
\bibitem [{\citenamefont {Cao}\ \emph {et~al.}(2013)\citenamefont {Cao},
  \citenamefont {Hranilovic},\ and\ \citenamefont {Chen}}]{ref:Chen-13}%
  \BibitemOpen
  \bibfield  {author} {\bibinfo {author} {\bibfnamefont {Jihai}\ \bibnamefont
  {Cao}}, \bibinfo {author} {\bibfnamefont {S.}~\bibnamefont {Hranilovic}}, \
  and\ \bibinfo {author} {\bibfnamefont {Jun}\ \bibnamefont {Chen}},\
  }\bibfield  {title} {\enquote {\bibinfo {title} {Capacity and nonuniform
  signaling for discrete-time poisson channels},}\ }\href {\doibase
  10.1364/JOCN.5.000329} {\bibfield  {journal} {\bibinfo  {journal} {Optical
  Communications and Networking, IEEE/OSA Journal of}\ }\textbf {\bibinfo
  {volume} {5}},\ \bibinfo {pages} {329--337} (\bibinfo {year}
  {2013})}\BibitemShut {NoStop}%
\bibitem [{\citenamefont {Cao}\ \emph {et~al.}(2014{\natexlab{a}})\citenamefont
  {Cao}, \citenamefont {Hranilovic},\ and\ \citenamefont
  {Chen}}]{ref:Chen-14-1}%
  \BibitemOpen
  \bibfield  {author} {\bibinfo {author} {\bibfnamefont {Jihai}\ \bibnamefont
  {Cao}}, \bibinfo {author} {\bibfnamefont {S.}~\bibnamefont {Hranilovic}}, \
  and\ \bibinfo {author} {\bibfnamefont {Jun}\ \bibnamefont {Chen}},\
  }\bibfield  {title} {\enquote {\bibinfo {title} {Capacity-achieving
  distributions for the discrete-time poisson channel - part 1: General
  properties and numerical techniques},}\ }\href {\doibase
  10.1109/TCOMM.2013.112513.130142} {\bibfield  {journal} {\bibinfo  {journal}
  {Communications, IEEE Transactions on}\ }\textbf {\bibinfo {volume} {62}},\
  \bibinfo {pages} {194--202} (\bibinfo {year}
  {2014}{\natexlab{a}})}\BibitemShut {NoStop}%
\bibitem [{\citenamefont {Cao}\ \emph {et~al.}(2014{\natexlab{b}})\citenamefont
  {Cao}, \citenamefont {Hranilovic},\ and\ \citenamefont
  {Chen}}]{ref:Chen-14-2}%
  \BibitemOpen
  \bibfield  {author} {\bibinfo {author} {\bibfnamefont {Jihai}\ \bibnamefont
  {Cao}}, \bibinfo {author} {\bibfnamefont {S.}~\bibnamefont {Hranilovic}}, \
  and\ \bibinfo {author} {\bibfnamefont {Jun}\ \bibnamefont {Chen}},\
  }\bibfield  {title} {\enquote {\bibinfo {title} {Capacity-achieving
  distributions for the discrete-time poisson channel - part 2: Binary
  inputs},}\ }\href {\doibase 10.1109/TCOMM.2013.112513.130143} {\bibfield
  {journal} {\bibinfo  {journal} {Communications, IEEE Transactions on}\
  }\textbf {\bibinfo {volume} {62}},\ \bibinfo {pages} {203--213} (\bibinfo
  {year} {2014}{\natexlab{b}})}\BibitemShut {NoStop}%
\bibitem [{\citenamefont {Mung}\ and\ \citenamefont {Boyd}(2004)}]{chiang04}%
  \BibitemOpen
  \bibfield  {author} {\bibinfo {author} {\bibfnamefont {Chiang}\ \bibnamefont
  {Mung}}\ and\ \bibinfo {author} {\bibfnamefont {Stephen}\ \bibnamefont
  {Boyd}},\ }\bibfield  {title} {\enquote {\bibinfo {title} {Geometric
  programming duals of channel capacity and rate distortion},}\ }\href
  {\doibase 10.1109/TIT.2003.822581} {\bibfield  {journal} {\bibinfo  {journal}
  {IEEE Transactions on Information Theory}\ }\textbf {\bibinfo {volume}
  {50}},\ \bibinfo {pages} {245--258} (\bibinfo {year} {2004})}\BibitemShut
  {NoStop}%
\bibitem [{\citenamefont {Huang}\ and\ \citenamefont {Meyn}(2005)}]{meyn05}%
  \BibitemOpen
  \bibfield  {author} {\bibinfo {author} {\bibfnamefont {Jianyi}\ \bibnamefont
  {Huang}}\ and\ \bibinfo {author} {\bibfnamefont {Sean~P.}\ \bibnamefont
  {Meyn}},\ }\bibfield  {title} {\enquote {\bibinfo {title} {Characterization
  and computation of optimal distributions for channel coding},}\ }\href
  {\doibase 10.1109/TIT.2005.850108} {\bibfield  {journal} {\bibinfo  {journal}
  {IEEE Transactions on Information Theory}\ }\textbf {\bibinfo {volume}
  {51}},\ \bibinfo {pages} {2336--2351} (\bibinfo {year} {2005})}\BibitemShut
  {NoStop}%
\bibitem [{\citenamefont {Nesterov}(2005)}]{nesterov05}%
  \BibitemOpen
  \bibfield  {author} {\bibinfo {author} {\bibfnamefont {Yurii}\ \bibnamefont
  {Nesterov}},\ }\bibfield  {title} {\enquote {\bibinfo {title} {Smooth
  minimization of non-smooth functions},}\ }\href {\doibase
  10.1007/s10107-004-0552-5} {\bibfield  {journal} {\bibinfo  {journal}
  {Mathematical Programming}\ }\textbf {\bibinfo {volume} {103}},\ \bibinfo
  {pages} {127--152} (\bibinfo {year} {2005})}\BibitemShut {NoStop}%
\bibitem [{\citenamefont {Lapidoth}\ and\ \citenamefont
  {Moser}(2009)}]{lapidoth09}%
  \BibitemOpen
  \bibfield  {author} {\bibinfo {author} {\bibfnamefont {Amos}\ \bibnamefont
  {Lapidoth}}\ and\ \bibinfo {author} {\bibfnamefont {Stefan~M.}\ \bibnamefont
  {Moser}},\ }\bibfield  {title} {\enquote {\bibinfo {title} {On the capacity
  of the discrete-time {P}oisson channel},}\ }\href {\doibase
  10.1109/TIT.2008.2008121} {\bibfield  {journal} {\bibinfo  {journal} {IEEE
  Transactions on Information Theory}\ }\textbf {\bibinfo {volume} {55}},\
  \bibinfo {pages} {303--322} (\bibinfo {year} {2009})}\BibitemShut {NoStop}%
\bibitem [{\citenamefont {Ben-Tal}\ and\ \citenamefont
  {Teboulle}(1988)}]{benTal88}%
  \BibitemOpen
  \bibfield  {author} {\bibinfo {author} {\bibfnamefont {Aharon}\ \bibnamefont
  {Ben-Tal}}\ and\ \bibinfo {author} {\bibfnamefont {Marc}\ \bibnamefont
  {Teboulle}},\ }\bibfield  {title} {\enquote {\bibinfo {title} {Extension of
  some results for channel capacity using a generalized information measure},}\
  }\href {\doibase 10.1007/BF01448363} {\bibfield  {journal} {\bibinfo
  {journal} {Applied Mathematics and Optimization}\ }\textbf {\bibinfo {volume}
  {17}},\ \bibinfo {pages} {121--132} (\bibinfo {year} {1988})}\BibitemShut
  {NoStop}%
\bibitem [{\citenamefont {Mung}(2005)}]{chiang05}%
  \BibitemOpen
  \bibfield  {author} {\bibinfo {author} {\bibfnamefont {Chiang}\ \bibnamefont
  {Mung}},\ }\bibfield  {title} {\enquote {\bibinfo {title} {Geometric
  programming for communication systems},}\ }\href {\doibase
  10.1516/0100000005} {\bibfield  {journal} {\bibinfo  {journal} {Foundations
  and Trends in Communications and Information Theory}\ }\textbf {\bibinfo
  {volume} {2}},\ \bibinfo {pages} {1--154} (\bibinfo {year}
  {2005})}\BibitemShut {NoStop}%
\bibitem [{\citenamefont {Nesterov}(2004)}]{ref:nesterov-book-04}%
  \BibitemOpen
  \bibfield  {author} {\bibinfo {author} {\bibfnamefont {Yurii}\ \bibnamefont
  {Nesterov}},\ }\href@noop {} {\emph {\bibinfo {title} {Introductory Lectures
  on Convex Optimization: A Basic Course}}},\ Applied Optimization\ (\bibinfo
  {publisher} {Springer},\ \bibinfo {year} {2004})\BibitemShut {NoStop}%
\bibitem [{\citenamefont {Bertsekas}(2009)}]{ref:Bertsekas-09}%
  \BibitemOpen
  \bibfield  {author} {\bibinfo {author} {\bibfnamefont {Dimitri~P.}\
  \bibnamefont {Bertsekas}},\ }\href@noop {} {\emph {\bibinfo {title} {Convex
  Optimization Theory}}},\ Athena Scientific optimization and computation
  series\ (\bibinfo  {publisher} {Athena Scientific},\ \bibinfo {year}
  {2009})\BibitemShut {NoStop}%
\bibitem [{\citenamefont {Witsenhausen}(1980)}]{witsenhausen}%
  \BibitemOpen
  \bibfield  {author} {\bibinfo {author} {\bibfnamefont {H.S.}\ \bibnamefont
  {Witsenhausen}},\ }\bibfield  {title} {\enquote {\bibinfo {title} {Some
  aspects of convexity useful in information theory},}\ }\href {\doibase
  10.1109/TIT.1980.1056173} {\bibfield  {journal} {\bibinfo  {journal} {IEEE
  Transactions on Information Theory}\ }\textbf {\bibinfo {volume} {26}},\
  \bibinfo {pages} {265--271} (\bibinfo {year} {1980})}\BibitemShut {NoStop}%
\bibitem [{\citenamefont {Borwein}\ and\ \citenamefont
  {Lewis}(1991)}]{ref:Borwein-91}%
  \BibitemOpen
  \bibfield  {author} {\bibinfo {author} {\bibfnamefont {J.~M.}\ \bibnamefont
  {Borwein}}\ and\ \bibinfo {author} {\bibfnamefont {A.~S.}\ \bibnamefont
  {Lewis}},\ }\bibfield  {title} {\enquote {\bibinfo {title} {Duality
  relationships for entropy-like minimization problems},}\ }\href {\doibase
  10.1137/0329017} {\bibfield  {journal} {\bibinfo  {journal} {SIAM J. Control
  Optim.}\ }\textbf {\bibinfo {volume} {29}},\ \bibinfo {pages} {325--338}
  (\bibinfo {year} {1991})}\BibitemShut {NoStop}%
\bibitem [{\citenamefont {Lasserre}(2009)}]{ref:Lasserre-11}%
  \BibitemOpen
  \bibfield  {author} {\bibinfo {author} {\bibfnamefont {Jean~B.}\ \bibnamefont
  {Lasserre}},\ }\href@noop {} {\emph {\bibinfo {title} {Moments, Positive
  Polynomials and Their Applications}}},\ Imperial College Press optimization
  series\ (\bibinfo  {publisher} {Imperial College Press},\ \bibinfo {year}
  {2009})\BibitemShut {NoStop}%
\bibitem [{\citenamefont {Leung}\ and\ \citenamefont {Smith}(2009)}]{leung09}%
  \BibitemOpen
  \bibfield  {author} {\bibinfo {author} {\bibfnamefont {Debbie}\ \bibnamefont
  {Leung}}\ and\ \bibinfo {author} {\bibfnamefont {Graeme}\ \bibnamefont
  {Smith}},\ }\bibfield  {title} {\enquote {\bibinfo {title} {Continuity of
  quantum channel capacities},}\ }\href {\doibase 10.1007/s00220-009-0833-1}
  {\bibfield  {journal} {\bibinfo  {journal} {Communications in Mathematical
  Physics}\ }\textbf {\bibinfo {volume} {292}},\ \bibinfo {pages} {201--215}
  (\bibinfo {year} {2009})}\BibitemShut {NoStop}%
\bibitem [{\citenamefont {Cover}\ and\ \citenamefont {Thomas}(2006)}]{cover}%
  \BibitemOpen
  \bibfield  {author} {\bibinfo {author} {\bibfnamefont {Thomas~M.}\
  \bibnamefont {Cover}}\ and\ \bibinfo {author} {\bibfnamefont {Joy~A.}\
  \bibnamefont {Thomas}},\ }\href@noop {} {\emph {\bibinfo {title} {Elements of
  Information Theory}}}\ (\bibinfo  {publisher} {Wiley Interscience},\ \bibinfo
  {year} {2006})\BibitemShut {NoStop}%
\bibitem [{\citenamefont {Moser}(2005)}]{moser_phd}%
  \BibitemOpen
  \bibfield  {author} {\bibinfo {author} {\bibfnamefont {Stefan~M.}\
  \bibnamefont {Moser}},\ }\bibfield  {title} {\enquote {\bibinfo {title}
  {Duality-based bounds on channel capacity},}\ }\href@noop {} {\bibfield
  {journal} {\bibinfo  {journal} {PhD thesis, ETH Zurich}\ } (\bibinfo {year}
  {2005})}\BibitemShut {NoStop}%
\bibitem [{\citenamefont {Shamai}(1990)}]{shamai90}%
  \BibitemOpen
  \bibfield  {author} {\bibinfo {author} {\bibfnamefont {Shlomo}\ \bibnamefont
  {Shamai}},\ }\bibfield  {title} {\enquote {\bibinfo {title} {Capacity of a
  pulse amplitude modulated direct detection photon channel},}\ }\href@noop {}
  {\bibfield  {journal} {\bibinfo  {journal} {IEE Proceedings on
  Communications, Speech and Vision}\ }\textbf {\bibinfo {volume} {137}},\
  \bibinfo {pages} {424--430} (\bibinfo {year} {1990})}\BibitemShut {NoStop}%
\bibitem [{\citenamefont {Fremlin}(2010)}]{ref:fremlin-03}%
  \BibitemOpen
  \bibfield  {author} {\bibinfo {author} {\bibfnamefont {D.~H.}\ \bibnamefont
  {Fremlin}},\ }\href@noop {} {\emph {\bibinfo {title} {Measure theory. {V}ol.
  2}}}\ (\bibinfo  {publisher} {Torres Fremlin, Colchester},\ \bibinfo {year}
  {2010})\ pp.\ \bibinfo {pages} {563+12 pp. (errata)},\ \bibinfo {note} {broad
  foundations, Second edition January 2010}\BibitemShut {NoStop}%
\bibitem [{\citenamefont {Anderson}\ and\ \citenamefont
  {Nash}(1987)}]{anderson87}%
  \BibitemOpen
  \bibfield  {author} {\bibinfo {author} {\bibfnamefont {Edward~J.}\
  \bibnamefont {Anderson}}\ and\ \bibinfo {author} {\bibfnamefont {Peter}\
  \bibnamefont {Nash}},\ }\href@noop {} {\emph {\bibinfo {title} {Linear
  programming in infinite-dimensional spaces: theory and applications}}},\
  Wiley-Interscience Series in Discrete Mathematics and Optimization\ (\bibinfo
   {publisher} {Wiley},\ \bibinfo {year} {1987})\BibitemShut {NoStop}%
\bibitem [{\citenamefont {Mitter}(2008)}]{mitter08}%
  \BibitemOpen
  \bibfield  {author} {\bibinfo {author} {\bibfnamefont {Sanjoy~K.}\
  \bibnamefont {Mitter}},\ }\bibfield  {title} {\enquote {\bibinfo {title}
  {Convex optimization in infinite dimensional spaces},}\ }in\ \href {\doibase
  10.1007/978-1-84800-155-8_12} {\emph {\bibinfo {booktitle} {Recent advances
  in learning and control}}},\ \bibinfo {series} {Lecture Notes in Control and
  Inform. Sci.}, Vol.\ \bibinfo {volume} {371}\ (\bibinfo  {publisher}
  {Springer, London},\ \bibinfo {year} {2008})\ pp.\ \bibinfo {pages}
  {161--179}\BibitemShut {NoStop}%
\bibitem [{\citenamefont {Devolder}\ \emph {et~al.}(2013)\citenamefont
  {Devolder}, \citenamefont {Glineur},\ and\ \citenamefont
  {Nesterov}}]{ref:Devolver-13}%
  \BibitemOpen
  \bibfield  {author} {\bibinfo {author} {\bibfnamefont {Olivier}\ \bibnamefont
  {Devolder}}, \bibinfo {author} {\bibfnamefont {Franois}\ \bibnamefont
  {Glineur}}, \ and\ \bibinfo {author} {\bibfnamefont {Yurii}\ \bibnamefont
  {Nesterov}},\ }\bibfield  {title} {\enquote {\bibinfo {title} {First-order
  methods of smooth convex optimization with inexact oracle},}\ }\href
  {\doibase 10.1007/s10107-013-0677-5} {\bibfield  {journal} {\bibinfo
  {journal} {Mathematical Programming}\ ,\ \bibinfo {pages} {1--39}} (\bibinfo
  {year} {2013})}\BibitemShut {NoStop}%
\bibitem [{\citenamefont {Sutter}\ \emph {et~al.}(2014)\citenamefont {Sutter},
  \citenamefont {Sutter}, \citenamefont {Mohajerin~Esfahani},\ and\
  \citenamefont {Renner}}]{DavidSutter14}%
  \BibitemOpen
  \bibfield  {author} {\bibinfo {author} {\bibfnamefont {David}\ \bibnamefont
  {Sutter}}, \bibinfo {author} {\bibfnamefont {Tobias}\ \bibnamefont {Sutter}},
  \bibinfo {author} {\bibfnamefont {Peyman}\ \bibnamefont
  {Mohajerin~Esfahani}}, \ and\ \bibinfo {author} {\bibfnamefont {Renato}\
  \bibnamefont {Renner}},\ }\bibfield  {title} {\enquote {\bibinfo {title}
  {Efficient approximation of quantum channel capacities},}\ }\href@noop {} {\
  (\bibinfo {year} {2014})},\ \bibinfo {note} {available at
  \texttt{arXiv:\href{http://arxiv.org/abs/1407.8202}{1407.8202}}}\BibitemShut
  {NoStop}%
\bibitem [{\citenamefont {Brady}\ and\ \citenamefont
  {Verd\'u}(1990)}]{brady90}%
  \BibitemOpen
  \bibfield  {author} {\bibinfo {author} {\bibfnamefont {David}\ \bibnamefont
  {Brady}}\ and\ \bibinfo {author} {\bibfnamefont {Sergio}\ \bibnamefont
  {Verd\'u}},\ }\bibfield  {title} {\enquote {\bibinfo {title} {The asymptotic
  capacity of the direct detection photon channel with a bandwidth
  constraint},}\ }in\ \href@noop {} {\emph {\bibinfo {booktitle} {28th Annual
  Allerton Conference on Communication, Control, and Computing}}}\ (\bibinfo
  {year} {1990})\ pp.\ \bibinfo {pages} {691--700}\BibitemShut {NoStop}%
\bibitem [{\citenamefont {Martinez}(2007)}]{martinez07}%
  \BibitemOpen
  \bibfield  {author} {\bibinfo {author} {\bibfnamefont {Alfonso}\ \bibnamefont
  {Martinez}},\ }\bibfield  {title} {\enquote {\bibinfo {title} {Spectral
  efficiency of optical direct detection},}\ }\href {\doibase
  10.1364/JOSAB.24.000739} {\bibfield  {journal} {\bibinfo  {journal} {J. Opt.
  Soc. Am. B}\ }\textbf {\bibinfo {volume} {24}},\ \bibinfo {pages} {739--749}
  (\bibinfo {year} {2007})}\BibitemShut {NoStop}%
\bibitem [{\citenamefont {Lapidoth}\ \emph {et~al.}(2011)\citenamefont
  {Lapidoth}, \citenamefont {Shapiro}, \citenamefont {Venkatesan},\ and\
  \citenamefont {Wang}}]{lapidoth11}%
  \BibitemOpen
  \bibfield  {author} {\bibinfo {author} {\bibfnamefont {Amos}\ \bibnamefont
  {Lapidoth}}, \bibinfo {author} {\bibfnamefont {Jeffrey~H.}\ \bibnamefont
  {Shapiro}}, \bibinfo {author} {\bibfnamefont {Vinodh}\ \bibnamefont
  {Venkatesan}}, \ and\ \bibinfo {author} {\bibfnamefont {Ligong}\ \bibnamefont
  {Wang}},\ }\bibfield  {title} {\enquote {\bibinfo {title} {The discrete-time
  {P}oisson channel at low input powers},}\ }\href {\doibase
  10.1109/TIT.2011.2134430} {\bibfield  {journal} {\bibinfo  {journal} {IEEE
  Transactions on Information Theory}\ }\textbf {\bibinfo {volume} {57}},\
  \bibinfo {pages} {3260--3272} (\bibinfo {year} {2011})}\BibitemShut {NoStop}%
\bibitem [{\citenamefont {Singh}\ \emph {et~al.}(2009)\citenamefont {Singh},
  \citenamefont {Dabeer},\ and\ \citenamefont {Madhow}}]{singh09}%
  \BibitemOpen
  \bibfield  {author} {\bibinfo {author} {\bibfnamefont {J.}~\bibnamefont
  {Singh}}, \bibinfo {author} {\bibfnamefont {O.}~\bibnamefont {Dabeer}}, \
  and\ \bibinfo {author} {\bibfnamefont {U.}~\bibnamefont {Madhow}},\
  }\bibfield  {title} {\enquote {\bibinfo {title} {On the limits of
  communication with low-precision analog-to-digital conversion at the
  receiver},}\ }\href {\doibase 10.1109/TCOMM.2009.12.080559} {\bibfield
  {journal} {\bibinfo  {journal} {IEEE Transactions on Communications}\
  }\textbf {\bibinfo {volume} {57}},\ \bibinfo {pages} {3629--3639} (\bibinfo
  {year} {2009})}\BibitemShut {NoStop}%
\bibitem [{\citenamefont {Koch}\ and\ \citenamefont {Lapidoth}(2013)}]{koch13}%
  \BibitemOpen
  \bibfield  {author} {\bibinfo {author} {\bibfnamefont {Tobias}\ \bibnamefont
  {Koch}}\ and\ \bibinfo {author} {\bibfnamefont {Amos}\ \bibnamefont
  {Lapidoth}},\ }\bibfield  {title} {\enquote {\bibinfo {title} {At low {SNR},
  asymmetric quantizers are better},}\ }\href {\doibase
  10.1109/TIT.2013.2262919} {\bibfield  {journal} {\bibinfo  {journal} {IEEE
  Transactions on Information Theory}\ }\textbf {\bibinfo {volume} {59}},\
  \bibinfo {pages} {5421--5445} (\bibinfo {year} {2013})}\BibitemShut {NoStop}%
\bibitem [{\citenamefont {Baes}\ and\ \citenamefont
  {B\"urgisser}(2014)}]{ref:Baes-14}%
  \BibitemOpen
  \bibfield  {author} {\bibinfo {author} {\bibfnamefont {Michel}\ \bibnamefont
  {Baes}}\ and\ \bibinfo {author} {\bibfnamefont {Michael}\ \bibnamefont
  {B\"urgisser}},\ }\bibfield  {title} {\enquote {\bibinfo {title} {An
  acceleration procedure for optimal first-order methods},}\ }\href {\doibase
  10.1080/10556788.2013.835812} {\bibfield  {journal} {\bibinfo  {journal}
  {Optimization Methods and Software}\ }\textbf {\bibinfo {volume} {29}},\
  \bibinfo {pages} {610--628} (\bibinfo {year} {2014})}\BibitemShut {NoStop}%
\bibitem [{\citenamefont {Holevo}(2012)}]{holevo_book}%
  \BibitemOpen
  \bibfield  {author} {\bibinfo {author} {\bibfnamefont {Alexander~S.}\
  \bibnamefont {Holevo}},\ }\href@noop {} {\emph {\bibinfo {title} {Quantum
  Systems, Channels, Information}}}\ (\bibinfo  {publisher} {De Gruyter Studies
  in Mathematical Physics 16},\ \bibinfo {year} {2012})\BibitemShut {NoStop}%
\bibitem [{\citenamefont {Boyd}\ and\ \citenamefont
  {Vandenberghe}(2004)}]{ref:BoyVan-04}%
  \BibitemOpen
  \bibfield  {author} {\bibinfo {author} {\bibfnamefont {Stephen}\ \bibnamefont
  {Boyd}}\ and\ \bibinfo {author} {\bibfnamefont {Lieven}\ \bibnamefont
  {Vandenberghe}},\ }\href@noop {} {\emph {\bibinfo {title} {Convex
  {O}ptimization}}}\ (\bibinfo  {publisher} {Cambridge University Press},\
  \bibinfo {address} {Cambridge},\ \bibinfo {year} {2004})\ pp.\ \bibinfo
  {pages} {xiv+716},\ \bibinfo {note} {sixth printing with corrections,
  2008}\BibitemShut {NoStop}%
\bibitem [{\citenamefont {Folland}(1999)}]{ref:Folland-99}%
  \BibitemOpen
  \bibfield  {author} {\bibinfo {author} {\bibfnamefont {Gerald~B.}\
  \bibnamefont {Folland}},\ }\href@noop {} {\emph {\bibinfo {title} {Real
  analysis: modern techniques and their applications}}},\ Pure and applied
  mathematics\ (\bibinfo  {publisher} {Wiley},\ \bibinfo {year}
  {1999})\BibitemShut {NoStop}%
\bibitem [{\citenamefont {Wu}\ and\ \citenamefont
  {Verd\'u}(2012)}]{ref:verdu-12}%
  \BibitemOpen
  \bibfield  {author} {\bibinfo {author} {\bibfnamefont {Yihong}\ \bibnamefont
  {Wu}}\ and\ \bibinfo {author} {\bibfnamefont {Sergio}\ \bibnamefont
  {Verd\'u}},\ }\bibfield  {title} {\enquote {\bibinfo {title} {Functional
  properties of minimum mean-square error and mutual information},}\ }\href
  {\doibase 10.1109/TIT.2011.2174959} {\bibfield  {journal} {\bibinfo
  {journal} {IEEE Transactions on Information Theory}\ }\textbf {\bibinfo
  {volume} {58}},\ \bibinfo {pages} {1289--1301} (\bibinfo {year}
  {2012})}\BibitemShut {NoStop}%
\bibitem [{\citenamefont {Billingsley}(1968)}]{ref:billingsley-68}%
  \BibitemOpen
  \bibfield  {author} {\bibinfo {author} {\bibfnamefont {Patrick}\ \bibnamefont
  {Billingsley}},\ }\href@noop {} {\emph {\bibinfo {title} {Convergence of
  probability measures}}},\ Wiley Series in probability and Mathematical
  Statistics: Tracts on probability and statistics\ (\bibinfo  {publisher}
  {Wiley},\ \bibinfo {year} {1968})\BibitemShut {NoStop}%
\bibitem [{\citenamefont {Devolder}\ \emph {et~al.}(2012)\citenamefont
  {Devolder}, \citenamefont {Glineur},\ and\ \citenamefont
  {Nesterov}}]{ref:devolder-12}%
  \BibitemOpen
  \bibfield  {author} {\bibinfo {author} {\bibfnamefont {Olivier}\ \bibnamefont
  {Devolder}}, \bibinfo {author} {\bibfnamefont {Fran\c{c}ois}\ \bibnamefont
  {Glineur}}, \ and\ \bibinfo {author} {\bibfnamefont {Yurii}\ \bibnamefont
  {Nesterov}},\ }\bibfield  {title} {\enquote {\bibinfo {title} {Double
  smoothing technique for large-scale linearly constrained convex
  optimization},}\ }\href {\doibase 10.1137/110826102} {\bibfield  {journal}
  {\bibinfo  {journal} {SIAM Journal on Optimization}\ }\textbf {\bibinfo
  {volume} {22}},\ \bibinfo {pages} {702--727} (\bibinfo {year}
  {2012})}\BibitemShut {NoStop}%
\end{thebibliography}%

%%%%%%%%%%%%%%%%%
%%%%%%%%%%%%%%%%%
%%%%%%%%%%%%%%%%%
%%%%%%%%%%%%%%%%%

\clearpage

%%%%%%%%%%%%%%%%%
%%%%%%%%%%%%%%%%%
%%%%%%%%%%%%%%%%%

%\widetext

\end{document}